\newtheorem{theorem}{Theorem}[section]
\newtheorem{corollary}[theorem]{Corollary}
\newtheorem{lemma}[theorem]{Lemma}
\newtheorem{proposition}[theorem]{Proposition}
\newtheorem{claim}[theorem]{Claim}
\newtheorem{definition}[theorem]{Definition}
\newtheorem{observation}[theorem]{Observation}
\def\squarebox#1{\hbox to #1{\hfill\vbox to #1{\vfill}}}
\newcommand{\qed}{\hspace*{\fill}\vbox{\hrule\hbox{\vrule\squarebox{.667em}\vrule}\hrule}\smallskip}
\newenvironment{proof}{\noindent{\bf Proof:~~}}{\(\qed\)}
\newcommand{\can}{q} % candidate's strength
\newcommand{\ncan}{r} % candidate's "normalized strength \ncan = \min {1/2,\can}
\newcommand{\dep}{x} % department's strength
\newcommand{\eps}{\epsilon}
 \def\stratpp{{\langle +,+ \rangle}}
 \def\stratpm{{\langle +,- \rangle}}
 \def\stratmp{{\langle -,+ \rangle}}
\newcommand{\xhdr}[1]{\paragraph{\textbf #1}}
\newcommand{\omt}[1]{}
\newcommand{\Xomit}[1]{}
\newcommand{\supproof}[1]{}
\newcommand{\supproofeq}[1]{}
\newcommand{\p}[3]{ % #1 player index , %2 player 1 strength, #3 player 2 strength
    \IfEqCase{#1}{%
        {1}{\frac{#2}{#2+#3}} %
        {2}{\frac{#3}{#2+#3}} %
        }
        }
\newcommand{\btc}{}
\newcommand{\atc}{}
\newcommand{\onec}[1]{#1~}
\newcommand{\twoc}[1]{}
\begin{document}
\title{Dynamic Models of Reputation and Competition in Job-Market 
  Matching}
\author{  
Jon Kleinberg
\thanks{
Cornell University, Ithaca NY.
Email: kleinber@cs.cornell.edu.
%Supported in part by
%a Simons Investigator Award,
%a Google Research Grant,
%an ARO MURI grant, 
%and NSF grants
%IIS-0910664, %HCC
%CCF-0910940, %AF 
%and 
%IIS-1016099. %Jure
}
 \and 
Sigal Oren
\thanks{
Microsoft Research and Hebrew University, Israel.
Email: sigalo@cs.huji.ac.il.
%Supported in part by NSF grant CCF-0910940 and a Microsoft Research Fellowship.
}
}
\date{}

\begin{titlepage}
\maketitle

\begin{abstract}
%!TEX root =recruiting1c.tex
A fundamental decision faced by a firm hiring employees --- and a
familiar one to anyone who has dealt with the academic job market, for
example --- is deciding what caliber of candidates to pursue. Should
the firm try to increase its reputation by making offers to
higher-quality candidates, despite the risk that the candidates might
reject the offers and leave the firm empty-handed? Or is it better to
play it safe and go for weaker candidates who are more likely to
accept the offer? The question acquires an added level of complexity
once we take into account the effect one hiring cycle has on the next:
hiring better employees in the current cycle increases the firm's
reputation, which in turn increases its attractiveness for
higher-quality candidates in the next hiring cycle. These
considerations introduce an interesting temporal dynamic aspect to the
rich line of research on matching models for job markets, in which
long-range planning and evolving reputational effects enter into the
strategic decisions made by competing firms.

The full set of ingredients in such recruiting decisions is complex,
and this has made it difficult
to model the fundamental strategic tension
at the core of the problem.  Here we develop a model based on two
competing firms to try capturing as cleanly as possible the
elements that we believe constitute this strategic tension:
the trade-off between short-term recruiting success and long-range 
reputation-building; the inefficiency that results from 
underemployment of people who are not ranked highest; and
the influence of earlier accidental outcomes on long-term reputations.

Our model exhibits all these phenomena in a stylized setting,
governed by a parameter $q$ that captures the difference in strength
between the top candidate in each hiring cycle and the next best.
Building on an economic model of competition between parties of
unequal strength, we show that when $q$ is relatively low, the
efficiency of the job market is improved by long-range reputational
effects, but when $q$ is relatively high, taking future reputations
into account can sometimes reduce the efficiency. While this trade-off
arises naturally in the model, the multi-period nature of the
strategic reasoning it induces adds new sources of complexity, and
our analysis reveals interesting connections between competition with
evolving reputations and the dynamics of urn processes.

\end{abstract}

\thispagestyle{empty}
\end{titlepage}

%Jon Kleinberg \titlenote{ Supported in part by 
%the John D. and Catherine T. MacArthur Foundation,
%a Google Research Grant,
%a Yahoo!~Research Alliance Grant,
%and NSF grants 
%IIS-0705774, % wya
%IIS-0910664, %HCC
%and CCF-0910940. %AF
%}\\
%       \affaddr{Department of Computer Science}\\
%       \affaddr{Cornell University, Ithaca NY 14853}\\
%       \email{kleinber@cs.cornell.edu}
%% 2nd. author
%\alignauthor
%Sigal Oren \titlenote{Supported in part by 
%NSF grant CCF-0910940 and a Microsoft Research Fellowship.}\\
%       \affaddr{Department of Computer Science}\\
%       \affaddr{Cornell University, Ithaca NY 14853}\\
%       \email{sigal@cs.cornell.edu}
%       }

%!TEX root =recruiting1c.tex
\section{Introduction}
\label{sec:intro}

Markets for employment have been the subject of several large bodies
of research, including the long and celebrated line of work on 
bipartite matching of employers to job applicants \cite{roth-matching},
sociological and economic approaches to the process of finding a job
\cite{granovetter-getting-a-job,
mortensen-search-equilibrium,rogerson-job-search},
and many other frameworks.
Recent work in theoretical computer science has modeled issues
such as the competition among employers for applicants
\cite{immorlica-dueling-algs,immorlica-competing-secretary}
and hiring policies that
take a firm's reputation into account \cite{broder-lake-wobegon}.

Despite this history of research, there remain a number of fundamental
issues in job-market matching that have gone largely unmodeled.
One of these, familiar to anyone who has dealt with job markets
in academia or related professions, is the feedback loop over
multiple hiring cycles between
the job candidates that a firm (or academic department) pursues 
and the evolution of its overall reputation.

There is of course a very broad set of ingredients that go into 
the competition for job candidates over multiple hiring cycles.
This makes it challenging to abstract the basic issues
into a model for this type of multi-period competition.
In the present paper we pare down this complexity to try
formulating a model that captures, as cleanly as possible,
what we view to be the basic sources of strategic tension in the process.

We develop a model based on two firms
that compete for candidates over multiple periods, with a pool of candidates
that has the same structure in each period; the outcome of
the competition for a given candidate is determined probabilistically,
based on the relative reputations of the two firms at the time they compete.
This is a highly reduced and stylized model, but it produces
a complex set of behaviors that we believe should be components
of any of a range of richer extensions of the model as well.
In particular, the process of job-market competition in our model
exhibits the following fundamental trade-offs:
\begin{itemize}
\item[(i)] Successfully recruiting higher-quality candidates
can raise a firm's reputation, which in turn can make it more attractive
to candidates in future hiring cycles.
\item[(ii)] On the other hand,
competing for these higher-quality candidates comes with a greater risk of
emerging from a given hiring cycle empty-handed.
\item[(iii)] The incentive to compete for top-ranked candidates 
can lead to underemployment of lower-ranked candidates,
as they are at risk of receiving no offers while firms instead compete
for their higher-ranked counterparts.
\item[(iv)] The trajectory of the process can be heavily influenced
by a small number of ``accidental'' recruiting outcomes in the early stages,
as reputations are first being established.
\end{itemize}

The trade-off between (i) and (ii) above arises
from the equilibrium in the dynamic matching game played by the two firms
with respect to the pool of available candidates.
We find that in equilibrium, there is an initial period of competition,
which can end when one firm decides it is so far behind the other that
it is no longer worth competing for the top-ranked candidates.
For certain natural ranges of parameters, there is in fact an 
interesting bifurcation --- depending on the random outcomes of the
initial stages, there is a positive probability one firm will ``give up''
on competing for the best candidates, but also a positive probability
that the two firms will compete for the best candidates forever.

The issue in point (iii) is a question of efficiency: 
if a firm's utility is the total quality of all the candidates it hires,
then our measure of social welfare --- the sum of the firms' utilities ---
is simply the total quality of all candidates hired by either of them.
We can then consider the natural performance
guarantee question in this model: how does the social welfare
under multi-period strategic behavior compare to the maximum 
social welfare attainable, where the maximum corresponds to
a central authority that is able 
to impose a matching of candidates to firms?
We obtain a tight bound of 
$\dfrac{2}{1 + \sqrt{1.5}} \approx 0.898$ on the ratio of 
the social welfare under the canonical Nash equilibrium
to the optimal social welfare 
in this model, as the number of periods goes to infinity.
The exact numerical bound here will of course be a property of our modeling 
decisions, but the trade-off that leads to it seems inherent 
in the structure of the multi-period competition.
Moreover, studying this {\em performance ratio}\footnote{We use the neutral
term ``performance ratio'' rather than {\em price of anarchy}
or {\em price of stability} because --- as we will see ---
our game has a natural equilibrium,
and we are interested in the 
relative performance of this natural equilibrium, rather
than necessarily focusing on the best or worst equilibrium.
} 
as a function of the number
of periods, we find that for some settings of the parameters, the
performance ratio is worse for instances with a ``medium'' number
of periods, rather than those with very few (where long-range
planning does not have enough force to favor competition over star
candidates) or those with very many
(where the weaker firm is likely to stop competing, leading to
a higher level of overall employment).

Finally, the issue in point (iv) --- the ``accidental'' effects
of early competition outcomes --- turns out to be analyzable in 
our model via a concrete connection to Polya urn processes 
\cite{pemantle-polya-urn-survey}.
We show how the evolution of the two firms' reputations can be
tracked through an analysis that is closely related to the 
evolving composition of a Polya urn;
however, the analysis is made more complicated by the fact that
the steps in the process are under the control of strategic agents
who are calculating their actions inductively with respect to
the expected outcomes in future periods.
% and the reputations stabilize over time in much the way that
% the composition of the urn does.

This particular combination of phenomena (i)-(iv) appears to be new
from an analytical perspective using formal models, 
despite its familiarity from
everyday experience, and its connections with strands
of more empirical and ethnographic work in economics 
\cite{turban-firm-reputation} and sociology
\cite{coleman-labor-matching,rivera-employer-reputation}.
We thus view the reduced-form
model developed here as correspondingly shedding light on
the interplay between the inherent strategic and probabilistic considerations
as the process unfolds --- including the emergence from the model
of qualitative principles such as the transition between long-running
competition and a decision by one firm to ``give up'' and accept a
lower rank.
Moreover, the use of a model with two firms is consistent with
the long-standing style of analysis in terms of duopolies for
multi-period game-theoretic models (see e.g. 
\cite{beggs-klemperer,benoit-dynamic-duopoly,jun-dynamic-duopoly,padilla-dynamic-duopoly});
two-firm competition is often
the initial place where one looks for principles in establishing 
such models.

As a last point, 
we note that while our model is expressed in terms of job-market
recruiting, there are many settings in which firms compete over
multiple time periods, making decisions that have effects on 
their reputations and hence their relative performance in the future.
As such, the type of probabilistic analysis we carry out here for the 
underlying dynamic multi-period process may be useful in thinking about
the strategic management of evolving reputations more generally ---
thought still of course in the highly reduced form in which we
have expressed it.
For example, it would be interesting to see whether our framework can
be adapted to settings where related issues have been explored,
including the study of product compatibility \cite{chen-compatibility}.
The issue of whether a weaker firm decides to directly compete
with a stronger one, or to avoid direct competition, is also implicit
in studies of the branding and advertising decisions firms make --- 
including whether to explicitly acknowledge a second-place status,
such as the example (discussed in \cite{mullainathan-coarse-thinking})
of the Avis car rental company's ``We Try Harder'' campaign.

\xhdr{Formulating the model}

We now describe the model and its underlying parameters in more detail.
Again, we stress that our model is designed to produce the essential
phenomena in this multi-period competition as cleanly as possible,
and hence is built on two firms that compete in a repeating structure
over multiple periods.
At the same time, the model is set up in such a way that it
can be directly generalized to include larger numbers of firms
and more variability between different time periods.
We discuss some possible extensions briefly in the conclusions
section (Section \ref{sec:conclusions}).

We set up the model as a game with two players over $k$ rounds.
We can think of each player as representing an academic department
that is able to try hiring one new faculty candidate in each of the next
$k$ hiring seasons. 
In each round $t \in \{1, 2, \ldots, k\}$, 
the players are presented with a set of job candidates
with fixed numerical {\em qualities}.
Since we have only two firms in our model, we will assume that the firms'
hiring will only involve considering the two strongest candidates; 
we therefore assume that there are only two candidates available.
Normalizing the quality of the stronger candidate, we define
the qualities of the two candidates to be $1$ and $q < 1$ respectively.

We want to be able to talk separately about a department's 
{\em utility} --- the total quality of all candidates it has
hired --- and its {\em reputation} --- its ability to attract
new candidates based on the quality of the people it has hired.
A number of studies of academic rankings have emphasized that
departments are judged in large part by their strongest members;
intuitively, this is why a smaller department with several ``star''
members can easily rank higher than a much larger department,
and ranking schemas often include measures that focus on this distinction.

Given this, a natural way to define reputation in our model
is to say that the reputation of firm $i$ in round $t$, denoted $x_i(t)$,
is equal to the number of higher-quality candidates (i.e. those
of quality $1$ rather than $q$) that it has hired so far.
This is distinct from the utility of firm $i$ in round $t$, denoted $u_i(t)$,
which is simply the sum of the qualities of all the candidates it has hired.

We assume that a firm is seeking to maximize its utility over
the full $k$ rounds;
{\em however}, note that since this is a multi-period game, 
and reputation determines success in future rounds of hiring,
a firm's equilibrium strategy will in fact involve actions that
are effectively seeking to increase reputation even at the 
expense of short-term sacrifices to expected utility.
This, indeed, is exactly the type of behavior we hope to see
in a model of recruiting.

Building on this discussion, we therefore structure
the game as follows.
\begin{itemize}
\item
Each player $i$ has a numerical {\em reputation} $x_i(t)$
and {\em utility} $u_i(t)$ in round $t$.
We will focus mainly on the case in which the two players each start with
reputation equal to $1$, 
though in places we will consider variations on this initial condition.
\item
In each round $t \in \{1, 2, \ldots, k\}$,
player $i$ chooses one of the candidates $j$ to try recruiting;
this choice of $j$ constitutes the player's {\em strategy} in round $t$.
\item 
{If player $i$ is the only one to try recruiting $j$, then $j$ accepts the offer.
If both players compete for the same candidate $j$, then $j$ accepts player $i$'s offer
with probability proportional to player $i$'s reputation. This follows} the {\em Tullock contest function} that is standard
in economic theory for modeling competition
\cite{skaperdas-contest-success,tullock-contest}, thus we have :
player 1 hires $j$ with probability \\$\dfrac{x_1(t)}{x_1(t) + x_2(t)}$ and
player 2 hires $j$ with probability $\dfrac{x_2(t)}{x_1(t) + x_2(t)}$.
The player who loses this competition for candidate $j$ hires
no one in this round.
%If player $i$ is the only one to try recruiting $j$, then $j$ gets
%hired by $i$.
%If both players compete for the same candidate $j$, then player $i$
%succeeds with probability proportional to its reputation,
%following the {\em Tullock contest function} that is standard
%in economic theory for modeling competition
%\cite{skaperdas-contest-success,tullock-contest}:
%player 1 hires $j$ with probability $\dfrac{x_1(t)}{x_1(t) + x_2(t)}$ and
%player 2 hires $j$ with probability $\dfrac{x_2(t)}{x_1(t) + x_2(t)}$.
%The player who loses this competition for candidate $j$ hires
%no one in this round.
\item 
Finally, each player receives a payoff in round $t$ equal to the
quality of the candidate hired in the round (if any).
The player's utility is increased by the quality of the candidate
it has hired;
the player's reputation is increased by $1$ 
if it has hired the stronger candidate in round $t$,
and remains the same otherwise.
\end{itemize}

Thus the model captures the basic trade-off inherent in recruiting
over multiple rounds --- by competing for a stronger candidate, a player
has the opportunity to increase its reputation by a larger amount,
but it also risks hiring no one.
The model is designed to arrive at
this trade-off using very few underlying parameters;
but we believe that the techniques developed for the analysis
suggest approaches to more complex variants, and we discuss
some of these in the conclusions section (Section \ref{sec:conclusions}).

% (For example, for simplicity we are assuming that if a player
% loses a competition for a candidate in a given
% round, they hire no one in that round;
% it would also be reasonable to consider a more complex model in 
% which a player losing the competition for one candidate could
% go to a back-up option in the same round, 
% though with some probability of failure due to lost time.)

The maximum possible social welfare is achieved if the two players
hire the top two candidates respectively in each round, achieving
a social welfare of $k (1 + q)$.
The key question we consider here is what social welfare can be
achieved in equilibrium for this $k$-round game, and how it compares
to the welfare of the social optimum.
In effect, how much does the struggle for reputation leave
candidates unemployed?

The subgame perfect equilibria in this multi-round game are determined
by backward induction --- essentially, in a given round $t$, a player
evaluates the possible values its utility and
reputation can take in round $t+1$,
after the (potentially probabilistic) outcome of its recruiting in
round $t$.  There are multiple equilibria, but there is a single natural 
class of {\em canonical equilibria} for the model, in which the 
higher-reputation player always goes after the stronger candidate, and ---
predicated on the equilibrium having this form in future rounds ---
the lower-reputation player makes an optimal decision to either compete for
the stronger candidate or make an offer to the weaker candidate.
(When the lower-reputation player is indifferent between these two options,
we break the symmetry using 
the assumption that the lower player hires the weaker candidate.)
The canonical equilibrium can be also viewed as the result of a best response order in which at every round the higher-reputation gets the advantage of making the first choice.
Proving that this structure in fact produces an equilibrium is
non-trivial; in part this is because
reasoning about subgame perfect equilibria always involves
some complexity due to the underlying tree of possibilities,
but the present model adds to this complexity because the randomization
involved in the outcome causes the possible trajectories of the game
to ``explore'' most of this tree.
% \socomment{can we add some sentence saying that proving that this "canonical equilibrium" is in fact an equilibrium is not trivial?}

We study the behavior of this canonical equilibrium, and we define
the {\em performance ratio} of an instance to be the ratio of
total welfare between the canonical 
equilibrium and the social optimum.
%\footnote{We use the neutral
%term ``performance ratio'' rather than {\em price of anarchy}
%or {\em price of stability} because as noted above we are interested in the
%relative performance of the game's natural equilibrium, rather
%than necessarily focusing on its best or worst equilibrium.
%} 

\xhdr{Overview of Results}

We first consider the performance ratio as a function of the number
of rounds $k$.
As an initial question, 
which choice of $k$ yields the worst performance ratio?
When $q < \frac12$, the answer is simple: for $k = 1$,
the players necessarily compete in the one round they have
available, and this yields a performance ratio of $1 / (1 + q)$ ---
as small as possible.
When $q > \frac12$, however, the situation becomes more subtle.
For $k = 1$, the players do not compete in the canonical 
equilibrium, and so the performance ratio for $k = 1$ is $1$.
At the other end of the spectrum, 
when $q \geq \frac12$, the two players will eventually stop
competing with probability $1$ 
and the performance ratio converges up to $1$ when $k$ becomes large.
But in between, the performance ratio can be larger than at
both extremes; in particular, when the quantity
$\dfrac{q}{1-q}$ approaches an integer value $k$ from below,
we show that the performance ratio
is maximized when the number of rounds takes this intermediate value $k$.
% $\left\lceil \dfrac{q}{1-q} \right\rceil$.
These results show how the
time scale over which the players take reputational effects into
account can have a subtle (and in this case non-monotonic) effect
on the efficiency of the job market.

We then turn to the main result of the paper, which is to 
analyze the performance ratio in the limit as the
number of rounds $k$ goes to infinity.
%We then turn to understanding  
%the performance ratio in the limit as the
%number of rounds $k$ goes to infinity.
When $q \geq \frac12$, as just noted, we show that the two players will eventually stop
competing with probability $1$ 
and the performance ratio converges to $1$.
But when $q < \frac12$, something more complex happens:
there is a positive probability, strictly between $0$ and $1$,
that the players compete forever.
This has a natural interpretation --- as reputations evolve,
the two players can settle into relative levels of reputation
under which it is worthwhile for the lower player to compete
for the stronger candidate;
but it may also happen that after a finite number of rounds, one
player decides that it is too weak to continue competing for
the stronger candidate, and it begins to act on its second-tier status.
What is interesting is that each of these outcomes has a
positive probability of occurring.

The possibility of indefinite competition leads to a non-trivial
performance ratio; we show that the worst case occurs when
$q = \sqrt{1.5} - 1 \approx .2247$, with a performance ratio of
$\dfrac{2}{1 + \sqrt{1.5}} \approx 0.898$.
We also show that the performance ratio converges to $1$
as $q$ goes either to $0$ or to $1$.
Our analysis proceeds by defining an urn 
process that tracks the evolution of the players' reputations;
this is a natural connection to develop, since urn processes
are based on models in which probabilities of outcomes in a given
step --- the result of draws from an urn --- are affected by the realized
outcomes of draws in earlier steps.  We provide more background about urn
processes in the next section.
Informally speaking, the fact that a player might compete for
a while and then permanently give up in favor of an alternative option
is also reminiscent of strategies in the multi-armed bandit problem,
where an agent may experiment with a risky option for a while
before permanently giving up and using a safer option;
later in the paper, we make this analogy more precise as well.
To make use of these connections, we study a sequence of games that
begins with players who are constrained to follow a set sequence
of decisions for a long prefix of rounds, and we then successively
relax this constraint until we end up with the original game
in which players are allowed to make strategic decisions from
the very beginning.

In Appendix \ref{sec:fixedp},
we also consider variants of the model in which
one changes the function used for the success probabilities
in the competition between the two players for a candidate.
Note that the way in which competition is handled is an implicit
reflection of the way candidates form preferences over firms based
on their reputations, and hence varying this aspect of the model
allows us to explore different ways in which candidates can behave
in this dimension.
In particular, we consider a variation on the model in which --- when
the two players compete for a candidate --- the lower-reputation 
player succeeds with a fixed probability
$p < \frac12$ and the 
higher-reputation player succeeds with probability $1 - p$.
This model thus captures the long-range competition to become the 
higher-reputation
player using an extremely simple model of competition within each round.
The main result here is that for all $p < q$, the performance 
ratio converges to $1$; the analysis makes use of biased random walks
in place of urn processes to analyze the long-term competition between
the players.

\xhdr{Further Related Work}

As noted above, there has been recent theoretical work studying the
effect of reputation and competition in job markets.
Broder et al. consider hiring strategies designed to increase
the average quality of a firm's employees \cite{broder-lake-wobegon}.
Our focus here is different, due to the feedback effects from
future rounds that our model of competition generates:
a few weak initial hires can make it very difficult for a player
to raise its quality later, while a few strong initial hires can
make the process correspondingly much easier.
Immorlica et al. consider competition between employers, though in 
a quite different model where candidates are presented one at
a time as in the {\em secretary problem} 
\cite{immorlica-dueling-algs,immorlica-competing-secretary},
and each player's goal is to hire a candidate that is stronger
than the competitor's.
They do not incorporate
the spillover of this competition into future rounds.

Our work can also be viewed as developing techniques for
analyzing the performance ratio and/or price of anarchy in
settings that involve dynamic matchings ---
when nodes on one side of a bipartite graph must make 
strategic decisions about matchings to nodes that arrive
dynamically to the other side of the graph.
In the context of job matching, 
Shimer and Smith consider a dynamic matching model of a labor
market in which the central constraint is the cost of
searching for potential partners \cite{shimer-job-search}.
Haeringer and Wooders apply dynamic matching to the problem
of sequential job offers over time \cite{haeringer-job-matching},
but in a setting that considers the sequencing of offers 
in a single hiring cycle;
this leads to different questions, since the consequence
for reputation in future hiring cycles is not in the scope of 
their investigation.
Dynamic matchings have also been appearing in a number of other recent
application contexts 
(e.g. \cite{dickerson-dyn-match-kidneys,zou-dyn-match-manip}),
and there are clearly many unresolved questions here about
the cost of strategic behavior.

\omt{
Finally, our model can be abstracted beyond the setting of
job-market matching to capture essentially any context in which
two firms must decide over multiple rounds whether to compete
or to make use of a private outside option.
There are many domains that exhibit this general structure, and
it would be interesting to see whether our techniques
can be adapted to some of these other situations.
For example, this issue has been explored --- by different 
means --- in the context of product compatibility \cite{chen-compatibility}.
The issue of whether a weaker competitor decides to directly compete
or give up in favor of an alternative option is also implicit
in studies of the branding and advertising decisions firms make --- 
including whether to explicitly acknowledge a second-place status,
as for example the Avis car rental company
did in its ``We Try Harder'' campaign \cite{mullainathan-coarse-thinking}.

force (ii) takes over and ensures a high level of employment).
The analysis develops interesting connections between
multi-step strategic interaction with competition and Polya urn processes
\cite{pemantle-polya-urn-survey}.

In this paper, we develop a very

Modeling the full range of ingredients that go into the competition
for job candidates over multiple hiring cycles is a messy prospect,

In particular, there is a basic trade-off at work:
successfully recruiting higher-quality candidates
can raise a firm's reputation, which in turn can make it more attractive
to candidates in future hiring cycles; on the other hand, 
competing for these higher-quality candidates comes with a greater risk of
emerging from a given hiring cycle empty-handed.

Here we formulate and study a strategic model that captures these
issues; the firms in our model make recruiting
decisions in a way that takes into account the probabilistic effect
of these decisions on their reputations and hence their 
effectiveness at recruiting in future periods.
We ask what effect these types of long-range strategies have,
in the model, on outcomes for job candidates --- do more or fewer
people get employed when firms make use of this long-range reasoning?
As the model demonstrates, there are two natural opposing forces at
work here:
\begin{itemize}
\item[(i)] A firm's desire to increase its reputation for the sake
of future periods may cause it to compete for a stronger candidate and lose,
when it could instead have hired a weaker candidate who now goes unemployed.
\item[(ii)] As one firm's reputation evolves, it may choose to stop
competing with another firm of higher reputation, leading to 
implicit coordination that results in job offers to a larger set of people.
\end{itemize}
A firm's utility is the total quality of all the candidates it hires,
and so our measure of social welfare --- the sum of the firms' utilities ---
is simply the total quality of all candidates hired by any of them.
% We take the expected 
% total number of employed job candidates, weighted by their quality,
% as our measure of social welfare.
% (We argue that this is more natural than an unweighted count of the number
% of employed candidates, since firms will never want to compete
% for candidates of extremely low quality, and so it seems
% less natural to count them equally against the social welfare.)
We consider the natural performance
guarantee question in this model: how does the social welfare
under multi-period strategic behavior compare to the maximum 
social welfare attainable, where the maximum corresponds to
a central authority that is able 
to impose a matching of candidates to firms?
Essentially, this is a measure of how much talent the system is 
collectively able to employ, when it is governed by competition.

Our model, based on competition between two firms, is simple to state
but leads to complex phenomena based on the trade-off between forces
(i) and (ii) above.  We obtain a tight bound of 
$\dfrac{2}{1 + \sqrt{1.5}} \approx 0.898$ on the ratio of 
the social welfare under the canonical Nash equilibrium
to the optimal social welfare 
in this model, as the number of periods goes to infinity.
Studying this {\em performance ratio}\footnote{We use the neutral
term ``performance ratio'' rather than {\em price of anarchy}
or {\em price of stability} because --- as we will see ---
our game has a natural equilibrium,
and we are interested in the 
relative performance of this natural equilibrium, rather
than necessarily focusing on the best or worst equilibrium.
} 
as a function of the number
of periods, we find that for some settings of the parameters, the
performance ratio is worse for instances with a ``medium'' number
of periods, rather than those with very few (where force (i) does not
have enough time to generate unemployment) or those with very many
(where force (ii) takes over and ensures a high level of employment).
The analysis develops interesting connections between
multi-step strategic interaction with competition and Polya urn processes
\cite{pemantle-polya-urn-survey}.

The subtleties discussed above emerge already in a very simple
model of multi-period competition;
we therefore focus on a highly reduced formulation that
captures these issues yet permits a tight analysis.
In particular, we study these effects in the case of just two players
competing in a job-market context,
with a pool of candidate employees 
that has the same structure in each time period.
Our model can clearly be extended in ways that add
complexity in a number of dimensions, and this suggests
natural directions for further work on multi-period matching
games with this structure.

}

%!TEX root =recruiting1c.tex
%\section{Equilibrium Properties} 
\section{The Canonical Equilibrium and its Properties} 
\label{sec:prelim}
An instance of the recruiting game, as described in the introduction,
is defined by the initial reputations $\dep_1$ and $\dep_2$ of the 
two players; the relative quality $\can$ of the weaker
candidate compared to the stronger one; and the number of rounds $k$.
Accordingly, we denote an instance of the game by 
$G_{k,\can}(\dep_1,\dep_2)$. 
Generally $\can$ will be clear from context, and so we will 
also refer to this game as simply $G_{k}(\dep_1,\dep_2)$. 
We will refer to the player of higher reputation as the {\em higher player},
and the player of lower reputation as the {\em lower player}. In case the players
have the same reputation we will refer to player $1$
as the higher player.

The game as defined
is an extensive-form game, and as such it can admit many subgame
perfect equilibria. For example, it is easy to construct a single-round
game in which it is an equilibrium for the lower player to try
to recruit the stronger candidate and for the higher player to go
after the weaker candidate. This equilibrium clearly has a less natural
structure than one in which the higher player goes after
the stronger option; to avoid such pathologies, as noted in the introduction,
we will study multi-round
strategies $s_k(\dep_1,\dep_2)$ that are defined as follows:
%, the canonical
%equilibrium we study is the one in which the higher player
%always makes an offer to the stronger candidate. More formally 

\begin{definition}
Denote by $s_k(\dep_1,\dep_2)$ the following strategies
for the players over the $k$ rounds:
in every round the higher player goes for the stronger candidate and the lower player best-responds by choosing the candidate that maximizes its utility,
taking into account the current round and all later rounds by induction.

For $s_k(\dep_1,\dep_2)$ to be well-defined we make the 
following two assumptions: 
(1) If the lower player is indifferent between going for the stronger candidate and the weaker candidate we assume it chooses to go for the weaker candidate. 
(2) If the two players have the same reputations we break ties in favor of player $1$. 
\end{definition}

The strategies $s_k(\dep_1,\dep_2)$ can be summarized essentially
by saying that in every round of the game, 
first the higher player gets to make an offer to its preferred candidate,
and given this decision 
the lower player makes the choice maximizing its utility. 
To show that the strategies $s_k(\dep_1,\dep_2)$ form a sub-game perfect equilibrium we will show inductively that in every round it is in the higher player's best interest to make an offer to the stronger candidate. 
More formally we denote the strategy of making an offer to 
the stronger candidate in some round by $+$ and 
to the weaker candidate by $-$. We define 
$f(s_k(\dep_1,\dep_2))$ to be the pair of 
strategies that the players use in the first round of $s_k(\dep_1,\dep_2)$. 

We denote player $i$'s utility when the two players play the strategies prescribed by $s_k(\dep_1,\dep_2)$ by 
$u_i(s_k(\dep_1,\dep_2))$. 
We now formally write down the utility of the players 
in $s_k(\dep_1,\dep_2)$ based on the value of $f(s_k(\dep_1,\dep_2))$:
% \socomment{Is it useful to have it here to explain how the players utilities work? or is it confusing?}
% jk: I think it's useful.
\begin{itemize}
\item If $f(s_k(\dep_1,\dep_2)) = \stratpp$ then
\begin{align*}
u_1(s_k(\dep_1,\dep_2)) &= \p{1}{\dep_1}{\dep_2}(1+u_1(s_{k-1}(\dep_1+1,\dep_2))) + \p{2}{\dep_1}{\dep_2}u_1(s_{k-1}(\dep_1,\dep_2+1)) \\ 
u_2(s_k(\dep_1,\dep_2)) &= \p{1}{\dep_1}{\dep_2}u_2(s_{k-1}(\dep_1+1,\dep_2)) +\p{2}{\dep_1}{\dep_2}(1+u_2(s_{k-1}(\dep_1,\dep_2+1))).
\end{align*}
\item If $f(s_k(\dep_1,\dep_2)) = \stratpm$ then
\begin{align*}
u_1(s_k(\dep_1,\dep_2)) &= 1+u_1(s_{k-1}(\dep_1+1,\dep_2)) \\ 
u_2(s_k(\dep_1,\dep_2)) &= \can+u_2(s_{k-1}(\dep_1+1,\dep_2))
\end{align*}
\item If $f(s_k(\dep_1,\dep_2)) = \stratmp$ then
\begin{align*}
u_1(s_k(\dep_1,\dep_2) &= \can+u_1(s_{k-1}(\dep_1,\dep_2+1)) \\ 
u_2(s_k(\dep_1,\dep_2)) &= 1+u_2(s_{k-1}(\dep_1,\dep_2+1))
\end{align*}
\end{itemize}

We denote the social welfare of playing the strategies $s_k(\dep_1,\dep_2)$ by
$$ u(s_k(\dep_1,\dep_2)) =   u_1(s_k(\dep_1,\dep_2))+ u_2(s_k(\dep_1,\dep_2)).$$

Even though it is natural to suspect that the strategies $s_k(\dep_1,\dep_2)$ are indeed a sub-game perfect equilibrium, proving that this is the case is not such a simple task. The first step in showing that the strategies $s_k(\dep_1,\dep_2)$ are a sub-game perfect equilibrium, and a useful fact by itself,
is the monotonicity of the players' utilities $u_i(s_k(\dep_1,\dep_2))$. More formally, 
in Section \ref{sec:app:eq} of the appendix we show that:

\begin{claim} \label{clm:mono}
For any $\dep_1$, $\dep_2$, and $\eps>0$: 
\begin{enumerate}
\item  $u_1(s_k(\dep_1+\eps,\dep_2))\geq u_1(s_k(\dep_1,\dep_2))\geq u_1(s_k(\dep_1,\dep_2+\epsilon))$.
\item  $u_2(s_k(\dep_1,\dep_2+\eps))\geq u_2(s_k(\dep_1,\dep_2))\geq u_2(s_k(\dep_1+\epsilon,\dep_2))$. 
\end{enumerate}
\end{claim}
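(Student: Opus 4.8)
The plan is to prove the two parts of Claim \ref{clm:mono} together by induction on the number of rounds $k$, the base case $k=0$ being trivial since all utilities are $0$. I would first set up notation that exposes the symmetry: write $W_k(a,b)$ for the utility, under $s_k$, of the player whose reputation is currently $a$ when facing an opponent of reputation $b$, so that $u_1(s_k(x_1,x_2))=W_k(x_1,x_2)$ and, when $x_1\neq x_2$, also $u_2(s_k(x_1,x_2))=W_k(x_2,x_1)$. Since the game is invariant under swapping the two players except for the tie-break at equal reputations, all four inequalities in the claim reduce to the single assertion that $W_k(a,b)$ is non-decreasing in $a$ and non-increasing in $b$; the equal-reputation case is checked separately --- one shows directly from the inductive hypothesis that at $a=b$ the player designated higher does at least as well as the player designated lower, and this also covers perturbations that move the reputation pair across the diagonal $a=b$.

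For the inductive step I would unfold the one-round recursion. The inductive hypothesis $W_{k-1}(a+1,b)\ge W_{k-1}(a,b)\ge W_{k-1}(a,b+1)$ immediately gives two facts used throughout: the deterministic ($\stratpm$ and $\stratmp$) branches of the recursion are monotone in the appropriate direction; and for the higher player, the value of the branch in which the lower player concedes is at least the value of the $\stratpp$ branch --- i.e., the higher player always weakly prefers the lower player to concede. The rest is a case analysis on whether the identity of the higher player and the first-round choice of the lower player change under the perturbation. When the higher player's identity is fixed and the lower player plays the same first-round action in both games, $W_k$ on each side is the same branch of the recursion, and the inequality follows from an elementary ``shifted convex combination'' lemma: in the $\stratpp$ branch $W_k$ is a convex combination $\lambda\alpha+(1-\lambda)\beta$ with $\alpha\ge\beta$ by the inductive hypothesis, and the perturbation simultaneously increases $\lambda$ and both $\alpha$ and $\beta$ (again the inductive hypothesis), so the value goes up --- and symmetrically for the non-increasing direction. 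When the lower player switches its first-round action in the ``expected'' direction (abandoning competition as the higher player strengthens, or taking up competition as it itself strengthens), the switch is benign: by optimality of the lower player and monotonicity of its conceding-value, the post-switch value is at least the quantity needed, and the pre-switch value is compared to it using the fact that the higher player prefers the lower player to concede.

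The crux --- and the only genuinely hard case --- is the opposite switch: the lower player taking up competition as the higher player gets stronger (equivalently, abandoning competition as the lower player itself gets stronger). Handling it amounts to proving that the lower player's best response is monotone, i.e., that its incentive to compete $\psi_k(a,b)=\frac{b}{a+b}\bigl(1+W_{k-1}(b+1,a)-W_{k-1}(b,a+1)\bigr)-q$ (for $a\ge b$) is non-increasing in $a$ and non-decreasing in $b$, so that once the lower player concedes it keeps conceding as the reputation gap widens. This does not follow from the inductive hypothesis alone: as $a$ grows, the factor $\frac{b}{a+b}$ and the term $W_{k-1}(b+1,a)$ both decrease while $-W_{k-1}(b,a+1)$ increases, so the two effects on $\psi_k$ pull in opposite directions. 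I expect closing this requires carrying a strengthened inductive hypothesis --- a second-order monotonicity of $W$, roughly that the marginal harm a player suffers from an increase in its opponent's reputation is itself monotone in the two reputations, so that the decrease of $a\mapsto W_{k-1}(b+1,a)$ dominates the decrease of $a\mapsto W_{k-1}(b,a+1)$ --- through the same induction; failing a clean statement of that kind, one would instead try to derive a contradiction directly from the two equilibrium inequalities defining the bad switch (the lower player concedes at $(a,b)$ but competes at $(a+\epsilon,b)$). This monotone-best-response step, and whatever extra structure is needed to sustain it inductively, is where essentially all the difficulty of the claim lies; the remainder is bookkeeping with the shifted-convex-combination lemma and the inductive hypothesis.
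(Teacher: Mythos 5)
Your plan reproduces the architecture of the paper's proof --- simultaneous induction on $k$ over both parts, a case analysis on whether the first-round profile $f(s_k(\cdot))$ changes under the perturbation, the per-branch monotonicity $u_i(s_k^\delta(\dep_1+\eps,\dep_2))$ vs.\ $u_i(s_k^\delta(\dep_1,\dep_2))$ extracted from the inductive hypothesis (the paper's Observation~\ref{obs:compete}), and the revealed-preference sandwich for the ``benign'' switch. Your diagnosis of where the difficulty sits is also accurate: the only configuration that genuinely threatens the conclusion is the one where the lower player's identity is unchanged but it switches \emph{into} competition as its relative position worsens, and you are right that the sign of the competition incentive $\psi_k$ is not controlled by first-order monotonicity of $W_{k-1}$ alone, since the probability factor and the two continuation values move in opposite directions.

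The gap is that you stop there: you name two candidate ways to close this case (a strengthened second-order inductive hypothesis, or a direct contradiction from the pair of best-response inequalities) and execute neither, so the proposal is a plan with its central step open. For comparison, the paper does not prove monotonicity of the lower player's best response at all. In each ``mixed'' case it first argues, using the per-branch comparison, that the configuration forces a specific identity for the lower player --- e.g.\ if $f(s_k(\dep_1,\dep_2))=\stratpp$ but $f(s_k(\dep_1+\eps,\dep_2))\neq\stratpp$ then $\dep_1\geq\dep_2$, and in the reverse situation $\dep_1<\dep_2$, so your ``bad switch'' with the higher player's identity fixed is claimed not to arise --- and then obtains both inequalities by chaining the switching player's revealed preference at one endpoint with the branch monotonicity at the other, as in $u_2(s_k(\dep_1,\dep_2))=u_2(s_k^{\stratpp}(\dep_1,\dep_2))>u_2(s_k^{\stratpm}(\dep_1,\dep_2))\geq u_2(s_k^{\stratpm}(\dep_1+\eps,\dep_2))=u_2(s_k(\dep_1+\eps,\dep_2))$. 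Note that even in the configuration you flag as the crux, this sandwich already gives the required inequality for the \emph{switching} (lower) player; it is only the \emph{higher} player's monotonicity that would be at risk, which is exactly why the identity-pinning step matters. So the missing ingredient in your write-up is not a second-order monotonicity of $W$ but this combination of identity-pinning plus the revealed-preference chain; without it (or a worked-out substitute) the claim is not proved.
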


Next, we prove that the three following statements hold.
\begin{proposition} \label{prop:eq-and-more}
For any integers $\dep_1,\dep_2$ and $k$ the following holds for
the strategies $s_k(\dep_1,\dep_2)$.
\begin{enumerate}
\item $s_k(\dep_1,\dep_2)$ is a sub-game perfect equilibrium in the game $G_k(\dep_1,\dep_2)$.
\item \label{eq:enum:stops-competing} If a player does not compete in the first round of the game $G_k(\dep_1,\dep_2)$, then it does not compete in all subsequent rounds. 
\item \label{eq:enum:stronger} The utility of the higher player in the game 
$G_k(\dep_1,\dep_2)$ is at least as large as
the utility of the lower player.
\end{enumerate}
\end{proposition}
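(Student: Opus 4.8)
The plan is to prove all three statements simultaneously by induction on the number of rounds $k$, taking Claim~\ref{clm:mono} as the basic tool and, where the induction demands it, carrying one or two auxiliary monotonicity statements inside the inductive hypothesis (see below). The base case $k=1$ is immediate: there is no continuation, so $s_1(\dep_1,\dep_2)$ is a best response for the lower player by definition and the higher player's only alternative is dominated; no one re-competes because there are no later rounds; and the higher player's payoff beats the lower player's by inspecting the first-round outcome together with $1>\can$. So assume all the statements hold for every game with fewer than $k$ rounds. I would dispatch statement~3 first as a warm-up: in $s_k(\dep_1,\dep_2)$ the higher player (player~1 after tie-breaking) plays $+$, so $f(s_k(\dep_1,\dep_2))\in\{\stratpp,\stratpm\}$; in the $\stratpm$ branch player~1 is still the higher player afterward, so the induction hypothesis for statement~3 on $s_{k-1}(\dep_1+1,\dep_2)$ plus $1>\can$ gives the claim, and in the $\stratpp$ branch one compares the two recursive expressions term by term using statement~3 on $s_{k-1}(\dep_1+1,\dep_2)$ and $s_{k-1}(\dep_1,\dep_2+1)$ together with $\p{1}{\dep_1}{\dep_2}\ge\p{2}{\dep_1}{\dep_2}$, handling the symmetric case $\dep_1=\dep_2$ separately (there the two utilities are equal).

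For statement~1 I would use the one-shot deviation principle: by the induction hypothesis for statement~1, the prescribed continuation play is already subgame perfect on every reachable $(k-1)$-round game, so it suffices to rule out a profitable unilateral deviation in the \emph{first} round of $G_k(\dep_1,\dep_2)$. The lower player plays a best response in the first round by the very definition of $s_k$, so the entire content is to show that the higher player (weakly) prefers $+$ to $-$ given the lower player's fixed first-round choice $\sigma_2$. If $\sigma_2=-$, then deviating to $-$ makes both players chase the weak candidate, which is dominated by taking the strong candidate uncontested, and Claim~\ref{clm:mono} together with $1\ge\can$ makes this comparison immediate. The real work is the case $\sigma_2=+$: a deviation by the higher player to $-$ yields the outcome $\stratmp$, and the inequality to prove is
\[
  \p{1}{\dep_1}{\dep_2}\bigl(1+u_1(s_{k-1}(\dep_1+1,\dep_2))-u_1(s_{k-1}(\dep_1,\dep_2+1))\bigr)\;\ge\;\can .
\]

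Here the trick I would use is to exploit what the lower player's choice reveals: since competing is weakly optimal for the lower player, rearranging its $\stratpp$-versus-$\stratpm$ comparison gives $\p{2}{\dep_1}{\dep_2}\bigl(1+u_2(s_{k-1}(\dep_1,\dep_2+1))-u_2(s_{k-1}(\dep_1+1,\dep_2))\bigr)\ge\can$. Writing $V_{k-1}(a,b)=u_1(s_{k-1}(a,b))$ and using that total welfare is symmetric in the player labels, $V_{k-1}(a,b)+V_{k-1}(b,a)=u(s_{k-1}(a,b))$, one sees that the higher player's ``reputation gain'' $u_1(s_{k-1}(\dep_1+1,\dep_2))-u_1(s_{k-1}(\dep_1,\dep_2+1))$ exceeds the lower player's analogous quantity by exactly $u(s_{k-1}(\dep_1+1,\dep_2))-u(s_{k-1}(\dep_1,\dep_2+1))$. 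Thus if I carry in the induction the auxiliary statement that the social welfare of a position is monotone in how unequal the two reputations are --- namely $u(s_{k-1}(\dep_1+1,\dep_2))\ge u(s_{k-1}(\dep_1,\dep_2+1))$ whenever $\dep_1\ge\dep_2$, the natural fact that a wider reputation gap makes the lower player concede sooner and hence raises employment --- then combining it with the revealed-preference inequality and $\p{1}{\dep_1}{\dep_2}\ge\p{2}{\dep_1}{\dep_2}$ yields the displayed inequality, the case $\dep_1=\dep_2$ again being the symmetric one (which holds with equality). For statement~2, the same style works: if the lower player concedes in round~1, then after that round the reputation gap has strictly widened and one fewer round remains, and a monotonicity statement for the lower player's ``compete versus concede'' threshold --- monotone both in the gap and in the number of remaining rounds, and provable from Claim~\ref{clm:mono} and the recursions --- shows it concedes again in round~2, after which the induction hypothesis for statement~2 applies.

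The main obstacle, I expect, is precisely this interlocking: checking the higher player's first-round incentive when the lower player competes genuinely seems to need the welfare-monotonicity fact about the continuation games, and that fact in turn leans on the ``compete versus concede'' threshold monotonicity, so the cleanest route is to fold statements~1--3, Claim~\ref{clm:mono}, welfare-monotonicity in the reputation gap, and threshold-monotonicity into a single simultaneous induction on $k$ rather than proving them one after another. The rest is bookkeeping that still needs care: the tie-breaking conventions (player~1 when reputations are equal, the lower player conceding when indifferent) must be tracked through every branch, and the boundary case $\dep_1=\dep_2$ --- where the position is symmetric and several of the inequalities hold only with equality --- must be separated out throughout.
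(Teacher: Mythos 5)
Your high-level strategy --- a single simultaneous induction on $k$ over all three statements, with the base case $k=1$ checked directly and the inductive step reduced via the one-shot deviation principle to the higher player's first-round incentive --- matches the paper's, and you have correctly isolated the hard case (the lower player plays $+$) and the exact inequality $\p{1}{\dep_1}{\dep_2}\bigl(1+u_1(s_{k-1}(\dep_1+1,\dep_2))-u_1(s_{k-1}(\dep_1,\dep_2+1))\bigr)\geq\can$ that must be verified there. Your route through that inequality, however, is different from the paper's and rests on an unproven pivot. You combine the lower player's revealed-preference inequality with $\p{1}{\dep_1}{\dep_2}\ge\p{2}{\dep_1}{\dep_2}$ and an auxiliary claim that $u(s_{k-1}(\dep_1+1,\dep_2))\geq u(s_{k-1}(\dep_1,\dep_2+1))$ for $\dep_1\geq\dep_2$ (welfare monotone in the reputation gap). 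That claim is plausible and, I believe, ultimately true, but it is not a consequence of Claim~\ref{clm:mono}: its own inductive step requires ruling out the configuration where the players compete at the wider gap but not at the narrower one, and handling the tied case $\dep_1=\dep_2$, and both of these need precisely the supporting apparatus the paper builds --- the guarantee $u_i(s_k(\cdot))\geq k\can$ for the lower player, the fact that $u_i=k\can$ forces the player never to compete, and the ``mirror'' comparisons $u_1(s_k(y,z))\geq u_2(s_k(z,y))$ between transposed games. As written, the crux of your argument is an assertion, not a proof. The paper instead avoids welfare monotonicity entirely: it argues by contradiction with a case analysis on $f(s_{k-1}(\dep_1,\dep_2+1))$, disposing of the $\stratpp$ subcase by a direct chain of inequalities using monotonicity and the subgame-perfection of the continuation, and of the other two subcases by ``locking'' arguments built on the $k\can$ guarantee.

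Two further soft spots. First, your statement~2 argument invokes monotonicity of the ``compete versus concede'' threshold \emph{in the number of remaining rounds}, claimed to be ``provable from Claim~\ref{clm:mono} and the recursions''; Claim~\ref{clm:mono} compares games of the same length and says nothing about horizon monotonicity. The paper's proof of statement~2 needs no such fact: from $f(s_k(\dep_1,\dep_2))=\stratpm$ and $f(s_{k-1}(\dep_1+1,\dep_2))=\stratpp$ it derives directly that the lower player would have strictly preferred to compete in round one, a contradiction. Second, in several places you dismiss the case $\dep_1=\dep_2$ with ``there the two utilities are equal''; under the tie-breaking rules the position is not symmetric, and the comparison you need there (e.g.\ $u_1(s_{k-1}(\dep,\dep+1))\geq u_2(s_{k-1}(\dep+1,\dep))$ in your term-by-term argument for statement~3) is exactly the nontrivial mirror claim, which the paper proves by its own induction with locking arguments. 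So: right skeleton, a genuinely interesting alternative idea at the key step, but the load-bearing lemmas are missing their proofs.
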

Essentially, we prove all three properties simultaneously by induction
on the number of rounds of the game; to make the inductive argument easier
to follow, we separate the three statements in Subsection \ref{sub:app:eq} of the appendix
into three different claims. Let us mention two more claims that will be useful later on (all proofs are provided in Subsection \ref{sub:app:more_prop} of the appendix):
\begin{claim} \label{claim:kq-non-compete}
If $u_i(s_k(\dep_1,\dep_2))=k\can$, for some player $i$, then player $i$ never competes in the game $G_k(\dep_1,\dep_2)$.
\end{claim}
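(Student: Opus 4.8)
The plan is to induct on the number of rounds $k$; the base case $k=0$ is vacuous (and $u_i(s_0(\dep_1,\dep_2))=0=0\cdot\can$). I would first record a preliminary fact: in $s_k(\dep_1,\dep_2)$ the lower player always secures utility at least $k\can$. Indeed, against the higher player's component of $s_k$ (go for the stronger candidate every round), the lower player can take the weaker candidate in every round; this is uncontested, so the higher player wins the stronger candidate every round and remains the higher player throughout, and the lower player collects exactly $k\can$. Since the lower player best-responds in $s_k$, this gives $u_{\mathrm{lower}}(s_k(\dep_1,\dep_2)) \ge k\can$, and then Proposition~\ref{prop:eq-and-more}(\ref{eq:enum:stronger}) gives $u_i(s_k(\dep_1,\dep_2)) \ge k\can$ for both players.

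For the inductive step, suppose $u_i(s_k(\dep_1,\dep_2)) = k\can$ with $k\ge 1$, and consider first the case where player $i$ is the lower player. Reading off the recursion for $u_i$ --- the $\stratpp$ formula if player $i$ competes, and the $\stratpm$ or $\stratmp$ formula if it does not --- the lower player has exactly two first-round options: ``not competing,'' worth $\can + u_i(s_{k-1}(\cdot))$ with the higher player's reputation bumped by one, and ``competing,'' worth the convex combination dictated by the Tullock probabilities. By the preliminary fact applied to the $(k-1)$-round continuation --- in which player $i$ is still the lower player, since the reputation gap has only widened --- the ``not competing'' value is at least $\can + (k-1)\can = k\can$. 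Since $u_i(s_k(\dep_1,\dep_2)) = k\can$ is the maximum of the two option values and ``not competing'' already attains $k\can$, the ``competing'' option is not strictly better, so by tie-breaking assumption~(1) player $i$ does not compete in round~$1$; and then the ``not competing'' value equals $k\can$ exactly, so $u_i(s_{k-1}(\cdot)) = (k-1)\can$ for the continuation. Applying the inductive hypothesis to that $(k-1)$-round game, player $i$ never competes there, i.e.\ in rounds $2,\dots,k$; together with round~$1$, player $i$ never competes in $G_k(\dep_1,\dep_2)$.

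For the remaining case, where player $i$ is the higher player, Proposition~\ref{prop:eq-and-more}(\ref{eq:enum:stronger}) together with the preliminary fact gives $k\can = u_i(s_k(\dep_1,\dep_2)) \ge u_{\mathrm{lower}}(s_k(\dep_1,\dep_2)) \ge k\can$, so the lower player also has utility exactly $k\can$; by the previous paragraph the lower player never competes, so the higher player wins the stronger candidate every round, forcing $u_i(s_k(\dep_1,\dep_2)) = k$. Since $\can<1$ and $k\ge 1$ this contradicts $u_i(s_k(\dep_1,\dep_2)) = k\can$, so this case cannot arise and the claim holds vacuously for it.

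The main obstacle --- really the only delicate point --- will be the careful use of the tie-breaking rule: the equality $u_i = k\can$ only forces ``competing'' to be \emph{weakly} worse than ``not competing,'' so it is essential that assumption~(1) breaks this tie toward the weaker candidate; without it the lower player could compete and still end with utility $k\can$, and the claim would be false. A secondary bookkeeping point, needed to invoke both the preliminary bound and the inductive hypothesis on the continuation game, is that the ``lower player'' label is preserved along a no-competition trajectory, since there the higher player's reputation strictly increases while the lower player's stays fixed.
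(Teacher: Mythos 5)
Your proof is correct and rests on the same key observation as the paper's: not competing already secures $k\can$ (the paper's Claim \ref{clm:app:guarantee}), so $u_i(s_k(\dep_1,\dep_2)) = k\can$ means competing is at best a tie, and tie-breaking assumption (1) sends the player to the weaker candidate. The only difference is cosmetic --- you propagate ``never competes'' to later rounds with your own induction and dispose of the higher-player case explicitly, whereas the paper argues only about the first round and invokes part (\ref{eq:enum:stops-competing}) of Proposition \ref{prop:eq-and-more} for the remaining rounds.
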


\begin{claim} \label{clm:keeps_comp}
If player $i$ competes in the first round of the game $G_k(\dep_1,\dep_2)$ and wins, then in the next round of the game it also makes an offer to the stronger candidate.
\end{claim}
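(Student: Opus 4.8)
The plan is to case-split on the reputations after round~$1$; since ``the next round'' presupposes $k\ge 2$, the case $k=1$ is vacuous. Let $L$ be player~$i$'s reputation and $H$ its opponent's reputation at the start of round~$1$. Since player~$i$ hires the stronger candidate in round~$1$, it enters round~$2$ with reputation $L+1$ while the opponent still has $H$ (hiring the weaker candidate, or no one, leaves reputation unchanged). If $L+1>H$, then player~$i$ is the strictly higher player in round~$2$, so by the definition of $s_k$ it offers to the stronger candidate and we are done; this already subsumes the case where player~$i$ was the higher player in round~$1$ (then $L\ge H$). Otherwise $L+1\le H$, which forces $L<H$, so player~$i$ was the strictly lower player in round~$1$, and in round~$2$ it is again the lower player --- except in the boundary situation $L+1=H$ with player~$i=1$, where the tie-break makes player~$i$ the higher player and we are once more done.

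For the remaining (main) case I would argue by contradiction: suppose player~$i$ does \emph{not} offer to the stronger candidate in round~$2$. Viewing rounds $2,\dots,k$ as the subgame $G_{k-1}(L+1,H)$ --- in whose first round player~$i$ does not compete --- Proposition~\ref{prop:eq-and-more}(\ref{eq:enum:stops-competing}) shows player~$i$ never competes in rounds $2,\dots,k$, hence hires the (uncontested) weaker candidate each time and collects exactly $(k-1)\can$ over them; in particular $u_i(s_{k-1}(L+1,H))=(k-1)\can$. I would also record the elementary bound $u_i(s_t(a,b))\ge t\can$ for every state: the lower player can always secure $\can$ per round by declining throughout, and the higher player does at least as well by Proposition~\ref{prop:eq-and-more}(\ref{eq:enum:stronger}).

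Next I would use the round-$1$ decision. Writing out $u_i(s_k(L,H))$ under ``player~$i$ competes in round~$1$'' with $u_i(s_{k-1}(L+1,H))=(k-1)\can$ substituted, and comparing it against the value $\can+u_i(s_{k-1}(L,H+1))$ of declining in round~$1$: because in $s_k$ the lower player competes only when this comparison is \emph{strict}, and $u_i(s_{k-1}(L,H+1))\ge(k-1)\can$, a short manipulation yields $\frac{L}{L+H}>\can$. Finally I would contradict the assumption back at round~$2$. At state $(L+1,H)$ (or $(H,H)$ in the boundary sub-case), with $k-1\ge 1$ rounds left, competing is worth at least $\frac{L+1}{L+1+H}+(k-2)\can$ (probability $\frac{L+1}{L+1+H}$ of gaining quality $1$ this round, plus $u_i(s_{k-2}(\cdot))\ge(k-2)\can$ for the continuation), whereas declining is worth exactly $\can+(k-2)\can=(k-1)\can$ --- again by Proposition~\ref{prop:eq-and-more}(\ref{eq:enum:stops-competing}) under the contradiction hypothesis. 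Since $\frac{L+1}{L+1+H}>\frac{L}{L+H}>\can$ (and $\tfrac12>\frac{L}{L+H}>\can$ in the boundary sub-case, using $\frac{L}{L+H}=\frac{H-1}{2H-1}<\tfrac12$), competing strictly beats declining, so player~$i$ \emph{does} offer to the stronger candidate in round~$2$ --- a contradiction.

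I expect the main difficulty to be delicate but routine bookkeeping: tracking which player is ``higher'' after each round-$1$ outcome, getting the tie-breaking conventions right in the boundary sub-cases, and feeding the correct continuation states into $u_i(s_t(\cdot))\ge t\can$ in each branch. Claim~\ref{clm:mono} and Claim~\ref{claim:kq-non-compete} are available and could streamline some steps, but the argument above seems to require only the definition of $s_k$ together with parts~\ref{eq:enum:stops-competing} and~\ref{eq:enum:stronger} of Proposition~\ref{prop:eq-and-more}.
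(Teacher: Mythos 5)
Your proof is correct, and it takes a recognizably different route from the paper's. Both arguments share the same skeleton: argue by contradiction that player $i$ declines in round $2$ after winning, use part~(\ref{eq:enum:stops-competing}) of Proposition~\ref{prop:eq-and-more} to pin the value of that branch at exactly $(k-1)\can$, and combine with the guarantee $u_i(s_t(\cdot))\ge t\can$. The difference is where the contradiction lands. The paper stays at round $1$: it uses the utility monotonicity of Claim~\ref{clm:mono} to get $u_i(s_{k-1}(L,H+1))\le u_i(s_{k-1}(L+1,H))=(k-1)\can$, deduces $u_i(s_k(L,H))\le \frac{L}{L+H}+(k-1)\can\le k\can$ in the relevant case, and then invokes Claim~\ref{claim:kq-non-compete} to conclude player $i$ would not have competed in round $1$. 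You instead extract the explicit threshold $\frac{L}{L+H}>\can$ from the round-$1$ decision (using that the lower player competes only under strict preference) and contradict the assumed round-$2$ decision directly via $\frac{L+1}{L+H+1}>\frac{L}{L+H}>\can$. Your route dispenses with both Claim~\ref{clm:mono} and Claim~\ref{claim:kq-non-compete}, replacing them with the elementary monotonicity of the Tullock success probability in the player's own reputation --- a mild simplification. The trade-off is that the paper's appendix version of this claim is proved for an arbitrary competition function, whereas your key step is written in Tullock-specific form (though it would survive under the paper's general monotonicity axiom, since only $c(L+1,H)\ge c(L,H)>\can$ is needed). Your bookkeeping of the tie-break conventions, including the boundary case $L+1=H$ with $i=2$ where the round-$2$ success probability is $\tfrac12>\can$, is handled correctly.
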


\xhdr{Connections to Urn Processes}
Note that since each player's reputation is equal to the number
of stronger candidates it has hired, the reputations are always
integers (assuming they start from integer values).
These integer values evolve while the players are competing;
and once they stop competing, we know by statement (\ref{eq:enum:stops-competing}) of Proposition \ref{prop:eq-and-more} the exact outcome of the game since the players will behave exactly the same as in the game that this is its first round. This brings us to the close resemblance between our recruiting game and a Polya Urn process \cite{pemantle-polya-urn-survey}.
%so

First, let us define what the Polya Urn process is:
\begin{definition}[Polya Urn process]
Consider an urn containing $b$ blue balls and $r$ red balls. The process is defined over discrete rounds. In each round a ball is sampled uniformly at random from the urn; hence  
the probability of drawing a blue ball is $\frac{b}{b+r}$ and
the probability of drawing a red ball is $\frac{r}{b+r}$.
Then, the ball together with another ball of the same color are returned to the urn. 
\end{definition}

There is a clear resemblance between our recruiting game and the urn model. 
As long as the players compete, their reputations evolve in the same way 
as the number of blue and red balls in the urn, 
since the probabilistic
rule for a candidate to select which firm to join is the same as the rule for choosing
which color to add to the urn, and by assumption 
the reputation of the winning player is increased by the 
stronger candidate's quality, which is $1$. 

A striking fact about urn models is that the fraction of the blue (or red) balls converges in distribution as the number of rounds goes to infinity. More specifically, if initially the urn contains a single red ball and a single blue ball then the fraction of blue balls 
converges to a uniform distribution on $[0,1]$ as the number of rounds goes to infinity. More generally, the fraction of blue balls converges to the $\beta$ distribution $\beta(b,r)$. 
Understanding urn processes is useful for understanding our proofs; 
however we should stress that our model and its analysis 
have added complexity due to 
the fact that players stop competing at a point in time that is
strategically determined.

\xhdr{Connections to Bandit Problems}
It is interesting to note that as long as the lower player stays lower our equilibrium selection rule makes this 
effectively a one-player game. In a sense, 
the lower player's strategy in this phase resembles 
the optimal strategy in a mulit-armed bandit problem \cite{git74},
and more specifically in a one-armed bandit problem \cite{Berry-one-arm}. In a one-armed bandit a single player is repeatedly faced with two options (known as ``arms''
following the terminology of slot machines): the player can pull arm $1$,
which gives a reward sampled from some \emph{unknown} distribution, or pull arm $2$ which gives him a reward from a \emph{known} distribution. Informally speaking, by pulling arm $1$ the player gets both a reward and some information about the distribution 
from which the reward is drawn. The player's goal is to maximize its expected reward possibly under some discounting of future rounds. 
A celebrated result establishes
that for some types of discounting (for example geometric) 
one can compute a number called the {\em Gittins index} for each arm (based on one's observations and the prior) 
and the strategy maximizing the player's expected reward is to pull the arm with the highest Gittins index in each round \cite{git74}.
Since by definition the Gittins index of the fixed arm is fixed, this implies that once the Gittins index of the unknown arm drops below the one of the known arm, the player should only pull the known arm. This also means that the player stops collecting information on the distribution of the unknown arm and hence from this round onwards it always chooses the fixed arm. 

There are analogies as well as differences between our game and 
the one-armed bandit problem.  In our game, the lower player is
also faced with a choice between a risky option (competing) and
a safe option (going for the weaker candidate).
On the other hand, 
an important difference between our model and the one-armed bandit problem
is that our game is in fact a two-player game and at any point 
the lower-reputation player can become the higher-reputation one; 
this property contributes additional sources of complexity to the 
analysis of our game.
Moreover, it is
important to note that for many distributions and discount sequences (including the ones most similar to our game) a closed-form expression of the Gittins index is unknown.

%!TEX root =recruiting1c.tex

\section{Analyzing the Game with A Fixed Number of Rounds}
\label{sec:fixedk}
We begin by analyzing
the game played over a fixed number of rounds $k$ and study the
dependence of the performance ratio on $k$.
In the next section, we turn to the main result of the paper,
which is to analyze the limit of the performance ratio as the number of rounds $k$ goes to infinity.

Our first result is a simple but powerful bound  of $\frac{2\can}{1+\can}$ on the performance ratio, which holds for all $k$.
This is done by 
relating the performance ratio to players' decision whether 
to compete in the first round. 
The argument underlying this relationship is quite robust, in that it
is essentially based only on the reasoning that
the players can always decide to stop competing 
and go for the weaker candidate. 
This bound also implies that as $\can$ goes to $1$ the performance ratio also goes to $1$. 

\label{sec:fixed_k}
\begin{claim} \label{clm:poa32}
The performance ratio of any game $G_{k,\can}(\dep_1,\dep_2)$ is at least $\frac{2\can}{1+\can}$.
\end{claim}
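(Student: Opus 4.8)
The plan is to lower-bound the equilibrium social welfare $u(s_k(\dep_1,\dep_2))$ by $2k\can$ and then divide by the social optimum, which equals $k(1+\can)$ since in any round at most one candidate of quality $1$ and one of quality $\can$ can be hired. As $\dfrac{2k\can}{k(1+\can)} = \dfrac{2\can}{1+\can}$, the claim follows immediately once we know that in the canonical equilibrium each of the two players collects utility at least $k\can$.

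I would first prove, by induction on $k$, that the lower player's utility satisfies $u_{\mathrm{lower}}(s_k(\dep_1,\dep_2)) \geq k\can$ for every choice of initial reputations. Assume without loss of generality $\dep_1 \geq \dep_2$, so player $2$ is the lower player; then $f(s_k(\dep_1,\dep_2))$ is either $\stratpm$ or $\stratpp$, since the higher player always goes for the stronger candidate. Consider the action ``go for the weaker candidate'' available to the lower player in the first round: choosing it yields the weaker candidate uncontested (the higher player takes the stronger one and moves to reputation $\dep_1+1$), after which the remaining $k-1$ rounds are played according to $s_{k-1}(\dep_1+1,\dep_2)$, in which player $2$ is still the lower player because $\dep_1+1 > \dep_2$. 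By the inductive hypothesis this continuation gives player $2$ at least $(k-1)\can$, so the value of this action is at least $\can + (k-1)\can = k\can$. Since by definition of $s_k$ the lower player picks a utility-maximizing action in the first round, $u_2(s_k(\dep_1,\dep_2))$ is at least the value of this particular action, hence at least $k\can$; the base case $k=0$ is trivial. Note this argument uses only the defining property of $s_k$ and the recursions already written down, and in particular avoids any reasoning about off-equilibrium behavior, precisely because the comparison action keeps the lower player in the lower role.

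Next I would invoke Proposition \ref{prop:eq-and-more}(\ref{eq:enum:stronger}), which states that the higher player's utility is at least the lower player's, to conclude $u_{\mathrm{higher}}(s_k(\dep_1,\dep_2)) \geq u_{\mathrm{lower}}(s_k(\dep_1,\dep_2)) \geq k\can$ as well. Summing the two bounds gives $u(s_k(\dep_1,\dep_2)) = u_1(s_k(\dep_1,\dep_2)) + u_2(s_k(\dep_1,\dep_2)) \geq 2k\can$, and dividing by the social optimum $k(1+\can)$ yields the performance-ratio bound $\dfrac{2\can}{1+\can}$.

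The main obstacle is really just the first step, and it is a mild one: one has to make sure the ``deviate to the weak candidate'' continuation is genuinely $s_{k-1}(\dep_1+1,\dep_2)$ and that player $2$ stays the lower player there, so that the inductive hypothesis applies verbatim; everything else is bookkeeping. One could instead try to prove directly by induction that \emph{both} players earn at least $k\can$, but the higher player's bound is awkward to close in the branch where both players compete, since that round contributes only $1$ to social welfare and $1 < 2\can$ when $\can > \tfrac12$; routing the higher player's bound through Proposition \ref{prop:eq-and-more}(\ref{eq:enum:stronger}) sidesteps this difficulty entirely.
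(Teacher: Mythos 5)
Your proof is correct and follows essentially the same route as the paper: lower-bound each player's equilibrium utility by $k\can$ and divide the resulting $2k\can$ by the optimum $k(1+\can)$. The paper's in-text proof just asserts that ``a player can always secure $k\can$ by always making an offer to the weaker candidate,'' while you are slightly more careful in routing the \emph{higher} player's bound through Proposition \ref{prop:eq-and-more}(\ref{eq:enum:stronger}) --- which is exactly how the paper's appendix formalizes it (Claim \ref{clm:app:guarantee} for the lower player, then the higher-player dominance result) --- so the two arguments coincide.
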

\begin{proof}
We begin with the simple observation that the expected social welfare equals the sum of the expected utilities of the two players in the 
beginning of the game. 
To get a lower bound on the performance ratio it is enough to compute an upper
bound on the expected social welfare. This is done by observing that 
$u_i(s_k(\dep_1,\dep_2)) \geq k \can $, since 
a player can always secure a utility of $k \can$ by always making an offer to the weaker candidate. Hence, the following is a bound on the performance ratio: 
$\dfrac{u_1(s_k(\dep_1,\dep_2))+u_2(s_k(\dep_1,\dep_2))}{k(1+\can)} \geq \dfrac{2 k \can}{k(1+\can)} = \dfrac{2\can}{1+\can}.$
\end{proof}
\begin{corollary}
The performance ratio of 
{any game $G_{k,\can}(\dep_1,\dep_2)$} is at least $2/3$. 
\end{corollary}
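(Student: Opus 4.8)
The plan is to combine the bound of Claim~\ref{clm:poa32} with a second, complementary bound and then check that the two together never fall below $2/3$, regardless of the value of $\can$.

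By Claim~\ref{clm:poa32} the performance ratio of $G_{k,\can}(\dep_1,\dep_2)$ is at least $\frac{2\can}{1+\can}$. Since this expression is increasing in $\can$ and equals $\tfrac23$ exactly at $\can=\tfrac12$, it already yields the corollary whenever $\can\ge\tfrac12$. What is left is the regime $\can<\tfrac12$, where $\frac{2\can}{1+\can}$ is too weak (it tends to $0$ as $\can\to0$), so a different lower bound is needed there.

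For that complementary bound I would use the defining property of the strategies $s_k(\dep_1,\dep_2)$: in every round the higher player makes an offer to the stronger candidate. Hence in each of the $k$ rounds the stronger candidate receives at least one offer and is therefore hired, so $u(s_k(\dep_1,\dep_2))\ge k$ just from the contribution of the $k$ stronger candidates. Dividing by the optimal social welfare $k(1+\can)$ shows the performance ratio is at least $\frac{1}{1+\can}$, which is decreasing in $\can$ and equals $\tfrac23$ at $\can=\tfrac12$; this covers $\can\le\tfrac12$.

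Putting the two bounds together, for every $\can\in(0,1)$ the performance ratio is at least $\max\!\bigl(\tfrac{2\can}{1+\can},\,\tfrac{1}{1+\can}\bigr)=\tfrac{\max(2\can,\,1)}{1+\can}$, and a one-line calculation shows this quantity attains its minimum $\tfrac23$ at $\can=\tfrac12$ (and tends to $1$ as $\can\to0$ or $\can\to1$). There is no real obstacle in this argument; the only point worth noticing is that neither single inequality suffices by itself — one needs a bound that is strong for small $\can$ and a separate one that is strong for large $\can$ — and that the ``strong for small $\can$'' bound comes essentially for free from the fact that $s_k$ always keeps the stronger candidate employed.
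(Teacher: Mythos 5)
Your proposal is correct and follows essentially the same route as the paper: the paper likewise uses the $\frac{2\can}{1+\can}$ bound of Claim~\ref{clm:poa32} for $\can>\frac12$ and the "trivial" bound $\frac{1}{1+\can}$ for $\can\le\frac12$, the latter coming from exactly the observation you make, namely that the higher player always offers to the stronger candidate so the welfare is at least $k$. Your write-up just spells out the justification of the $\frac{1}{1+\can}$ bound that the paper leaves implicit.
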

The previous corollary holds for $\can > 1/2$ since $ \dfrac{2 \can}{1+\can} > 2/3$ 
and for $\can \leq 1/2$ since the performance ratio is trivially lower-bounded 
by $1/(1+\can) \geq 2/3$.

Next, we ask what is the length of a game for which the worst performance ratio is achieved. 
For $\can < 1/2$, this is simply a single-round game. 
However, for $\can > 1/2$ the answer is not so simple. We show that when {$\frac{\can}{1-\can} +\eps$ is an integer for an arbitrarily small $\eps>0$}, 
a game of $k_\can=\frac{\can}{1-\can} +\eps$ rounds exhibits a performance ratio arbitrarily close to $\frac{2\can}{1+\can}$. It is interesting that the players' strategies in the games achieving this maximum performance ratio have a very specific structure -- the players compete just for the first round and then the player who lost goes for the weaker candidate for the rest of the game.

\begin{proposition}
Let $\eps = \lceil \frac{\can}{1-\can}\rceil - \frac{\can}{1-\can}$ and $k_\can =\frac{\can}{1-\can}+\eps$. Then, as $\epsilon$ approaches $0$ from above (remaining strictly positive), 
the performance ratio of the game $G_{k_\can, \can}(\dep,\dep)$ converges to $\frac{2 \can}{1+\can}$.
\end{proposition}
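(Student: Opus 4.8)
\noindent\emph{Proof plan.}
The plan is to show that, once $\eps>0$ is small, the canonical equilibrium of $G_{k_\can}(\dep,\dep)$ has a completely rigid shape --- both players make an offer to the stronger candidate in the first round, and the player who loses that competition then makes an offer to the weaker candidate in every one of the remaining $k_\can-1$ rounds --- and then to read off the social welfare. Granting this shape, the realized welfare is the same no matter who wins the first round: round $1$ produces a single hire of quality $1$ (the weaker candidate gets no offer at all), while each later round produces two hires, of qualities $1$ and $\can$. Hence $u(s_{k_\can}(\dep,\dep)) = 1+(k_\can-1)(1+\can) = k_\can(1+\can)-\can$, so the performance ratio is \emph{exactly} $1-\frac{\can}{k_\can(1+\can)}$. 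Substituting $k_\can=\frac{\can}{1-\can}+\eps$ gives $k_\can(1-\can)-\can=\eps(1-\can)$, whence this ratio equals $\frac{2\can}{1+\can}+\frac{\eps(1-\can)}{k_\can(1+\can)}$; since $0\le\frac{\eps(1-\can)}{k_\can(1+\can)}\le\eps$, the performance ratio converges to $\frac{2\can}{1+\can}$ as $\eps\to0^+$ (and it does so from above, consistently with Claim~\ref{clm:poa32}).

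So the statement reduces to the structural claim, which has two halves: (a) in the first round of $G_{k_\can}(\dep,\dep)$ the lower player strictly prefers to compete, so that $f(s_{k_\can}(\dep,\dep))=\stratpp$; and (b) after the first round the loser never competes again --- and by statement~(\ref{eq:enum:stops-competing}) of Proposition~\ref{prop:eq-and-more} it is enough to check that the loser does not compete in the \emph{first} round of the resulting game $G_{k_\can-1}(\dep+1,\dep)$ (the case where player $2$ won being symmetric). Both halves follow from one lemma, which I would prove by induction on $m$ (the base case $m=0$ being vacuous): \emph{if $m\le\frac{\can}{1-\can}$, then in $G_m(a,b)$ with integer reputations $a,b\ge1$ the lower-reputation player does not compete in round $1$, hence by statement~(\ref{eq:enum:stops-competing}) of Proposition~\ref{prop:eq-and-more} it never competes, so the higher player makes an uncontested offer to the stronger candidate every round and ends with utility $m$ while the lower player ends with utility $m\can$.} Note that $m\ge1$ together with $m\le\frac{\can}{1-\can}$ forces $\can\ge\frac12$, a fact I use repeatedly.

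For the inductive step, take $G_m(a,b)$ with $a\ge b$, so that player $1$ is higher and plays $+$ by the definition of $s_m$, and compare the lower player's two options. Its payoff from the weaker candidate is $\can+u_2(s_{m-1}(a+1,b))=m\can$ by the inductive hypothesis. Its payoff from competing is $\frac{a}{a+b}u_2(s_{m-1}(a+1,b))+\frac{b}{a+b}\bigl(1+u_2(s_{m-1}(a,b+1))\bigr)$; the inductive hypothesis applies to $s_{m-1}(a+1,b)$ (player $1$ still higher), and also to $s_{m-1}(a,b+1)$ --- if $a>b$ player $1$ remains higher, and if $a=b$ player $2$ becomes higher, which causes no trouble once the lemma is phrased symmetrically. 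Evaluating, the competing payoff is $(m-1)\can+\frac{b}{a+b}$ when $a>b$ and $\frac12(m-1)\can+\frac12 m$ when $a=b$, so ``weaker is at least as good'' becomes $\can\ge\frac{b}{a+b}$ in the first case --- immediate, since $\frac{b}{a+b}<\frac12\le\can$ --- and $(m+1)\can\ge m$ in the second, which is precisely the hypothesis $m\le\frac{\can}{1-\can}$. The indifference case ($\can=\frac12$, or $m=\frac{\can}{1-\can}$) is resolved by the tie-breaking rule toward the weaker candidate, so the lower player goes for the weaker candidate in round $1$; this completes the induction, proving the lemma and hence part (b), since $k_\can-1=\lceil\frac{\can}{1-\can}\rceil-1<\frac{\can}{1-\can}$ when $\eps>0$.

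Part (a) is the same one-step comparison applied inside $G_{k_\can}(\dep,\dep)$: the lower player's payoff from the weaker candidate is $\can+u_2(s_{k_\can-1}(\dep+1,\dep))=k_\can\can$, and its payoff from competing is $\frac12 u_2(s_{k_\can-1}(\dep+1,\dep))+\frac12\bigl(1+u_2(s_{k_\can-1}(\dep,\dep+1))\bigr)=\frac12(k_\can-1)\can+\frac12 k_\can$, where both subgame utilities come from the lemma (applicable since $k_\can-1<\frac{\can}{1-\can}$). Competing is \emph{strictly} better exactly when $(k_\can+1)\can<k_\can$, i.e.\ $\frac{\can}{1-\can}<k_\can$, which is exactly the hypothesis $\eps>0$; hence $f(s_{k_\can}(\dep,\dep))=\stratpp$, completing the structural claim and the proof. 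I expect the main obstacle to be the bookkeeping in the induction for the lemma --- keeping track of which player is the higher one after a hypothetical win by the lower player so that the symmetric inductive hypothesis can be invoked --- together with the careful treatment of the indifference boundary, since at $\eps=0$ the lower player would be exactly indifferent in the first round and the rigid structure would collapse.
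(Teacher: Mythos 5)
Your proposal is correct and follows essentially the same route as the paper: your lemma (no competition in $G_m(a,b)$ when $m\le\frac{\can}{1-\can}$, with the lower player earning $m\can$ and the higher player $m$) is exactly the paper's Claim~\ref{clm:2players_split} (stated there as $\can\ge\frac{k}{k+1}$) together with its consequences, your parts (a) and (b) are the paper's Claim~\ref{clm:first-round}, and the welfare computation is the same. The only difference is that you establish the lemma by induction with exact subgame utilities, whereas the paper gets the first-round non-competition in one step from the crude bounds $u_2(s_{k-1})\le k-1$ and $u_2(s_{k-1})\ge(k-1)\can$ and then invokes part~(\ref{eq:enum:stops-competing}) of Proposition~\ref{prop:eq-and-more}; both arguments are valid.
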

\begin{proof}
Observe that by Claim \ref{clm:first-round} below the players in the game $G_{k_\can, \can}(\dep,\dep)$ compete for the first round (since $\eps>0$) and then completely stop competing. Thus the expected social welfare of the canonical equilibrium is $k+(k-1)\can$ and its performance ratio is:
\begin{align*}
\dfrac{1+(k-1)(1+\can)}{k(1+\can)} = \dfrac{1+ ((\frac{\can}{1-\can}+ \eps)-1)(1+\can)}{(\frac{\can}{1-\can}+ \eps)(1+\can)} = \dfrac{2\can^2+\eps-\eps\can^2}{\can + \can^2+\eps - \eps\can^2}.
\end{align*}
It is not hard to see now that as $\epsilon$ approaches $0$ the performance ratio approaches $\dfrac{2\can}{1+\can} $.
\end{proof}

We now prove for the $k_\can$'s discussed in the previous proposition the players indeed compete only for the first round and then stop competing. More formally we prove:
\begin{claim} \label{clm:first-round} 
In the game $G_{k,\can}(\dep,\dep)$ for $\frac{\can}{1-\can} <  k \leq \frac{1}{1-\can}$ the players compete in the first round and then completely stop competing.
\end{claim}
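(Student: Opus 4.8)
The plan is to prove a sharper auxiliary claim about short games and then read off Claim~\ref{clm:first-round}. The auxiliary claim is: \emph{for every integer $m$ with $1\le m\le \frac{\can}{1-\can}$ and all integers $\dep_1\ge\dep_2\ge 1$, player~$2$ (the lower player) does not compete in the first round of $G_m(\dep_1,\dep_2)$.} By statement~(\ref{eq:enum:stops-competing}) of Proposition~\ref{prop:eq-and-more}, this makes player~$2$ never compete, so $u_2(s_m(\dep_1,\dep_2))=m\can$ while $u_1(s_m(\dep_1,\dep_2))=m$, since the higher player then takes the stronger candidate unopposed in every round. Two remarks up front: the hypothesis $1\le m\le\frac{\can}{1-\can}$ already forces $\can\ge\frac12$; and the symmetry $u_2(s_m(\dep_1,\dep_2))=u_1(s_m(\dep_2,\dep_1))$ lets us apply the claim, and these utility values, also in continuations where player~$1$ has become the lower player.

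I would prove the auxiliary claim by induction on $m$. In the inductive step the higher player (player~$1$) goes for the stronger candidate, so player~$2$ compares making an offer to the weaker candidate --- worth $\can+u_2(s_{m-1}(\dep_1+1,\dep_2))=\can+(m-1)\can=m\can$ by induction --- against competing, whose value is given by the $\stratpp$ recursion of Section~\ref{sec:prelim}. The inductive hypothesis, together with the symmetry, supplies the utilities of every $(m-1)$-round continuation that appears here, namely $G_{m-1}(\dep_1+1,\dep_2)$ and $G_{m-1}(\dep_1,\dep_2+1)$. Substituting and simplifying, ``competing strictly beats the weaker candidate'' becomes $\p{2}{\dep_1}{\dep_2}>\can$ when $\dep_1>\dep_2$, and $\frac{m}{m+1}>\can$ when $\dep_1=\dep_2$. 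The first is false because $\p{2}{\dep_1}{\dep_2}<\frac12\le\can$; the second is false because $\frac{m}{m+1}\le\can$ is exactly the hypothesis $m\le\frac{\can}{1-\can}$. Hence player~$2$ does not compete, closing the induction --- the base case $m=1$ being the same computation with all continuations empty.

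Claim~\ref{clm:first-round} follows quickly. In $G_{k,\can}(\dep,\dep)$ with $\frac{\can}{1-\can}<k\le\frac{1}{1-\can}$, the higher player (player~$1$, by the tie-break) goes for the stronger candidate. If $k=1$ --- which in this range forces $\can<\frac12$ --- then player~$2$ competes, since competing is worth $\frac12>\can$, and there are no later rounds. Otherwise $k\ge2$, so $\can\ge\frac{k-1}{k}\ge\frac12$ and $1\le k-1\le\frac{1}{1-\can}-1=\frac{\can}{1-\can}$, which places both possible round-one continuations $G_{k-1}(\dep+1,\dep)$ and $G_{k-1}(\dep,\dep+1)$ under the auxiliary claim: the utilities needed are $u_2(s_{k-1}(\dep+1,\dep))=(k-1)\can$ and $u_2(s_{k-1}(\dep,\dep+1))=k-1$. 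Feeding these into player~$2$'s comparison in round~$1$, competing beats the weaker candidate iff $\frac{k}{k+1}>\can$, i.e.\ iff $k>\frac{\can}{1-\can}$ --- which is our hypothesis, so the players compete in the first round. Whoever wins, the position afterwards is, up to relabeling the players, $G_{k-1}(\dep+1,\dep)$ with $k-1\le\frac{\can}{1-\can}$, so by the auxiliary claim and statement~(\ref{eq:enum:stops-competing}) of Proposition~\ref{prop:eq-and-more} the lower player never competes again; since the higher player just takes the stronger candidate unopposed from then on, there is no further competition. That is precisely Claim~\ref{clm:first-round}.

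The real work is in setting up the induction, not in any single computation. The auxiliary statement must be for arbitrary reputations, since the continuations have a strictly larger or a flipped reputation gap, and it pays to fold in the conclusions $u_2=m\can$, $u_1=m$ because the $\stratpp$ recursion in the equal-reputation case needs the \emph{higher} player's continuation utility, not just the lower player's. Once it is framed this way, each inductive case is a single substitution into the recursions of Section~\ref{sec:prelim} followed by one of the two elementary facts $\p{2}{\dep_1}{\dep_2}<\frac12\le\can$ and $\frac{m}{m+1}\le\can\iff m\le\frac{\can}{1-\can}$; these two facts are exactly what make the thresholds $\frac{\can}{1-\can}$ and $\frac{1}{1-\can}$ appear. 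The one thing to double-check at the boundary --- when $\frac{\can}{1-\can}$ is an integer, or when $\can=\frac12$ --- is that the weak inequalities, combined with the convention that an indifferent lower player picks the weaker candidate, still yield non-competition; they do.
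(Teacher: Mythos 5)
Your proposal is correct and follows essentially the same route as the paper: your auxiliary claim is exactly the paper's Claim~\ref{clm:2players_split} (restated via $m \le \frac{\can}{1-\can} \iff \can \ge \frac{m}{m+1}$), and your final round-one comparison yielding the threshold $k > (k+1)\can$, i.e.\ $k > \frac{\can}{1-\can}$, is the paper's computation verbatim. The only divergence is that the paper proves the auxiliary claim without induction --- bounding $u_2(s_{k-1}(\dep_1,\dep_2+1)) \le k-1$ and $u_2(s_{k-1}(\dep_1+1,\dep_2)) \ge (k-1)\can$ and using $\frac{\dep_1+\dep_2}{\dep_2} \ge 2$ --- whereas you run an induction computing the continuation utilities exactly; both are valid, and yours has the minor advantage of delivering the exact values $u_2 = m\can$ and $u_1 = m$ that the last step needs, which the paper instead reads off from the ``never compete'' conclusion.
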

\begin{proof}
Player $2$ (which is the lower player in the game) competes in the game $G_{k,\can}(\dep,\dep)$ if:
\begin{align*}
\frac{1}{2}(1+u_2(s_{k-1}(\dep,\dep+1)))+\frac{1}{2}u_2(s_{k-1}(\dep+1,\dep)) > \can + u_2(s_{k-1}(\dep+1,\dep)).
\end{align*}
After some rearranging we get that this implies that player $2$ competes if:
\begin{align*}
1+u_2(s_{k-1}(\dep,\dep+1)) > 2\can + u_2(s_{k-1}(\dep+1,\dep)).
\end{align*}
Note that $k \leq \frac{1}{1-\can} \implies \can \geq \frac{k-1}{k}$. Thus, by Claim \ref{clm:2players_split} below we have that for any $\dep_1,\dep_2$ the players in the game $G_{k-1,\can}(\dep_1,\dep_2)$ do not compete. This implies that $u_2(s_{k-1}(\dep,\dep+1)) = k-1$ and $u_2(s_{k-1}(\dep+1,\dep)) = (k-1)\can$. Thus, the players in the game $G_{k,\can}(\dep,\dep)$ compete if $k > (k+1)\can$ implying $\frac{\can}{1-\can} <  k$ as required. 
\end{proof}

Finally we prove:
\begin{claim} \label{clm:2players_split}
If $\can \geq \frac{k}{k+1}$ then  the players in the game $G_{k,\can}(\dep_1,\dep_2)$ never compete.
\end{claim}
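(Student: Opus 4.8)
The plan is to prove the claim by induction on $k$, with inductive hypothesis ``if $\can\ge\frac{k-1}{k}$ then the players never compete in any game $G_{k-1,\can}(\cdot,\cdot)$.'' Since $\frac{k}{k+1}\ge\frac{k-1}{k}$ (equivalently $k^2\ge k^2-1$), the hypothesis $\can\ge\frac{k}{k+1}$ of the claim lets me invoke the inductive hypothesis for every $(k-1)$-round continuation subgame. By statement (\ref{eq:enum:stops-competing}) of Proposition \ref{prop:eq-and-more}, it suffices to show that the lower player does not compete in the \emph{first} round of $G_{k,\can}(\dep_1,\dep_2)$; after that it never competes again. For the base case $k=1$: competing gives the lower player expected utility $\frac{\dep_2}{\dep_1+\dep_2}\le\frac12\le\can$, which is exactly what it collects by going for the weaker candidate, so the tie-breaking rule in the definition of $s_1$ makes it go for the weaker candidate and the players do not compete.

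For the inductive step, assume $\dep_1\ge\dep_2$ so player $1$ is the higher player (the equality case is handled by the tie-break, and $\dep_1<\dep_2$ is symmetric). In the canonical strategies the higher player goes for the stronger candidate, so I only compare the lower player's two options. If it does \emph{not} compete, player $1$ hires the stronger candidate uncontested, and by the inductive hypothesis applied to $G_{k-1,\can}(\dep_1+1,\dep_2)$ (where the lower player is still strictly lower), the lower player collects $(k-1)\can$ thereafter, for a total of $\can+(k-1)\can=k\can$. If it \emph{does} compete, then with probability $\p{1}{\dep_1}{\dep_2}$ it loses, after which — again by the inductive hypothesis on $G_{k-1,\can}(\dep_1+1,\dep_2)$ — it collects exactly $(k-1)\can$; and with probability $\p{2}{\dep_1}{\dep_2}$ it wins, collecting $1$ this round plus at most $k-1$ afterwards (at most one candidate of quality $\le 1$ is hired per round). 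Hence competing is worth at most $\p{1}{\dep_1}{\dep_2}(k-1)\can+\p{2}{\dep_1}{\dep_2}\,k$.

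It therefore suffices to verify $k\can\ge\p{1}{\dep_1}{\dep_2}(k-1)\can+\p{2}{\dep_1}{\dep_2}\,k$; clearing the denominator $\dep_1+\dep_2$ and simplifying, this becomes $\can\,\dep_1\ge k(1-\can)\,\dep_2$, i.e. $\frac{\can}{1-\can}\ge k\,\frac{\dep_2}{\dep_1}$. Since $\dep_2\le\dep_1$ the right-hand side is at most $k$, so the inequality follows from $\frac{\can}{1-\can}\ge k$, which is exactly $\can\ge\frac{k}{k+1}$. Thus the lower player is weakly better off not competing, and by the tie-breaking rule it goes for the weaker candidate, completing the induction.

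The step I would be most careful about is the bound on the lower player's continuation utility after it competes: in the \emph{losing} branch the inductive hypothesis pins the continuation down to \emph{exactly} $(k-1)\can$, and it is this precise value — rather than the crude bound $k-1$ — that makes the threshold come out to exactly $\frac{k}{k+1}$ (using $k-1$ on both branches would only yield the weaker threshold $1-\frac{1}{2k}\ge\frac{k}{k+1}$). In the winning branch the reputations become $(\dep_1,\dep_2+1)$ and the identity of the lower player may flip, but there the trivial bound $k-1$ is all that is needed, so no further case analysis is required.
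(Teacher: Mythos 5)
Your proof is correct and arrives at the right threshold, but it takes a somewhat different route from the paper's. The paper's proof is not an induction on $k$: it writes down the lower player's first-round condition for competing and rearranges it so that the post-loss continuation $u_2(s_{k-1}(\dep_1+1,\dep_2))$ --- which is the same subgame the lower player faces if it declines to compete --- survives only on the non-competing side of the inequality. There a one-sided bound suffices, namely the guarantee $u_2(s_{k-1}(\dep_1+1,\dep_2))\geq (k-1)\can$ that holds because a player can always secure $\can$ per round by taking the weaker candidate; combined with the trivial bounds $u_2(s_{k-1}(\dep_1,\dep_2+1))\leq k-1$ and $\frac{\dep_1+\dep_2}{\dep_2}\geq 2$, this yields the necessary condition $k>(k+1)\can$ in a single computation valid for all $k$ at once. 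You instead induct on $k$ and use the inductive hypothesis to pin the losing-branch continuation to \emph{exactly} $(k-1)\can$; as you correctly flag, the exact value is essential in your decomposition because that term sits on the competing side, where only an upper bound helps and the crude bound $k-1$ would degrade the threshold. The cancellation in the paper's rearrangement is precisely what makes the exact value unnecessary there. The trade-off: your argument is self-contained (it re-derives the relevant continuation values rather than importing the $k\can$ guarantee of Claim \ref{clm:app:guarantee}) at the price of an extra layer of induction; the paper's is shorter but leans on that previously established fact. Both proofs close identically, invoking part (\ref{eq:enum:stops-competing}) of Proposition \ref{prop:eq-and-more} to upgrade ``does not compete in the first round'' to ``never competes.''
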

\begin{proof}
Let $\dep_2 \leq \dep_1$. Player $2$ competes in the game $G_{k,\can}(\dep_1,\dep_2)$ if:
\begin{align*}
\p{2}{\dep_1}{\dep_2}(1+u_2(s_{k-1}(\dep_1,\dep_2+1))) + \p{1}{\dep_1}{\dep_2}u_2(s_{k-1}(\dep_1+1,\dep_2)) > \can + u_2(s_{k-1}(\dep_1+1,\dep_2)).
\end{align*}
After some rearranging we get that player $2$ competes if:
\begin{align*}
1+u_2(s_{k-1}(\dep_1,\dep_2+1)) > \frac{\dep_1+\dep_2}{\dep_2}\can + u_2(s_{k-1}(\dep_1+1,\dep_2)).
\end{align*}
Observe that $u_2(s_{k-1}(\dep_1,\dep_2+1)) \leq k-1$ as this is the maximum utility a player can get in a $(k-1)$-round game. Also observe that $u_2(s_{k-1}(\dep_1+1,\dep_2)) \geq (k-1)\can$ and that by assumption $\frac{\dep_1+\dep_2}{\dep_2} \geq 2$. Thus, we have that a necessary condition for player $2$ to compete is that $k > (k+1)\can$. This implies that for $\can \geq \frac{k}{k+1}$ player $2$ does not compete in the first round of the game $G_{k,\can}(\dep_1,\dep_2)$. By part (\ref{eq:enum:stops-competing}) of Proposition \ref{prop:eq-and-more} we have that if a player does not compete in the first round of the game it also does not compete in all subsequent rounds which completes the proof. A very similar proof works for the case that player $1$ is the lower player.
\end{proof}

%!TEX root =recruiting1c.tex
\newcommand{\pmf}[3]{f_{#3}(#1,#2)} %1=k, 2=n 3=q
\newcommand{\cdf}[3]{F_{#3}(#1,#2)} %1=k, 2=n 3=q
\newcommand{\tbound} {\frac{4\ln(1/12)}{\ln(1-\ncan)}}
\newcommand{\tboundhoef} { \frac{3\ln(k)-\ln(\can-p)}{(\ncan-p)^2}}
\newcommand{\eswoverk}{1+2\can (\ncan-3\epsilon-\epsilon^2)}
\newcommand{\poabound}{\frac{\eswoverk}{1+\can}}
\newcommand{\ub}{b} 
\newcommand{\kbound}{e^{\frac{8(\ncan-\epsilon)}{\epsilon^3}} + e^{\tbound}}

\section{Analyzing the Long-Game Limit}
\label{sec:performance}
We now turn to the main question in the paper, which
is the behavior of the performance ratio in the limit
as the number of rounds goes to infinity.

Our main result here is
that as $k$ goes to infinity the performance ratio of the 
game $G_k(1,1)$ goes to $\dfrac{1+2 \can \ncan}{1+\can}$, 
where $\ncan = \min\{ \can,\frac{1}{2} \}$. In particular for $\can < 1/2$ this implies that as $k$ goes to infinity the performance ratio goes to $\dfrac{1+2\can^2}{1+\can}$. This function attains its minimum when $\can = \sqrt{1.5}-1 \approx .2247  $ and at this point it has a value of  $\frac{2}{1+\sqrt{1.5} } \approx 0.898$. 
For $\can \geq 1/2$, on the other hand,
this simply implies that as $k$ goes to infinity the performance ratio of the game $G_k(1,1)$ goes to $1$. Defining $\ncan = \min\{ \can,\frac{1}{2} \}$ helps us to present a single unified proof both for $\can<1/2$ and for $\can \geq 1/2$.

The proof of this theorem becomes somewhat involved 
even though its main idea is quite natural. 
Intuitively speaking, we know that as long as the players compete, 
our game proceeds the way an urn process does. 
This means that the probability that player $2$, for example, is the one to hire
the stronger candidate converges to a uniform distribution as the number of rounds $k$ the players compete goes to infinity. 
Henceforth, we will also refer to this probability as player's $2$ \emph{relative reputation}. 
We show that if the relative reputation of one of the players converges to a number smaller than $\ncan$, then
after a fairly 
small number of rounds -- 
specifically $\theta(\ln(k))$ -- 
the players stop competing. 
The probability that the relative reputation of one of the players converges to something less than $\ncan$ is simply $2\ncan$. Therefore, the expected social welfare of our canonical equilibrium converges to $k+2\can \ncan (k-\theta(\ln(k)))$ and the performance ratio converges to $\frac{1+2 \can \ncan}{1+\can}$.

We divide the proof to four subsections. 
In Subsection \ref{subsec:tbind} we introduce {\em $t$-binding games},
which give us a formal way to study games in which
the two players compete for at least the first $t$ rounds. 
By showing that the utilities of the players in our game are at least
as large as their utilities in the $t$-binding game 
we reduce our problem to showing that the expected utility 
in a $t$-binding game is ``large enough''. 
This is done in Subsection \ref{subsec:tesp}. The proof relies on Subsection \ref{subsec:quits} which, loosely speaking, shows that if after $t$ rounds of competition the relative reputation of the lower player 
is non-trivially smaller than $\ncan$ then the lower player stops competing. 
Finally, in Subsection \ref{subsec:wrap} we state the formal theorem and wrap up the proof.
 
 \subsection{$t$-Binding Games}
\label{subsec:tbind}
A recruiting game is {\em $t$-binding}
if in the first $t$ rounds the two players are required to compete for the stronger candidate. We denote a $t$-binding game by $G_k^t(\dep_1,\dep_2)$. We also denote by $s_k^t(\dep_1,\dep_2)$ the canonical equilibrium of the game $G^t_k(\dep_1,\dep_2)$ in which the players compete for the first $t$ rounds and then follow the strategies $s_{k-t}(\dep_1',\dep_2')$ in the resulting game.

Denote by $u(s^t_k(\dep_1,\dep_2))$ the expected social welfare of the canonical equilibrium in the game $G^t_k(\dep_1,\dep_2)$. 
It is intuitive to suspect that making the players compete for the first $t$ rounds can only decrease their utility. In the next lemma we prove that this intuition is indeed correct:
\begin{lemma} \label{lem:t:bind}
The expected social welfare of the game $G_k(1,1)$ is greater than or equal to the expected social welfare of the game $G_k^t(1,1)$; that is,
$u(s_k(1,1))\geq u(s^t_k(1,1))$.
\end{lemma}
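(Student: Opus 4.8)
The plan is to prove Lemma~\ref{lem:t:bind} by induction on $t$, showing at each step that forcing one additional round of competition at the front of the game can only decrease the expected social welfare. It suffices to prove the single-step inequality $u(s_k^t(1,1)) \geq u(s_k^{t+1}(1,1))$ for all $k$ and $t$, since chaining these from $t=0$ (where $s_k^0 = s_k$) up to the desired value gives the claim. In fact, since a $t$-binding game is just an unbinding game preceded by $t$ forced rounds of competition, and those forced rounds are symmetric in the two players when starting from $(1,1)$, it is cleanest to prove the general statement: for any reputations $\dep_1,\dep_2$ and any $k$, if the two players play a round in which both are \emph{forced} to compete and then play $s_{k-1}$ in the resulting state, the expected social welfare is at most $u(s_k(\dep_1,\dep_2))$. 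Writing $p = \p{1}{\dep_1}{\dep_2}$, the forced-competition value is $p\bigl(1 + u(s_{k-1}(\dep_1+1,\dep_2))\bigr) + (1-p)\bigl(1 + u(s_{k-1}(\dep_1,\dep_2+1))\bigr)$, and we must show this is at most $u(s_k(\dep_1,\dep_2))$.

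The key case analysis is on $f(s_k(\dep_1,\dep_2))$. If $f(s_k(\dep_1,\dep_2)) = \stratpp$, then by the recursions stated in Section~\ref{sec:prelim} the two expressions are \emph{equal}: the canonical equilibrium already has both players competing in the first round, so forcing it changes nothing. The substantive case is when the lower player does not compete in the canonical equilibrium, say $f(s_k(\dep_1,\dep_2)) = \stratpm$ (the $\stratmp$ case is symmetric). Then $u(s_k(\dep_1,\dep_2)) = (1+\can) + u(s_{k-1}(\dep_1+1,\dep_2))$, the outcome of the round in which the higher player takes the strong candidate and the lower player safely takes the weak one. We must compare this with the forced-competition value above. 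Here the crucial point is that the lower player, in the canonical equilibrium, \emph{chose} not to compete because competing did not improve \emph{its own} utility; I will need to translate this private-utility statement into a statement about \emph{social} welfare, which is where the monotonicity Claim~\ref{clm:mono} does the work: losing the competition sends the lower player to a strictly worse reputation state, which (by monotonicity applied to both coordinates) decreases the \emph{total} welfare of the continuation relative to the safe play, and this loss outweighs the reputational gain to the winner plus the immediate payoff difference $1 - \can$ from not employing the weak candidate in the contested round.

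Concretely, after rearranging, the inequality to establish reduces to something of the form
\begin{align*}
(1+\can) + u(s_{k-1}(\dep_1+1,\dep_2)) \;\geq\; 1 + p\,u(s_{k-1}(\dep_1+1,\dep_2)) + (1-p)\,u(s_{k-1}(\dep_1,\dep_2+1)),
\end{align*}
i.e. $\can + (1-p)\bigl(u(s_{k-1}(\dep_1+1,\dep_2)) - u(s_{k-1}(\dep_1,\dep_2+1))\bigr) \geq 0$. So it comes down to bounding the welfare gap between the ``higher player wins'' state and the ``lower player wins'' state. By monotonicity (Claim~\ref{clm:mono}), $u_1$ is larger at $(\dep_1+1,\dep_2)$ than at $(\dep_1,\dep_2+1)$ and $u_2$ is smaller; these move in opposite directions, so the sign of the social-welfare gap is not immediate, and I expect this to be the main obstacle. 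The resolution is to use the lower player's revealed preference: since the lower player declined to compete, $\can + u_2(s_{k-1}(\dep_1+1,\dep_2)) \geq p_2\bigl(1+u_2(s_{k-1}(\dep_1,\dep_2+1))\bigr) + p_1 u_2(s_{k-1}(\dep_1+1,\dep_2))$ where $p_2 = \p{2}{\dep_1}{\dep_2}$, which after rearrangement controls exactly the $u_2$-part of the gap; combining this with the trivial bound that the $u_1$-part of the gap is nonnegative (again Claim~\ref{clm:mono}, part 1) closes the inequality. I would also need to handle the degenerate sub-case where the continuation games themselves become $t$-binding (when unwinding the induction on $t$ rather than on a single front round), but this is bookkeeping: the same one-step comparison applies verbatim at every level because $s_k^t$ is defined by $t$ forced rounds followed by an ordinary canonical equilibrium, so peeling off forced rounds one at a time and invoking the one-step lemma at the appropriate state suffices.
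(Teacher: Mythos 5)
Your proposal is correct and follows essentially the same route as the paper: reduce to a one-step comparison between $s_{k-t}$ and $s_{k-t}^{1}$ at the (urn-distributed) state reached after $t$ forced rounds, with the $\stratpp$ case giving equality and the non-competing cases handled by the lower player's revealed preference together with monotonicity (Claim~\ref{clm:mono}). The only cosmetic difference is packaging: the paper proves the slightly stronger statement that \emph{each player's} utility is monotone in $t$ (revealed preference for the non-competing player, monotonicity for the other), whereas you sum the two and split the social-welfare gap into exactly those same two per-player pieces, so the inequality $\can+p_2\Delta u\geq p_2>0$ closes just as the paper's does.
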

\begin{proof}
We prove the lemma by proving a stronger claim:
\begin{claim} 
The expected utility of each of the players in the game $G_k^t(1,1)$ for $~0\leq t < k$ is greater than or equal to their expected utility in the game $G_k^{t+1}(1,1)$.
\end{claim}
\begin{proof}
For simplicity we prove the claim for player $2$; however the claim holds for both players. By definition, in the game $G_k^t(1,1)$ the players compete for at least the first $t$ rounds. 
During this phase of competition, the two players' reputations evolve according
to the update rule for a standard Polya urn process, as described in 
Section \ref{sec:prelim}. A standard result on that process
implies that 
at the end of these $t$ rounds with probability $\frac{1}{t+1}$ player $1$ has a reputation of $1+t-i$ and player $2$ has a reputation of $1+i$ for $0\leq i \leq t$. Thus, we have that:
\begin{align*}
u_2(s_k^t(1,1))=\frac{1}{t+1} \sum_{i=0}^t u_2(s_{k-t}(1+t-i,1+i))
\end{align*}
% \socomment{new proof:}
% jk: this looks good
Let $I_\delta = \{i| f(s_{k-t}(1+t-i,1+i)) = \delta \}$ for $\delta \in \{\langle +,+ \rangle,\langle +,- \rangle,\langle -,+ \rangle  \}$. For example, $I_{\langle +,+ \rangle}$ is the set of all indices $i$ for which the players compete in the first round of the game $G_{k-t}(1+t-i,1+i)$.

We can now write the sum, usefully, as 

\begin{align*}
u_2(s_k^t(1,1)) &=\frac{1}{t+1}\sum_{i \in I_{\langle +,+ \rangle}} u_2(s_{k-t}(1+t-i,1+i)) 
 + \frac{1}{t+1}\sum_{i \in I_{\langle +,- \rangle}} u_2(s_{k-t}(1+t-i,1+i)) \\
 &+ \frac{1}{t+1}\sum_{i \in I_{\langle -,+ \rangle}} u_2(s_{k-t}(1+t-i,1+i)) 
\end{align*}

By this partition:
\begin{itemize}
\item For $i \in I_{\langle +,+ \rangle}$, we have $u_2(s_{k-t}(1+t-i,1+i)) = u_2(s_{k-t}^1(1+t-i,1+i))$ -- since in both of these games the two players compete in the first round.
\item For $i \in I_{\langle +,- \rangle}$, we have $u_2(s_{k-t}(1+t-i,1+i)) \geq u_2(s_{k-t}^1(1+t-i,1+i))$ -- since $u_2(s_{k-t}^{\langle +,- \rangle}(1+t-i,1+i)) \geq u_2(s_{k-t}^{\langle +,+ \rangle}(1+t-i,1+i))$. (in the first round of the game player $2$ prefers going after the weaker candidate over competing).
\item For $i \in I_{\langle -,+ \rangle}$, we have $u_2(s_{k-t}(1+t-i,1+i)) > u_2(s_{k-t}^1(1+t-i,1+i))$ -- since $u_2(s_{k-t}^{\langle -,+ \rangle}(1+t-i,1+i)) = 1+ u_2(s_{k-t-1}(1+t-i+\can,1+i+1)) > u_2(s_{k-t}^{\langle +,+ \rangle}(1+t-i,1+i))$ by monotonicity.
%so
\end{itemize}
%This completes the proof as the utility of player $2$ in the game $G_k^{t+1}(1,1)$ is: 
Thus, we have $u_2(s_k^t(1,1))\geq \frac{1}{t+1} \sum_{i=0}^t u_2(s_{k-t}^1(1+t-i,1+i))=u_2(s_k^{t+1}(1,1))$.
%\begin{align*}
%u_2(s_k^t(1,1))\geq \frac{1}{t+1} \sum_{i=0}^t u_2(s_{k-t}^1(1+t-i,1+i))=u_2(s_k^{t+1}(1,1))
%\end{align*}

%\begin{align*}
%u_2(s_k^{t+1}(1,1)) &=\frac{1}{t+1}\sum_{i \in I_{\langle +,+ \rangle}} u_2(s_{k-t}^1(1+t-i,1+i)) 
% + \frac{1}{t+1}\sum_{i \in I_{\langle +,- \rangle}} u_2(s_{k-t}^1(1+t-i,1+i)) \\
% &+ \frac{1}{t+1}\sum_{i \in I_{\langle -,+ \rangle}} u_2(s_{k-t}^1(1+t-i,1+i)) 
%\end{align*}

\end{proof}
\end{proof}

\subsection{When does the lower player stop competing?}
\label{subsec:quits}
This next phase of our analysis
is composed of two parts: in the first part we show that the utility of the lower player in a $k$-round game is upper bounded by $\max\{ \ub_\can(k,t,\dep), k\can\} $ for some function $b_\can(\cdot)$ to be later defined. In the second part we compute the conditions under which $\ub_\can(k,t,\dep) < k\can$ which implies that under the same conditions the lower player in the game stops competing.

{For this subsection we denote player $1$'s reputation after $t$ rounds by $t-\dep$ and player $2$'s reputation by $\dep$. Both statements below also hold for player $1$ and the game $G_k(x,t-x)$.}

 The following notation will be useful for our proofs:
\begin{itemize}
\item $\pmf{i}{t}{\can} =  {t \choose i} \can^i(1-\can)^{t-i}  $ -- probability mass function for the binomial distribution with $t$ trials.
\item $\cdf{\dep}{t}{\can} = \sum_{i=0}^{\dep} {t \choose i} \can^i(1-\can)^{t-i}$ -- cumulative distribution function for an integer $\dep$.
\end{itemize}

The function that we use to upper bound the player's utility is:
\begin{align*} 
\ub_\can(k,t,\dep) &= \frac{\dep}{t}+3\cdf{\dep}{t}{\ncan} k + (1-3\cdf{\dep}{t}{\ncan} )(k-1)\can \\
&= \frac{\dep}{t}+(k-1)\can+3\cdf{\dep}{t}{\ncan} \cdot \big((k-1)(1-\can)+1 \big)
\end{align*}

To understand the intuition behind the upper bound function $\ub_\can(k,t,\dep)$ it is useful to look at an alternative description of the urn process. 
Under this description, we have a coin whose 
bias is sampled from a uniform distribution on $[0,1]$; then in each round the coin is tossed. 
If the coin turns up heads a blue ball is added to the urn; otherwise a red ball is added to the urn.
Under this alternative description we can think of our lower player as trying to toss this coin (i.e.~competing) in the hope that
its bias is greater than $\ncan$ (recall that $\ncan = \min \{\can, \frac 1 2 \}$). We refer to the event in which the bias of the coin is greater than $\ncan$ as a good event, and the event it is not a bad event. 
To upper-bound the player's utility we assume that if the good event happens the player wins the stronger candidate for all subsequent rounds and hence its utility is $k$. If the bad event happens then the player completely stops competing and thus its utility is $(k-1)\can$.

We show that $\max\{ \ub_\can(k,t,\dep), k\can\}$ is indeed an upper bound on the players' utility as the previous intuition suggests.

\begin{lemma}\label{lem:b-bound}
For any $k$, $\dep$ and $t>\tbound$, 
we have $u_2(s_k(t-\dep,\dep)) \leq \max\{ \ub_\can(k,t,\dep), k\can\}$. 
\end{lemma}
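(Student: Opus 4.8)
The plan is to prove the bound by induction on $k$, establishing it simultaneously for all $t>\tbound$ and all integers $0\le\dep\le t$; once this is done for player~$2$, the symmetric assertion for player~$1$ in $G_k(\dep,t-\dep)$ follows by swapping the players' labels. Two situations I would dispatch immediately. If player~$2$ does not compete in the first round of $G_k(t-\dep,\dep)$, then by part~(\ref{eq:enum:stops-competing}) of Proposition~\ref{prop:eq-and-more} it never competes, so $u_2(s_k(t-\dep,\dep))=k\can$ and the bound holds; this handles $\dep=0$ and the base case $k=1$ with $\dep\le\can t$. Also, rewriting $\ub_\can(k,t,\dep)=\frac{\dep}{t}+(k-1)\can+3\cdf{\dep}{t}{\ncan}\bigl(k-(k-1)\can\bigr)$ and using $k-(k-1)\can\ge 1$, one sees at a glance that $\ub_\can(k,t,\dep)\ge\frac{\dep}{t}+(k-1)\can$ always, and that $\cdf{\dep}{t}{\ncan}\ge\tfrac12$ already forces $\ub_\can(k,t,\dep)\ge k\ge u_2(s_k(t-\dep,\dep))$ (utility never exceeds $k$). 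So from here on we may assume player~$2$ competes in the first round and that $\cdf{\dep}{t}{\ncan}<\tfrac12$; the latter puts $\dep$ strictly below the median of $\mathrm{Bin}(t,\ncan)$, which one checks gives $\frac{\dep}{t}<\ncan\le\can$.

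\medskip
\noindent Since player~$2$ competes, the canonical-equilibrium recursion reads
\[
u_2(s_k(t-\dep,\dep))=\tfrac{\dep}{t}\bigl(1+u_2(s_{k-1}(t-\dep,\dep+1))\bigr)+\tfrac{t-\dep}{t}\,u_2(s_{k-1}(t+1-\dep,\dep)),
\]
and both sub-games have summed reputation $t+1>\tbound$, so the inductive hypothesis bounds each of the two utilities on the right by the corresponding $\max\{\ub_\can(k-1,t+1,\cdot),(k-1)\can\}$. It therefore suffices to prove the arithmetic inequality
\[
\tfrac{\dep}{t}\bigl(1+\max\{\ub_\can(k-1,t+1,\dep+1),(k-1)\can\}\bigr)+\tfrac{t-\dep}{t}\max\{\ub_\can(k-1,t+1,\dep),(k-1)\can\}\ \le\ \max\{\ub_\can(k,t,\dep),k\can\}.
\]
I would verify this by cases on which argument attains each of the two maxima on the left. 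The benign cases need only two facts: $\ub_\can(k,t,\cdot)$ is strictly increasing (its forward difference equals $\frac1t+3\pmf{\dep+1}{t}{\ncan}\bigl(k-(k-1)\can\bigr)$), and the universal bound above. Indeed, if the first maximum equals $(k-1)\can$ then by monotonicity so does the second, the left side collapses to $\frac{\dep}{t}+(k-1)\can\le\ub_\can(k,t,\dep)$, and we are done.

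\medskip
\noindent The substantive cases are those where the first maximum equals $\ub_\can(k-1,t+1,\dep+1)$. When the second maximum also equals its $\ub_\can$-term, the target becomes the ``drift'' inequality $\tfrac{\dep}{t}\bigl(1+\ub_\can(k-1,t+1,\dep+1)\bigr)+\tfrac{t-\dep}{t}\ub_\can(k-1,t+1,\dep)\le\ub_\can(k,t,\dep)$. Here I would expand the closed form: the ``position'' terms telescope since $\tfrac{\dep}{t}\cdot\tfrac{\dep+1}{t+1}+\tfrac{t-\dep}{t}\cdot\tfrac{\dep}{t+1}=\tfrac{\dep}{t}$, and the one-step binomial identities $\cdf{\dep}{t+1}{\ncan}=\cdf{\dep}{t}{\ncan}-\ncan\,\pmf{\dep}{t}{\ncan}$, $\cdf{\dep+1}{t+1}{\ncan}=\cdf{\dep}{t}{\ncan}+(1-\ncan)\pmf{\dep+1}{t}{\ncan}$, together with the ratio $\pmf{\dep+1}{t}{\ncan}/\pmf{\dep}{t}{\ncan}=\tfrac{t-\dep}{\dep+1}\cdot\tfrac{\ncan}{1-\ncan}$, collapse $\ub_\can(k,t,\dep)$ minus the left side to
\[
\Bigl(\can-\tfrac{\dep}{t}\Bigr)+3(1-\can)\,\cdf{\dep}{t}{\ncan}+\frac{3\bigl((k-1)-(k-2)\can\bigr)(t-\dep)\ncan}{t(\dep+1)}\,\pmf{\dep}{t}{\ncan},
\]
which is nonnegative because $\tfrac{\dep}{t}<\can$. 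The remaining case — a win keeps the loser hopeful, $\ub_\can(k-1,t+1,\dep+1)\ge(k-1)\can$, while a loss makes it resign, $\ub_\can(k-1,t+1,\dep)\le(k-1)\can$ — is the heart of the matter. Here the defining inequality $\tfrac{\dep+1}{t+1}+3\cdf{\dep+1}{t+1}{\ncan}\bigl((k-1)-(k-2)\can\bigr)\ge(k-1)\can$ keeps $\dep$ from being too deep in the lower tail of $\mathrm{Bin}(t,\ncan)$, and, combining it with the same binomial identities, one shows that the residual terms of order $1/t$ (coming from the ``$1+$'' in the recursion and from comparing $\tfrac{\dep+1}{t+1}$ with $\tfrac{\dep}{t}$) are swallowed by the slack $3\cdf{\dep}{t}{\ncan}\bigl(k-(k-1)\can\bigr)$ sitting inside $\ub_\can(k,t,\dep)$. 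This is exactly where $t>\tbound$ enters: it guarantees $\cdf{0}{t}{\ncan}=(1-\ncan)^{t}<12^{-4}$, and, more to the point, ensures the tail weights $\cdf{\dep}{t}{\ncan}$ and $\pmf{\dep}{t}{\ncan}$ surviving the hopeful-win constraint are large enough relative to $1/t$ for this absorption to go through.

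\medskip
\noindent I expect the only real obstacle to be this last ``hopeful win / resigning loss'' case: everything else is either immediate from Proposition~\ref{prop:eq-and-more}, a one-line manipulation of the closed form of $\ub_\can$, or the drift inequality, whose verification is the (somewhat tedious but routine) binomial bookkeeping sketched above — morally the statement that $\ub_\can$ behaves like a supermartingale along the urn dynamics. The delicate point is quantitatively pinning down how the defining inequality of that case constrains $\dep$ and then checking that the leftover $O(1/t)$ errors fit inside the slack of $\ub_\can$, which is precisely what forces the threshold to be the specific value $\tbound$.
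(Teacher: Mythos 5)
Your overall strategy -- induction on $k$ through the competition recursion at reputations $(t-\dep,\dep)$, a trivial dispatch when the tail mass $\cdf{\dep}{t}{\ncan}$ is large, and binomial bookkeeping for the rest -- is the same as the paper's, and your verification of the ``drift'' sub-case (both maxima attained by the $\ub_\can$ terms) is correct and in fact cleaner than the paper's: your two one-step identities $\cdf{\dep}{t+1}{\ncan}=\cdf{\dep}{t}{\ncan}-\ncan\,\pmf{\dep}{t}{\ncan}$ and the ratio $\pmf{\dep+1}{t}{\ncan}/\pmf{\dep}{t}{\ncan}$ are exactly the content of the paper's Claim~\ref{clm:comb}, and your closed-form residual $(\can-\tfrac{\dep}{t})+3(1-\can)\cdf{\dep}{t}{\ncan}+\tfrac{3((k-1)-(k-2)\can)(t-\dep)\ncan}{t(\dep+1)}\pmf{\dep}{t}{\ncan}$ checks out. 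The case where the winning branch is bounded by $(k-1)\can$ also collapses as you say.

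The genuine gap is the mixed case (win bounded by $\ub_\can(k-1,t+1,\dep+1)$, loss bounded by $(k-1)\can$), which you explicitly defer with a claim that $O(1/t)$ residuals are ``swallowed by the slack.'' You have also misdiagnosed where the difficulty and the threshold $\tbound$ live. In the paper this mixed case is the \emph{easy} one: under the paper's case hypothesis $\ncan>\tfrac{\dep+1}{t+1}$ it closes by the direct estimate $\tfrac{\dep}{t}\cdot\tfrac{\dep+1}{t+1}<\tfrac{\dep}{t}\can$ followed by $\tfrac{\dep}{t}\cdf{\dep+1}{t+1}{\ncan}\le\cdf{\dep}{t}{\ncan}$ (Claim~\ref{clm:comb}), with no absorption argument and no use of $\tbound$. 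The threshold $\tbound$ is needed elsewhere: in the paper's Claim~\ref{clm:tull-induction-geq}, to show $3\cdf{\dep}{t}{\ncan}\ge 1$ (hence $\ub_\can\ge k$) for the sliver of $\dep$ within $O(1)$ of the mean $\ncan t$, via the median--mean bound and a pointwise bound on two pmf terms. Your dispatch uses the cruder condition $\cdf{\dep}{t}{\ncan}\ge\tfrac12$, so your ``hard'' case still contains roughly one value of $\dep$ (namely $\dep\ge\ncan(t+1)-1$ but below the median) for which $\tfrac{\dep+1}{t+1}\ge\ncan$; there the direct computation fails by $O(1/t)$ and your sketched constraint-plus-absorption mechanism is not carried out and is not what makes the paper's proof work. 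The fix is to widen your easy case from $\cdf{\dep}{t}{\ncan}\ge\tfrac12$ to $3\cdf{\dep}{t}{\ncan}\ge1$ -- which is exactly where $t>\tbound$ is required -- after which the remaining range satisfies $\tfrac{\dep+1}{t+1}<\ncan$ and the mixed case closes by the paper's two-line argument rather than by error absorption.
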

\begin{proof}
We divide the proof into two cases. When, $\ncan \leq \dfrac{\dep+1}{t+1}$ the bound we need to prove is very loose and hence we can prove it directly. However, for $\ncan > \dfrac{\dep+1}{t+1}$ proving this bound is more tricky and for this we use an induction that some times relies on the first case. The proofs of these two cases are \onec{provided in Claim \ref{clm:tull-induction-geq} and Claim \ref {clm:tull-induction} of the appendix.}\twoc{included in the full version.}
 \end{proof}

We can now use the previous bound to compute the conditions under which the lower player prefers to stop competing.
\begin{theorem}\label{thm:stop}
In the game $G_k(t-p\cdot t,p \cdot t)$ for $p = \ncan-\epsilon$, $\eps>0$ and $t = \max\{ \tbound, \tboundhoef \}$  player $2$ does not compete at all.
\end{theorem}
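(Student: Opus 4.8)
The plan is to turn the statement ``player~$2$ never competes'' into an upper bound $u_2(s_k(t-pt,pt))\le k\can$ on player~$2$'s utility, and then to obtain that bound from Lemma~\ref{lem:b-bound} after controlling a binomial lower tail. Two earlier facts do most of the framing. First, every player can guarantee utility at least $k\can$ simply by always offering to the weaker candidate, so $u_2(s_k(\dep_1,\dep_2))\ge k\can$ in general (the observation in the proof of Claim~\ref{clm:poa32}). Second, by Claim~\ref{claim:kq-non-compete}, the equality $u_2(s_k(\dep_1,\dep_2))=k\can$ already implies that player~$2$ never competes in $G_k(\dep_1,\dep_2)$. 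Hence it is enough to prove $u_2(s_k(t-pt,pt))\le k\can$. Since $t=\max\{\tbound,\tboundhoef\}$ exceeds $\tbound$ (one takes $t$ a hair larger if it happens to equal $\tbound$), Lemma~\ref{lem:b-bound} applies with $\dep=pt$ and yields $u_2(s_k(t-pt,pt))\le\max\{\ub_\can(k,t,pt),\,k\can\}$. So the whole theorem reduces to the single inequality $\ub_\can(k,t,pt)\le k\can$, since then the maximum collapses to $k\can$ and Claim~\ref{claim:kq-non-compete} finishes it.

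To verify that inequality I would unwind the definition of $\ub_\can$. Using $pt/t=p$,
\[
\ub_\can(k,t,pt)=p+(k-1)\can+3\,\cdf{pt}{t}{\ncan}\,\big((k-1)(1-\can)+1\big),
\]
so $\ub_\can(k,t,pt)\le k\can$ is equivalent to $3\,\cdf{pt}{t}{\ncan}\,\big((k-1)(1-\can)+1\big)\le \can-p$, where $\can-p=\can-\ncan+\eps\ge\eps>0$. Because $(k-1)(1-\can)+1=k-(k-1)\can\le k$, it suffices to show $\cdf{pt}{t}{\ncan}\le(\can-p)/(3k)$. But $\cdf{pt}{t}{\ncan}$ is precisely $\Pr[\,\mathrm{Bin}(t,\ncan)\le(\ncan-\eps)t\,]$, the probability that a $\mathrm{Bin}(t,\ncan)$ variable lands at least $\eps t$ below its mean $\ncan t$; Hoeffding's inequality bounds this by $e^{-2\eps^2 t}$. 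Plugging in $t\ge\tboundhoef=(3\ln k-\ln(\can-p))/\eps^2$ gives $e^{-2\eps^2 t}\le(\can-p)^2/k^6\le(\can-p)/(3k)$ for all $k\ge 2$; the residual case $k=1$ is immediate, since in a one-round game the lower player competes only if $p>\can$, whereas here $p=\ncan-\eps<\ncan\le\can$. This proves $\ub_\can(k,t,pt)\le k\can$, and the companion statement for player~$1$ in $G_k(pt,t-pt)$ follows by the symmetric argument.

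The real content is not in this last step but in what it rests on: Lemma~\ref{lem:b-bound} (and, behind it, Lemma~\ref{lem:t:bind}) is where one justifies treating the lower player's position as an all-or-nothing gamble --- ``win the star forever'' versus ``quit forever'' --- which is exactly what makes $\ub_\can$ the right quantity to aim for. Given that lemma, the only delicate point here is the calibration of $t$: it must be large enough to satisfy the hypothesis $t>\tbound$ of Lemma~\ref{lem:b-bound}, and simultaneously large enough (of order $\eps^{-2}\ln k$) for the binomial lower tail to fall below $(\can-p)/(3k)$. That double requirement is precisely why the statement sets $t=\max\{\tbound,\tboundhoef\}$; the rest is bookkeeping.
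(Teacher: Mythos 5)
Your proposal is correct and follows essentially the same route as the paper's proof: reduce to showing $\ub_\can(k,t,p\cdot t)\le k\can$ via Lemma~\ref{lem:b-bound} and Claim~\ref{claim:kq-non-compete}, then control $\cdf{p\cdot t}{t}{\ncan}$ by Hoeffding's inequality and check that $t\ge\tboundhoef$ suffices. The only differences are cosmetic bookkeeping (you bound $(k-1)(1-\can)+1\le k$ and verify the resulting numerical inequality, while the paper solves for $t$ directly), plus your explicit handling of $k=1$, which the paper leaves implicit.
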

\begin{proof}
By Lemma \ref{lem:b-bound} we have that $u_2(s_k(t-p\cdot t,p \cdot t)) \leq  \max \big \{ \ub_\can(k,t,p\cdot t) , k\can \big \}$ for $t > \tbound$. 
Since we have that $u_2(s_k(t-p\cdot t,p \cdot t)) \geq k\can$, 
if we show that 
$\ub_\can(k,t,p\cdot t) \leq k\can$, 
then we will have $u_2(s_k(t-p \cdot t,p \cdot t)) = k \can $. 
It will then follow from Claim \ref{claim:kq-non-compete} that
the lower player (player $2$) does not compete at all.
The theorem will thus follow if we show that 
for $t = \max\{ \tbound, \tboundhoef \}$,
we have $\ub_\can(k,t,p\cdot t) \leq k\can$.

By Hoeffding's inequality with $\eps = \ncan - p$, we get that $\cdf{p \cdot t}{t}{\ncan} \leq e^{-2t(\ncan-p)^2}$. 
Now, to compute the value of $t$ for which 
$u_2(s_k(t-p \cdot t,p \cdot t)) = k \can $, we simply find 
the value of $t$ for which the following inequality holds:
\begin{align*}
&p + (k-1)\can + 3e^{-2t(\ncan-p)^2}\big((k-1)(1-\can)+1 \big) \leq k \can 
\end{align*}
After some rearranging we get that:
\begin{align*}
&3e^{-2t(\ncan-p)^2 }\big((k-1)(1-\can)+1 \big) \leq \can-p \\
&3(k-1)(1-\can)+1 \leq e^{2t(\ncan-p)^2}(\can-p) 
\end{align*}
Taking natural logarithms we get:
\begin{align*}
&\ln(3(k-1)(1-\can)+1) \leq 2t(\ncan-p)^2 +\ln(\can-p) \\
&\frac{\ln(3(k-1)(1-\can)+1)-\ln(\can-p)}{2(\ncan-p)^2} \leq  t
\end{align*}
In particular this implies that  the claim holds for $t\geq \tboundhoef$.
\end{proof}

\subsection{The Expected Social Welfare of a $t$-Binding Game}
\label{subsec:tesp}
We show that for large enough $k$ the social welfare of the $t$-binding game  $G_k^t(1,1)$ is relatively high. This is done by showing that there exists some $t$, such that after competing for $t$ rounds, with probability $2(\ncan-\epsilon) -\frac{4}{t+1}$ the players reach a game in which the lower player (either player $1$ or player $2$) does not want to compete any more. 
\begin{lemma}\label{lem:t:esw}
For every $\eps>0$ and $k \geq \kbound$, there exists $t$ such that the expected social welfare of the $t$-binding game $G_k^t(1,1)$ is at least 
$k\cdot \left( \eswoverk \right).$
\end{lemma}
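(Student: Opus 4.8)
The plan is to use the Polya-urn description of the binding phase together with Theorem~\ref{thm:stop}, and then to optimize over the choice of $t$. Since the first $t$ rounds of $G_k^t(1,1)$ are forced competition, the standard urn fact recalled in Section~\ref{sec:prelim} says that the configuration after round $t$ is $(x_1,x_2)=(1+t-i,\,1+i)$ with probability $\frac{1}{t+1}$ for each $i\in\{0,1,\dots,t\}$, and from round $t+1$ on the players follow $s_{k-t}(1+t-i,1+i)$; hence
\[
u(s_k^t(1,1)) \;=\; \frac{1}{t+1}\sum_{i=0}^{t} u\bigl(s_{k-t}(1+t-i,1+i)\bigr).
\]
I would call an index $i$ \emph{good} if the lower player of $G_{k-t}(1+t-i,1+i)$ has relative reputation at most $p:=\ncan-\epsilon$, and bound the summand separately on the good indices and on the rest.

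For a good index, I would argue that the lower player (whichever of the two it is) never competes, so that the term equals exactly $t+(k-t)(1+\can)=k+(k-t)\can$: competition for the $t$ binding rounds, each of social welfare $1$, followed by full employment, each round of social welfare $1+\can$. Theorem~\ref{thm:stop} gives non-competition when the relative reputation equals exactly $p$; I would extend it to all relative reputations $\le p$ using monotonicity (Claim~\ref{clm:mono}) — decreasing the lower player's reputation can only decrease its utility, which therefore stays pinned at $(k-t)\can$ — combined with Claim~\ref{claim:kq-non-compete}. (There is a harmless index shift here: the game reached after $t$ binding rounds from $(1,1)$ is $G_{k-t}(1+t-i,1+i)$, not literally $G_{k-t}(t-pt,pt)$; I would absorb it by reading Theorem~\ref{thm:stop} with its auxiliary round-count set to $t+2$, which only affects lower-order terms.) For every other index, since the higher player always goes for the stronger candidate in the canonical equilibrium, every round contributes social welfare at least $1$, so the term is at least $k$.

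Next I would count the good indices. Player~$2$ is the lower player for $i\le t/2$, with relative reputation $\frac{1+i}{t+2}$, and player~$1$ is the lower player for $i\ge t/2$, with relative reputation $\frac{1+t-i}{t+2}$; since $p<\frac12$ the two good ranges are disjoint and each contains at least $p(t+1)-O(1)$ indices, so the good event has probability at least $2(\ncan-\epsilon)-\frac{4}{t+1}$. Combining with the previous paragraph,
\[
u(s_k^t(1,1)) \;\ge\; k + \Bigl(2(\ncan-\epsilon)-\tfrac{4}{t+1}\Bigr)(k-t)\,\can .
\]
Finally I would take $t=\max\{\tbound,\ \tboundhoef\}$, the quantity for which Theorem~\ref{thm:stop} applies, and note $t=\Theta(\ln k)$. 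For $k\ge\kbound$ one has simultaneously $\frac{4}{t+1}\le 2\epsilon$ and $t\le 2\epsilon(1+\epsilon)k$, and substituting these into the displayed bound (using $\ncan\le\frac12$, so that $2\ncan-4\epsilon\le 1$) yields $u(s_k^t(1,1))\ge k\bigl(\eswoverk\bigr)$, which is the claim.

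The step I expect to be the main obstacle is the one that upgrades Theorem~\ref{thm:stop} from a statement about a single post-binding configuration to a statement covering the entire ``lopsided'' range of configurations: this needs the right monotonicity lemma and some care with the reputation-versus-round bookkeeping, since the game starts from $(1,1)$ rather than from a clean sequence of $t$ coin flips. The counting of good indices (disjointness, floor/ceiling slack) and the final juggling of $\epsilon$, $t$, and $k$ are routine once that is in hand.
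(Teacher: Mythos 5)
Your proposal follows essentially the same route as the paper's proof: the uniform urn distribution over post-binding configurations, counting the lopsided configurations to get probability $2(\ncan-\epsilon)-\frac{4}{t+1}$, invoking Theorem~\ref{thm:stop} to conclude the lower player stops competing there, lower-bounding the remaining configurations by $k$, and choosing $t=\Theta(\ln k)$ so the loss terms are absorbed for $k\geq\kbound$. The one place you are slightly more careful than the paper is in extending Theorem~\ref{thm:stop} from relative reputation exactly $\ncan-\epsilon$ to all smaller values (the paper simply asserts this; your monotonicity argument, or equivalently re-running the theorem with a larger $\epsilon$, fills that in), so the proposal is correct and matches the paper's argument.
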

\begin{proof}
By the assumption that the game is $t$-binding we have that both players compete over the stronger candidate for the first $t$ rounds. This implies that at the end of these $t$ rounds with probability $\frac{1}{t+1}$ player $1$ has a reputation of $1+t-i$ and player $2$ has a reputation of $1+i$ for $0\leq i \leq t$. Or, in other words, the relative reputation of player $2$ is $\frac{1+i}{t+2}$ with probability $\frac{1}{t+1}$. 

Notice that for any $0 \leq i \leq \lfloor (\ncan-\epsilon) (t+2) \rfloor -2$ it holds that $\frac{1+i}{t+2} < \ncan-\epsilon$. Thus, the probability that the relative reputation of player $2$ is smaller than $(\ncan-\epsilon)$ is 
\begin{align*}
\frac{1}{t+1}  \cdot (\lfloor (\ncan-\epsilon) (t+2) \rfloor -2 +1) 
   \geq \frac{1}{t+1} \cdot ((\ncan-\epsilon) (t+2)  -2) 
   \geq (\ncan-\epsilon) -\frac{2}{t+1}
\end{align*}

This implies that with probability of at least $(\ncan-\epsilon) -\frac{2}{t+1}$ after $t$ rounds the current game is $G_{k-t}\left( (t+2)(1-p), p \cdot (t+2) \right)$ for $p<\ncan-\eps$. Notice that by symmetry the same holds for player $1$. By choosing $t$ that obeys the requirements of Theorem \ref{thm:stop} we get that the lower player in this game does not compete. Therefore, the probability that one of the players stops competing after $t$ rounds is at least $2(\ncan-\epsilon) -\frac{4}{t+1}$. To bound the expected social welfare we make the conservative assumption that with probability $1-(2(\ncan-\epsilon) -\frac{4}{t+1})$ the players compete till the end of the game and get that:
\begin{align*}
u(s^t_k(1,1)) &\geq k + 2\can \left( (\ncan-\epsilon) -\frac{2}{t+1}\right)(k-t) 
\geq k+2\can (\ncan-\epsilon)k - 2\can (\ncan-\epsilon)t -\frac{4k\can}{t}
\end{align*}

%In the \loc, we show how to pick $k$, such that for $t=\frac{2\ln(k)-\ln(\eps)}{\epsilon^2}$ the conditions for both Theorem \ref{thm:stop} and this Lemma hold. \onec{(Claim \ref{clm:cont_binding} )}

Next, we show that for $k \geq \kbound$ and $t=\frac{4\ln(k)-\ln(\eps)}{\epsilon^2}$ the conditions for 
both Theorem \ref{thm:stop} and this Lemma hold. 
%\soedit
{Indeed, if Theorem \ref{thm:stop} holds, 
we have that for every $0<p<r-\epsilon$ the players in the game $G_{k-t}\left( (t+2)(1-p), p \cdot (t+2) \right)$ for $p<\ncan-\eps$ do not compete, as required.}
Recall that Theorem \ref{thm:stop} requires $t+2$ to be at least $\max\{ \tbound, \tboundhoef \}$. Observe that $\frac{4\ln(k)-\ln(\eps)}{\epsilon^2} \geq \tboundhoef$ as by definition $\can-p \geq r-p>\eps$; and that since $\ln(k)> \tbound$ we also have that $t\geq \tbound$.

Next, we show that $u(s^t_k(1,1)) \geq k\cdot \left( \eswoverk \right)$. We begin by plugging in $t=\frac{4\ln(k)-\ln(\eps)}{\epsilon^2}$:
\begin{align*}
u(s^t_k(1,1))&\geq k+2k\can (\ncan-\epsilon) - 2\can (\ncan-\epsilon) \cdot \frac{4\ln(k)-\ln(\epsilon)}{\epsilon^2} -\frac{4k\can}{ \frac{4\ln(k)-\ln(\epsilon)}{\epsilon^2}} \\
&> k+2k\can (\ncan-\epsilon) - 2\can (\ncan-\epsilon) \cdot \frac{4\ln(k)-\ln(\eps)}{\epsilon^2} -\frac{k\can\epsilon^2}{ \ln(k)} \\
&\geq k+2k\can (\ncan-\epsilon-\epsilon^2) - 2\can (\ncan-\epsilon) \cdot \frac{4\ln(k)}{\epsilon^2} +2\can (\ncan-\epsilon)\frac{\ln(\eps)}{\eps^2}
\end{align*}

To prove the Lemma we show that for $k \geq \kbound$ the following two inequalities hold:
%it holds that 1.~$(\ncan-\epsilon) \cdot \frac{2\ln(k)}{k\epsilon^2}<\epsilon$, 2.~$|(\ncan-\epsilon)\frac{\ln(\eps)}{k\eps^2}|<\eps$. %and 3.~ $\frac{2\ln(k)-\ln(\eps)}{\epsilon^2} \geq \tbound$.
\begin{enumerate}
\item $(\ncan-\epsilon) \cdot \frac{8\ln(k)}{k\epsilon^2}<\epsilon$: 
For this we do a variable substitution and denote $\ln (k)=z$, so that
$k=e^z$. Now we find $z$ such that $4(\ncan-\epsilon)z <  \epsilon^3 \cdot e^z$.
By Taylor expansion we have that $e^z > \frac{z^2}{2}$. Thus, we can instead compute when $4(\ncan-\epsilon)z <  \epsilon^3 \cdot \frac{z^2}{2}$ and get that the inequality holds for $z>\frac{8(\ncan-\epsilon)}{\epsilon^3}$. 
This implies that the
inequality holds for $k>e^{\frac{8(\ncan-\epsilon)}{\epsilon^3}}$.
\item $|(\ncan-\epsilon)\frac{\ln(\eps)}{k\eps^2}|<\eps$: This condition also holds for $k>e^{\frac{8(\ncan-\epsilon)}{\epsilon^3}}$ since if $k>e^{\frac{8(\ncan-\epsilon)}{\epsilon^3}}$ by Taylor expansion we have that $k>((\frac{8(\ncan-\epsilon)}{\epsilon^3})^2)/2 = \frac{16(r-\eps)^2}{\eps^6} > \frac{r-\eps}{\eps^3} \cdot |\ln(\eps)|$ and therefore $|(\ncan-\epsilon)\frac{\ln(\eps)}{k\eps^2}|<\eps$.
%\item[3.] This condition is fulfilled by requiring that $k> e^{\tbound}$.
\end{enumerate}
Thus, for  $k\geq\kbound$ and $t=\frac{4\ln(k)-\ln(\eps)}{\epsilon^2}$ we have that $u(s^t_k(1,1)) \geq k\cdot \left( \eswoverk \right)$ as required.
\end{proof}

\subsection{Wrapping up the Proof}
\label{subsec:wrap}
\begin{theorem} \label{thm:per_ratio}
For $\eps>0$ and $k\geq \kbound$, the performance ratio of the game $G_{k}(1,1)$ is at least $\poabound.$
\end{theorem}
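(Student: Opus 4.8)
The plan is to obtain Theorem~\ref{thm:per_ratio} by simply stitching together the two preceding lemmas, after recording what the social optimum is. First I would recall that in a $k$-round instance the maximum attainable social welfare is $k(1+\can)$, realized when the two players split the top two candidates in every round; consequently the performance ratio of $G_k(1,1)$ equals $u(s_k(1,1))/\big(k(1+\can)\big)$, and it suffices to prove the lower bound $u(s_k(1,1)) \geq k\cdot(\eswoverk)$ on the equilibrium welfare.

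Next I would invoke Lemma~\ref{lem:t:bind}: forcing the two players to compete over the stronger candidate for a prefix of $t$ rounds can only decrease the social welfare, so $u(s_k(1,1)) \geq u(s^t_k(1,1))$ for every $t$ with $0 \leq t < k$. Then I would apply Lemma~\ref{lem:t:esw}, which guarantees that for the given $\eps>0$ and every $k \geq \kbound$ there is a choice of $t$ (concretely $t=\frac{4\ln(k)-\ln(\eps)}{\eps^2}$, which in this range of $k$ is both a legal prefix length, i.e.\ $t<k$, and satisfies the hypotheses of Theorem~\ref{thm:stop}) such that $u(s^t_k(1,1)) \geq k\cdot(\eswoverk)$. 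Chaining these two inequalities gives $u(s_k(1,1)) \geq u(s^t_k(1,1)) \geq k\cdot(\eswoverk)$, and dividing through by $k(1+\can)$ yields precisely the claimed bound $\poabound$ on the performance ratio.

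Since all of the analytic content has already been absorbed into Lemmas~\ref{lem:t:bind} and~\ref{lem:t:esw} (and, through the latter, Theorem~\ref{thm:stop}), I do not expect any genuine obstacle in this final step. The only thing that needs care is bookkeeping: checking that the $t$ furnished by Lemma~\ref{lem:t:esw} is indeed less than $k$ so that the monotonicity bound of Lemma~\ref{lem:t:bind} applies, and confirming that the two lemmas are quantified over the same $\eps$ and the same threshold $k \geq \kbound$ — both of which are immediate from the stated ranges. If anything, the "hard part" is purely presentational: making sure the reader sees that the binding-game reduction (Lemma~\ref{lem:t:bind}) and the welfare estimate for binding games (Lemma~\ref{lem:t:esw}) compose cleanly, with no hidden dependence of $t$ on quantities that would violate $k \geq \kbound$.
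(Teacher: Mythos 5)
Your proposal is correct and follows essentially the same route as the paper's own proof: chain Lemma~\ref{lem:t:bind} with Lemma~\ref{lem:t:esw} to lower-bound $u(s_k(1,1))$ and divide by the optimum $k(1+\can)$. The extra bookkeeping you flag (that the $t$ from Lemma~\ref{lem:t:esw} is a legal prefix length) is a reasonable point of care but does not change the argument.
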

\begin{proof}
By Lemma \ref{lem:t:bind} we have that for any $t$, $u(s_k(1,1)) \geq u(s_k^t(1,1))$. By Lemma \ref{lem:t:esw} we have that there exists a $t$ such that $u(s_k^t(1,1)) \geq k \left( \eswoverk \right)$. By combining the two we get that  $u(s_k(1,1)) \geq k \left( \eswoverk \right)$. This means that the performance ratio of the game $G_{k}(1,1)$ is at least  $\frac{k \left( \eswoverk \right)}{k(1+\can)} = \poabound $.
\end{proof}

\begin{corollary}
As $k$ goes to infinity, 
the performance ratio of the game $G_k(1,1,)$ goes to $\frac{1+2\ncan \can}{1+\can}$.
\end{corollary}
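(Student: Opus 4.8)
The plan is to prove the corollary by sandwiching the performance ratio of $G_k(1,1)$ between a lower bound and an upper bound that both converge to $\frac{1+2\can\ncan}{1+\can}$ as $k\to\infty$. The lower bound is immediate from Theorem~\ref{thm:per_ratio}: for every fixed $\eps>0$ and all $k\ge\kbound$ the performance ratio of $G_k(1,1)$ is at least $\poabound$, that is, at least $\frac{1+2\can(\ncan-3\eps-\eps^2)}{1+\can}$; hence the $\liminf$ of the performance ratio as $k\to\infty$ is at least this quantity, and since $\eps>0$ was arbitrary it is at least $\frac{1+2\can\ncan}{1+\can}$.

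For the matching upper bound I would first rewrite the expected social welfare. By statement~(\ref{eq:enum:stops-competing}) of Proposition~\ref{prop:eq-and-more}, in the canonical equilibrium the rounds in which the two players both go for the stronger candidate form a prefix $\{1,\dots,T\}$ of $\{1,\dots,k\}$ for some random $T$ (with $T=k$ if they never stop), while from round $T+1$ on the higher player takes the stronger candidate and the lower player takes the weaker one. Thus in every round the stronger candidate is hired and in exactly $k-T$ rounds the weaker candidate is also hired, so the realized social welfare equals $k+\can(k-T)$. It therefore suffices to show $\Exp{k-T}\le 2\ncan\,k+o(k)$, since then the performance ratio is at most $\frac{k+2\can\ncan\,k+o(k)}{k(1+\can)}$, which tends to $\frac{1+2\can\ncan}{1+\can}$.

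To bound $\Exp{k-T}$ I would condition on the trajectory of the relative reputation $R_t$ of player~$2$ during the competing phase. As described in Section~\ref{sec:prelim}, while the players compete $R_t$ equals the blue-fraction of a P\'olya urn started from $(1,1)$, hence is a bounded martingale converging almost surely to a limit $R_\infty$ uniform on $[0,1]$. On the event $\{R_\infty\le\ncan\}\cup\{R_\infty\ge 1-\ncan\}$, which has probability exactly $2\ncan$, I would use only $k-T\le k$. On the complementary interior event, of probability $1-2\ncan$, I claim the players compete for all but $o(k)$ of the $k$ rounds, so that the conditional expectation of $k-T$ is $o(k)$; combining the two cases then gives $\Exp{k-T}\le 2\ncan\,k+(1-2\ncan)\cdot o(k)=2\ncan\,k+o(k)$. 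The interior-event claim rests on two ingredients. The first is a companion to Theorem~\ref{thm:stop}: if at some round the current lower player has relative reputation at least $\ncan+\eps$ and at least $\Theta(\log k)$ rounds remain, it strictly prefers to keep competing (the same coin-bias estimates that prove Theorem~\ref{thm:stop} now give a lower bound on the probability of the ``good event'', which makes competing worthwhile). The second is a P\'olya-urn excursion estimate: on the interior event, with conditional probability $1-o(1)$ the relative reputations of both players stay above $\ncan+\eps$ at every round $\tau$ with $\Theta(\log k)\le\tau\le k-\Theta(\log k)$; for $\tau$ below $\Theta(\log k)$ the thresholds in Theorem~\ref{thm:stop} do not yet force stopping, so competition continues regardless, and the final $\Theta(\log k)$ rounds cost only $o(k)$. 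Together these imply that on the interior event the players compete from round $\Theta(\log k)$ through round $k-\Theta(\log k)$ except on a conditional-probability-$o(1)$ event, so the conditional expectation of $k-T$ is at most $\Theta(\log k)+k\cdot o(1)=o(k)$.

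I expect the main obstacle to be the second ingredient above. Because the stopping trigger of Theorem~\ref{thm:stop} activates only at a round index that grows with $k$ (like $\log k$), one cannot simply invoke the almost-sure convergence of $R_t$; instead one needs a quantitative bound of the form $\Pr[\exists\,\tau\ge t_1:\ R_\tau\le\ncan-\eps]\to\Pr[R_\infty\le\ncan-\eps]$ as $t_1\to\infty$, at a rate compatible with $t_1=\Theta(\log k)$, and then the companion ``keep competing'' lemma must be used to convert this P\'olya-urn statement into one about the strategically-determined stopping time $T$. Once both sandwiching bounds are established, letting $k\to\infty$ and then $\eps\to 0$ yields that the performance ratio of $G_k(1,1)$ converges to $\frac{1+2\can\ncan}{1+\can}$, which equals $\frac{1+2\can^2}{1+\can}$ when $\can<\tfrac12$ and equals $1$ when $\can\ge\tfrac12$.
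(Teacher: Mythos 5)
Your lower-bound half is exactly the paper's argument: Theorem \ref{thm:per_ratio} gives the bound $\poabound$ for all $k \geq \kbound$, and letting $\eps \to 0$ shows the $\liminf$ of the performance ratio is at least $\frac{1+2\can\ncan}{1+\can}$. The paper's formal results in fact stop there; for $\can \geq \frac12$ the matching upper bound is trivial because the ratio never exceeds $1 = \frac{1+2\can\ncan}{1+\can}$, but for $\can < \frac12$ the claimed limit is strictly below $1$, so you are right that a convergence statement needs a genuine upper bound on the equilibrium welfare. Your framing of that bound --- writing the realized welfare as $k + \can(k-T)$ with $T$ the length of the competing prefix, which is well-defined by part (\ref{eq:enum:stops-competing}) of Proposition \ref{prop:eq-and-more} together with Claim \ref{clm:keeps_comp}, and reducing to $\Exp{k-T} \leq 2\ncan k + o(k)$ --- is the correct frame, and the probability computation $\Prb{R_\infty \leq \ncan} + \Prb{R_\infty \geq 1-\ncan} = 2\ncan$ is right.

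However, the upper-bound half is a plan rather than a proof, and one of its steps is a non sequitur. The two ingredients you flag --- a converse to Theorem \ref{thm:stop} asserting that a lower player with relative reputation at least $\ncan+\eps$ strictly prefers to keep competing, and a quantitative urn estimate for $\Prb{\exists\, \tau \geq t_1 : R_\tau \leq \ncan - \eps}$ at scale $t_1 = \Theta(\log k)$ --- are both left unproven and appear nowhere in the paper. The first is genuinely delicate: the value of competing is determined by the equilibrium continuation (when the player would stop in the future, and the possibility that the players exchange roles), not merely by the probability that the urn's limiting bias exceeds $\can$, so the ``same coin-bias estimates'' do not transfer directly. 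Worse, your claim that for rounds below $\Theta(\log k)$ ``the thresholds in Theorem \ref{thm:stop} do not yet force stopping, so competition continues regardless'' is backwards: that theorem gives a sufficient condition for stopping, and its inapplicability does not imply the players compete; establishing competition in those early rounds on the interior event is again an instance of the missing converse lemma. So your proposal reproduces the paper's proof of the lower-bound direction but leaves the other direction of the convergence claim open.
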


%!TEX root =recruiting1c.tex
\section{Conclusions} \label{sec:conclusions}

When firms compete for job applicants over many
hiring cycles, there is a basic strategic tension inherent in the process:
trying to recruit highly sought-after job candidates can build
up a firm's reputation, but it comes with the risk that firm will
fail to hire anyone at all.
In this paper, we have shown how this tension can arise in a
simple dynamic model of job-market matching.
Although our model is highly stylized, it contains a
number of interesting effects that we analyze,
including the way in which
competition can lead to inefficiency through underemployment (quantified in
our analysis of the performance ratio at equilibrium)
and the possibility of different modes of behavior, in which
a weaker firm may end up competing forever, or it may give up
at some point and accept its second-place status.

The model and analysis also suggest a number of directions 
for further investigation.
One direction is to vary
the {\em competition function} that determines the outcome
of a competition between the two firms when they make offers
to the same candidate. As noted above, this can be viewed as
varying the way in which candidates make decisions between
firms based on their reputations.
In Section \ref{sec:fixedp} of the appendix, we explore this issue by considering 
an alternate rule for competition in which the lower-reputation player
wins with a fixed probability $p < \frac12$ (independent of the
difference in reputation) and the higher-reputation player
wins with probability $1 - p$.

This fixed-probability competition function is simpler in structure
than the Tullock function, and it is
illuminating in that it cleanly separates two different aspects of
the strategic decision being made about future rounds.
With the Tullock function, when the lower player competes, it
has the potential for a short-term gain in its success probability
even in the next round (since the ratio of reputations will change),
and it also has the potential for a long-term gain by becoming the
higher player.  With the fixed-probability competition function,
the short-term aspect is effectively eliminated, since as long as
a player remains the lower party, it has the same probability of success;
we are thus able to study strategic behavior about competing when
the only upside is the long-range prospect of becoming the higher player.
We show that the performance is generally much better with this 
fixed-probability rule than with the Tullock function, 
providing us with further insight into
the specific way in which competition leads to inefficiency 
through a reduced performance ratio.

Other directions that lead quickly to interesting questions are
to consider the case of more than two firms, and to consider
models in which the candidates have different characteristics in
different time periods.
For both of these general directions, our initial investigations suggest
that the techniques developed here will be useful for shedding light
on the properties of more complex models that take these issues into account.

\xhdr{Acknowledgments}
We thank 
Itai Ashlagi, Larry Blume, Shahar Dobzinski, Bobby Kleinberg, and Lionel Levine for very
useful suggestions and references.

% \bibliographystyle{plain}
% \bibliography{n}

\newpage

\begin{appendix}
%!TEX root =recruiting1c.tex

 \renewcommand{\p}[3]{ % #1 player index , %2 player 1 strength, #3 player 2 strength
    \IfEqCase{#1}{%
        {1}{c(#2,#3)} %
        {2}{(1-c(#2,#3))} %
        }
        }
 \newcommand{\pd}[3]{ % #1 player index , %2 player 1 strength, #3 player 2 strength
    \IfEqCase{#1}{%
        {1}{c(#2,#3)\cdot} %
        {2}{(1-c(#2,#3))\cdot} %
        }
        }

\section{The canonical equilibrium and its Properties (includes proofs from Section \ref{sec:prelim})} \label{sec:app:eq}
Our main goal in this section is to prove that the strategies 
$s_k(\dep_1,\dep_2)$ form 
a subgame perfect equilibrium in the game $G_k(\dep_1,\dep_2)$, 
and to present the proofs of some useful properties of this equilibrium. 
The arguments can be carried out in a setting more general than that
of the Tullock competition function, and we present them for a broader
class of competition functions, specifying the probability that a
candidate chooses each firm in the event of competition between them.
We work at this greater level of generality for two reasons.
First, it makes clear what properties of the competition function 
are necessary for the equilibrium results.
Second, and more concretely, we study a variant of the model
in Section \ref{sec:fixedp} that involves a different competition function,
in which a candidate picks the lower-reputation firm with a fixed
probability $p < \frac12$ regardless of the actual numerical values of the
reputations.
This fixed-probability
competition function satisfies our more general assumptions, 
and thus we can apply all the results of this section to it.

For ease of exposition, the results in Section \ref{sec:prelim} of
the main text are presented specifically for the Tullock competition function;
as a result, to complete the link back to this section,
we state which claims here generalize each claim 
from Section \ref{sec:prelim}.

Our results hold for the following general definition of a 
competition function $c: \mathbb R \times \mathbb R \rightarrow (0,1)$,
capturing the intuitive notion that $c(x_1,x_2)$ should represent
the probability that player $1$ wins a competition when
the two players' strengths are $x_1$ and $x_2$ respectively.
\begin{definition}
A function $c: \mathbb R \times \mathbb R \rightarrow (0,1)$ is a competition function if:
\begin{itemize}
\item $\p{1}{\dep_1}{\dep_2+\eps} \leq \p{1}{\dep_1}{\dep_2} \leq \p{1}{\dep_1+\eps}{\dep_2}$.
\item For every $\dep_1 \neq \dep_2$: $\p{1}{\dep_1}{\dep_2} = \p{2}{\dep_2}{\dep_1}$.
\item $\p{1}{\dep}{\dep} \geq \p{2}{\dep}{\dep}$.
\end{itemize}
\end{definition}

With this notation in mind the utility of player $2$ for competing is now:
\begin{align*}
\pd{2}{\dep_1}{\dep_2}(1+u_2(s_{k-1}(\dep_1,\dep_2+1))) + \pd{1}{\dep_1}{\dep_2} u_2(s_{k-1}(\dep_1+1,\dep_2)).
\end{align*}

Observe that the following two properties hold for any competition function. These will be useful for later proofs:
\begin{itemize}
\item Let $\eps>0$. $\p{2}{\dep_1}{\dep_2+\eps} \geq \p{1}{\dep_2}{\dep_1}$.
\item Let $\eps>0$. $\p{2}{\dep_2}{\dep_1} \geq \p{1}{\dep_1}{\dep_2+\eps}$.
\end{itemize}

To see why the first statement holds, observe that if $\dep_2 \neq \dep_1+\eps$ we have that:
\begin{align*}
\p{2}{\dep_1}{\dep_2+\eps} = \p{1}{\dep_2+\eps}{\dep_1} \geq \p{1}{\dep_2}{\dep_1}.
\end{align*}
Else, we have that $\dep_2 \neq \dep_1$ and in this case we have that:
\begin{align*}
\p{2}{\dep_1}{\dep_2+\eps} \geq \p{2}{\dep_1}{\dep_2} = \p{1}{\dep_2}{\dep_1}.
\end{align*}
The second statement also holds for similar reasons.

%so: maybe add this to the main text when we define the strategies.
%\begin{observation} \label{obs:sk}
%If $\dep_2 \leq \dep_1$, then by the definition of $s_k(\dep_1,\dep_2)$ we have that
%$u_2(s_k(\dep_1,\dep_2)) = \max\{u_2(s_k^{\langle +,+ \rangle}(\dep_1,\dep_2)), u_2(s_k^{\langle+,-\rangle}(\dep_1,\dep_2)) \}.$ Similarly, if $\dep_1 < \dep_2$, then by the definition of $s_k(\dep_1,\dep_2)$ we have that
%$u_1(s_k(\dep_1,\dep_2)) = \max\{u_1(s_k^{\langle +,+ \rangle}(\dep_1,\dep_2)), u_1(s_k^{\langle-,+\rangle}(\dep_1,\dep_2)) \}.$  \footnote {Note that for now we cannot say anything about the utility of the higher player as it is required by $s_k(\dep_1,\dep_2)$ to go after the stronger candidate.}
%\end{observation}

We begin by showing that since the lower player in $s_k(\dep_1,\dep_2)$ chooses the strategy maximizing its utility it can always guarantee itself a utility of at least $k\can$ by always going for the weaker candidate.
%\begin{claim} \label{clm:guarantee}
%Let player $i$ be the lower player in the game $G_k(\dep_1,\dep_2)$, then, $u_i(s_k(\dep_1,\dep_2)) \geq k \can$.
%\end{claim}
%\begin{proof}
\begin{claim}\label{clm:app:guarantee} 
Let player $i$ be the lower player in the game $G_k(\dep_1,\dep_2)$. Then $u_i(s_k(\dep_1,\dep_2)) \geq k \can$:
\end{claim}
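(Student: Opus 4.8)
The plan is to induct on the number of rounds $k$, using the fact that the lower player in $s_k(\dep_1,\dep_2)$ plays a best response in every round. The base case $k=0$ is immediate, since no candidate is hired and $u_i(s_0(\dep_1,\dep_2)) = 0 = 0\cdot\can$.

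For the inductive step I would assume the claim for all games with fewer than $k$ rounds and, without loss of generality, take the lower player to be player $2$, so that $\dep_1 \geq \dep_2$ (recall ties are broken in favor of player $1$ being the higher player). By the definition of $s_k(\dep_1,\dep_2)$, in the first round player $1$ makes an offer to the stronger candidate, and player $2$ then picks the action maximizing its utility over the current and all later rounds. In particular, $u_2(s_k(\dep_1,\dep_2))$ is at least what player $2$ would get by making an offer to the weaker candidate in the first round: since player $1$ is going after the stronger candidate there is no competition, so player $2$ hires the weaker candidate for a payoff of $\can$ while its reputation stays at $\dep_2$, player $1$ hires the stronger candidate and moves to reputation $\dep_1+1$, and the remaining rounds are played according to $s_{k-1}(\dep_1+1,\dep_2)$. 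This gives
\[
u_2(s_k(\dep_1,\dep_2)) \;\geq\; \can + u_2(s_{k-1}(\dep_1+1,\dep_2)).
\]
Since $\dep_1+1 > \dep_2$, player $2$ is still the lower player in $G_{k-1}(\dep_1+1,\dep_2)$, so the inductive hypothesis yields $u_2(s_{k-1}(\dep_1+1,\dep_2)) \geq (k-1)\can$, and combining gives $u_2(s_k(\dep_1,\dep_2)) \geq \can + (k-1)\can = k\can$. The case where player $1$ is the lower player is symmetric.

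The only substantive point — and essentially the only thing requiring care — is that by always choosing the weaker candidate the lower player never raises its own reputation while the other player's reputation only grows, so the lower player stays the lower player for the entire game; this is exactly what lets the induction recurse all the way down to the base case. Note also that the argument never touches the specific form of the competition function (the lower player simply avoids every competition), so the statement holds for any competition function, as needed for the applications later in the appendix.
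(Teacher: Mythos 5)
Your proof is correct and follows essentially the same route as the paper's: induct on $k$, lower-bound the lower player's utility by the value of deviating to the weaker candidate in the first round, and observe that the lower player remains the lower player in the resulting $(k-1)$-round subgame so the inductive hypothesis applies. The only cosmetic difference is that you start the induction at $k=0$ while the paper starts at $k=1$.
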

\begin{proof}
We prove the claim for the case that $\dep_2 \leq \dep_1$ but a similar proof can be easily devised for the case that $\dep_1<\dep_2$. We prove the claim by induction on the number of rounds $k$. For the base case $k=1$, it is easy to see that the utility of player $2$ is $\max\{\p{2}{\dep_1}{\dep_2},\can\}$,
and therefore the claim holds. We assume correctness for $(k-1)$-round games and prove for $k$-round games. Observe that:
$u_2(s_k(\dep_1,\dep_2)) = \max\{u_2(s_k^{\langle +,+ \rangle}(\dep_1,\dep_2)), \can + u_2(s_{k-1}(\dep_1+1,\dep_2)) \}.$ Since player $2$ is also the lower player in the game $G_{k-1}(\dep_1+1,\dep_2)$ we can use the induction hypothesis and get that $u_2(s_{k-1}(\dep_1+1,\dep_2)) \geq (k-1) \can$ which completes the proof.
\end{proof}

Next we show that the utilities $u_i(s_k(\dep_1,\dep_2))$ are monotone increasing in player $i$'s reputation and monotone decreasing in its opponent's reputation.
\begin{claim}[generalizes Claim \ref{clm:mono}] \label{clm:app:mono} 
For any $\dep_1$, $\dep_2$, and $\eps>0$: 
\begin{enumerate}
\item  $u_1(s_k(\dep_1+\eps,\dep_2))\geq u_1(s_k(\dep_1,\dep_2))$ and $u_2(s_k(\dep_1+\eps,\dep_2))\leq u_2(s_k(\dep_1,\dep_2))$ 
\item  $u_1(s_k(\dep_1,\dep_2-\eps))\geq u_1(s_k(\dep_1,\dep_2))$ and $u_2(s_k(\dep_1,\dep_2-\eps))\leq u_2(s_k(\dep_1,\dep_2))$. 
\end{enumerate}
\end{claim}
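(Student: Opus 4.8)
The plan is to prove all four inequalities of Claim~\ref{clm:app:mono} together by induction on the number of rounds $k$, carrying along in the inductive statement a companion fact about the monotonicity of the lower player's first-round decision. The base case $k=0$ is trivial, and $k=1$ is a one-line check: the higher player gets $1$ or $\p{1}{\dep_1}{\dep_2}$ according to whether the lower player competes, the lower player gets $\max\{\p{2}{\dep_2}{\dep_1},\can\}$, and each of these is monotone in the required directions straight from the axioms on the competition function. I would fix attention on part~1 (replacing $\dep_1$ by $\dep_1+\eps$); part~2 is the same argument with the two players interchanged, the only wrinkle being the labeling convention on an exact tie, which is immaterial once $\eps>0$.

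For the inductive step, set $A=G_k(\dep_1,\dep_2)$ and $B=G_k(\dep_1+\eps,\dep_2)$, and split into three cases according to whether player~$1$ is the higher player in both $A$ and $B$, in $B$ only, or in neither. In each case, expand $u_1(A),u_2(A),u_1(B),u_2(B)$ via the recursive identities of Section~\ref{sec:prelim} by plugging in the (at most two) possible first-round profiles $\stratpp$, $\stratpm$, $\stratmp$. Two elementary observations do most of the work. First, for the higher player the profile in which the lower player declines dominates the one in which it competes, $u_i(s_k^{\stratpm}(\cdot))\ge u_i(s_k^{\stratpp}(\cdot))$; expanding the recursion, this is just $1+u_i(s_{k-1}(\text{higher rep}+1,\text{lower rep}))\ge u_i(s_{k-1}(\text{higher rep},\text{lower rep}+1))$, which follows from the inductive hypothesis on a $(k-1)$-round game. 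Second, every continuation utility appearing in the expansions is monotone in $\dep_1$ and in $\dep_2$ in the direction we want (inductive hypothesis), the win probabilities satisfy $\p{1}{\dep_1+\eps}{\dep_2}\ge\p{1}{\dep_1}{\dep_2}$, and a quantity of the shape $\alpha V+(1-\alpha)(1+V')$ is monotone in $\alpha$ (upward if $V\ge 1+V'$, downward otherwise). These facts dispose of every sub-case in which the equilibrium form of the first round is the same in $A$ as in $B$, and they also settle the case where the identity of the higher player flips: since the $u_i$ depend continuously on the reputations, it is enough to take $\eps$ infinitesimal, which confines the flip to the single point where the reputations coincide, where the only discontinuity of $u_1$ is an upward jump consistent with the claim.

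The genuinely hard sub-case is the one where player~$1$ is the higher player in both games but the lower player optimally \emph{declines} in $A$ while \emph{choosing to compete} in $B$ --- its first-round decision moving in the ``wrong'' direction as its own position worsens. Such a configuration in fact never arises, but ruling it out is exactly the difficulty, and it is what forces the arguments for $u_1$ and for $u_2$ to be intertwined: whether $u_1(B)\ge u_1(A)$ in this case rests on the fact that the lower player's choice in $A$ was optimal, which is an inequality about $u_2$, not $u_1$. My plan is to handle it by folding into the induction the assertion that the lower player's first-round decision is monotone --- it declines whenever its own reputation is smaller or the opponent's is larger --- which, when expanded through the recursion, reduces to a supermodularity-type property of $u_2$: roughly, that the lower player's utility is less sensitive to a further increase in the opponent's reputation once it is already farther behind. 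Proving this auxiliary property along the same induction is where I expect the real work to be; granted it, the bad sub-case becomes vacuous and the remaining sub-cases close exactly as in the structure-preserving analysis.
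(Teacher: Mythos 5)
Your overall architecture matches the paper's: a simultaneous induction on $k$ over all four inequalities, a direct check at $k=1$, and an inductive step driven by the fixed-profile monotonicity facts $u_i(s_k^{\delta}(\dep_1+\eps,\dep_2))\gtrless u_i(s_k^{\delta}(\dep_1,\dep_2))$ for $\delta\in\{\stratpp,\stratpm,\stratmp\}$, which is exactly the paper's Observation on fixed-profile utilities. You have also put your finger on precisely the right difficulty: the sub-case where the lower player declines in $G_k(\dep_1,\dep_2)$ yet competes in $G_k(\dep_1+\eps,\dep_2)$, in which the $u_2$ inequality still falls out of the recursion (since $u_2(s_k(\dep_1,\dep_2))\ge u_2(s_k^{\stratpp}(\dep_1,\dep_2))\ge u_2(s_k^{\stratpp}(\dep_1+\eps,\dep_2))$) but the $u_1$ inequality does not, because the competing lottery in the second game is not termwise comparable to the sure payoff $1+u_1(s_{k-1}(\dep_1+1,\dep_2))$ in the first.

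The gap is that you do not prove the one thing that makes the claim non-trivial. Your plan hinges on an auxiliary ``supermodularity-type property of $u_2$'' (equivalently, monotonicity of the lower player's first-round decision in the reputation gap), and you explicitly defer its proof. That deferred lemma \emph{is} the hard sub-case: it is a second-order statement about differences of utilities, it would need its own case analysis over first-round profiles at every level of the induction, and it is not evident that it closes under the same inductive hypothesis without further strengthening. The paper does not introduce any such lemma; it disposes of the bad configuration inside the same induction by combining the fixed-profile inequality $u_2(s_k^{\stratpp}(\dep_1,\dep_2))\ge u_2(s_k^{\stratpp}(\dep_1+\eps,\dep_2))$ with the optimality of the lower player's choice to argue that the configuration forces the lower player of the game in which competition is declined to be player $1$, reducing it to an already-handled pattern (the paper's own justification there is admittedly terse, but it does not route through supermodularity). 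A secondary problem: your reduction of the higher-player ``flip'' case to infinitesimal $\eps$ via continuity is not sound as stated --- the equilibrium decisions, and hence the $u_i$, are genuinely discontinuous across decision boundaries, and upgrading ``monotone for small $\eps$ depending on the point'' to global monotonicity needs one-sided regularity you have not supplied. This detour is also unnecessary: in the flip case with neither profile equal to $\stratpp$, the profiles are forced to be $\stratmp$ and $\stratpm$ respectively, and both inequalities then follow directly from the inductive hypothesis for arbitrary $\eps>0$.
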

\begin{proof}
We prove both properties concurrently by induction over $k$, the number of rounds in the game. Since the proofs for both properties are very similar, we only present here the proof for the first property. For the base case, $k=1$, we distinguish between the following cases cases:

\begin{enumerate}
%\item \textbf{The players behave the same in $s_1(\dep_1,\dep_2)$ and $s_1(\dep_1+\epsilon,\dep_2)$ :} if they compete 
\item {$s_1(\dep_1,\dep_2)=s_1(\dep_1+\epsilon,\dep_2)$ :} if they compete in
in both $s_1(\dep_1,\dep_2)$ and $s_1(\dep_1+\epsilon,\dep_2)$, then the claim holds simply because $\p{1}{\dep_1+\epsilon}{\dep_2} \geq \p{1}{\dep_1}{\dep_2}$ and $\p{2}{\dep_1+\epsilon}{\dep_2} \leq \p{2}{\dep_1}{\dep_2}$. Else, in both games the players have the exact same utility (either $1$ or $\can$).

\item {$s_1(\dep_1,\dep_2)\neq s_1(\dep_1+\epsilon,\dep_2)$, $s_1(\dep_1,\dep_2)\neq\stratpp$ and $s_1(\dep_1+\epsilon,\dep_2)\neq\stratpp$  :} observe that this is only possible if $\dep_1<\dep_2\leq\dep_1+\epsilon$ and in this case we have that $s_1(\dep_1,\dep_2)=\stratmp$ and $s_1(\dep_1+\epsilon,\dep_2)=\stratpm$ so it is not hard to see that the claim holds.

%\item \textbf{The players compete in $s_1(\dep_1,\dep_2)$ but do not compete in $s_1(\dep_1+\epsilon,\dep_2)$:} this 
\item {$s_1(\dep_1,\dep_2) =\stratpp$ and $s_1(\dep_1+\epsilon,\dep_2)\neq\stratpp$:} this 
implies that $\dep_1 \geq \dep_2$ since the utility player $1$ can get for competing in $G_1(\dep_1+\epsilon,\dep_2)$ is greater than the utility it can get for competing in $G_1(\dep_1,\dep_2)$: $\p{1}{\dep_1+\epsilon}{\dep_2} \geq \p{1}{\dep_1}{\dep_2} > \can$. Therefore, $u_1(s_1(\dep_1+\eps,\dep_2))\geq u_1(s_1(\dep_1,\dep_2))$. For player $2$, $u_2(s_k(\dep_1+\eps,\dep_2))\leq u_2(s_k(\dep_1,\dep_2))$, since we have that $\can <\p{2}{\dep_1}{\dep_2}$.

%\item \textbf{The players do not compete in $s_1(\dep_1,\dep_2)$ but compete in $s_1(\dep_1+\epsilon,\dep_2)$:} this 
\item {$s_1(\dep_1,\dep_2)\neq\stratpp$ and $s_1(\dep_1+\epsilon,\dep_2)=\stratpp$:} this 
implies that $\dep_2>\dep_1$. As for player $2$ its not hard to see that: $u_2(s_1^{\langle +,+ \rangle}(\dep_1,\dep_2)) \geq u_2(s_1^{\langle +,+ \rangle}(\dep_1+\epsilon,\dep_2))$. Thus the only reason that the players do not compete in $s_1(\dep_1,\dep_2)$ is that player $1$ prefers to go for the weaker candidate and it is entitled to make this choice in $s_1(\dep_1,\dep_2)$ only is $\dep_1 < \dep_2$. It is not hard to see that the claim holds for this case as well.
\end{enumerate}

We now assume that both statements $1$ and $2$ hold for $(k-1)$-round games and prove they also hold for $k$-round games. The proof takes a very similar structure to the proof for the base case, except now we shall use the induction hypothesis instead of first principles. The following observation will be useful for the proof:

\begin{observation} \label{obs:compete}
By applying the induction hypothesis we get that the following two statements hold for any $\delta \in \{\stratpp, \stratpm, \stratmp \}$:
\begin{enumerate}
\item $u_1(s_k^\delta(\dep_1+\epsilon,\dep_2)) \geq u_1(s_k^\delta(\dep_1,\dep_2))$
\item $u_2(s_k^\delta(\dep_1,\dep_2)) \geq u_2(s_k^\delta(\dep_1+\epsilon,\dep_2))$
\end{enumerate}
\end{observation}
Take for example the first statement and consider $\delta=\stratpp$. 
To see why it is indeed the case that $u_1(s_k^{\stratpp}(\dep_1+\epsilon,\dep_2)) \geq u_1(s_k^{\stratpp}(\dep_1,\dep_2))$, observe that by the induction hypothesis we have that: $u_1(s_{k-1}(\dep_1+1+\epsilon,\dep_2)) \geq u_1(s_{k-1}(\dep_1+1,\dep_2))$, $u_1(s_{k-1}(\dep_1+\epsilon,\dep_2+1)) \geq u_1(s_{k-1}(\dep_1,\dep_2+1))$, $u_1(s_{k-1}(\dep_1+\eps+1,\dep_2)) \geq u_1(s_{k-1}(\dep_1+\eps,\dep_2+1))$ and that $\p{1}{\dep_1+\epsilon}{\dep_2} \geq \p{1}{\dep_1}{\dep_2}$. Similarly, we can use the induction hypothesis to prove that the two statments are correct for any $\delta \in \{\stratpp, \stratpm, \stratmp \}$.

Just as in the base case, we now distinguish between the following cases:
\begin{enumerate}

\item {$f(s_k(\dep_1,\dep_2)))=f(s_k(\dep_1+\epsilon,\dep_2))$ :} the claim holds by Observation \ref{obs:compete} above.
%\item \textbf{The players behave the same in $f(s_k(\dep_1,\dep_2))$ and $f(s_k(\dep_1+\epsilon,\dep_2))$:} if in both 
%cases they compete then we get that the claim holds by Observation \ref{obs:compete} above. Else, the claim can be %obtained by a simple application of the induction hypothesis. 

\item {$f(s_k(\dep_1,\dep_2))\neq f(s_k(\dep_1+\epsilon,\dep_2))$, $f(s_k(\dep_1,\dep_2))\neq\stratpp$ and $f(s_k(\dep_1+\epsilon,\dep_2))\neq\stratpp$:} observe that this is only possible if $\dep_1<\dep_2\leq\dep_1+\epsilon$ and in this case we have that $f(s_k(\dep_1,\dep_2))=\stratmp$ and $f(s_k(\dep_1+\epsilon,\dep_2))=\stratpm$ so it is not hard to see that the claim holds.

%\item \textbf{The players compete in $f(s_k(\dep_1,\dep_2))$ but do not compete in $f(s_k(\dep_1+\epsilon,\dep_2))$:} 
\item {$f(s_k(\dep_1,\dep_2)) = \stratpp$ and $f(s_k(\dep_1+\epsilon,\dep_2)) \neq \stratpp$:} Similar to the corresponding case for $k=1$, observe that this implies that $\dep_1 \geq \dep_2$. As by Observation \ref{obs:compete} we have that $u_1(s_k^{\langle +,+ \rangle}(\dep_1+\epsilon,\dep_2)) \geq u_1(s_k^{\langle +,+ \rangle}(\dep_1,\dep_2))$. Thus, if the players do not compete in $s_k(\dep_1+\epsilon,\dep_2)$ it can only be because the lower player prefers not to compete and this lower player has to be player $2$. Now, by applying the induction hypothesis for player $1$ we get that $u_1(s_{k-1}(\dep_1+\epsilon +1,\dep_2)) \geq u_1(s_{k-1}(\dep_1+1,\dep_2)) \geq u_1(s_{k-1}(\dep_1,\dep_2+1))$. Thus, it is not hard to see that $u_1(s_k(\dep_1+\eps,\dep_2))\geq u_1(s_k(\dep_1,\dep_2))$.
For player $2$, since as the lower player it chooses to compete in $f(s_k(\dep_1,\dep_2))$ but not in $f(s_k(\dep_1+\epsilon,\dep_2))$ we have that: 
$$u_2(s_k(\dep_1,\dep_2) = u_2(s_k^{\langle +,+ \rangle}(\dep_1,\dep_2)) > u_2(s_k^{\langle+,-\rangle}(\dep_1,\dep_2)) \geq u_2(s_k^{\langle+,-\rangle}(\dep_1+\epsilon,\dep_2)) = u_2(s_k(\dep_1+\epsilon,\dep_2)$$ 
where the last transition is by Observation \ref{obs:compete}. 

\item {$f(s_k(\dep_1,\dep_2)) \neq \stratpp$ and $f(s_k(\dep_1+\epsilon,\dep_2)) = \stratpp$:} this is similar to the previous case only now we have that $\dep_1 < \dep_2$. The reason is that by Observation \ref{obs:compete} we have that $u_2(s_k^{\langle +,+ \rangle}(\dep_1,\dep_2)) \geq u_2(s_k^{\langle +,+ \rangle}(\dep_1+\epsilon,\dep_2))$. Therefore the lower player in the game $G_k(\dep_1,\dep_2)$ is player $1$. After establishing this, it is easy to verify that the claim holds by applying the induction hypothesis in a very similar manner to the previous case. 
\end{enumerate}
\end{proof}

\subsection{$s_k(\dep_1,\dep_2)$ is a Subgame Perfect Equilibrium in the Game $G_k(\dep_1,\dep_2) $} \label{sub:app:eq}
We prove the following three statements simultaneously by induction on the number of rounds in the game:
\begin{proposition}[generalizes Claim \ref{prop:eq-and-more}] \label{prop:app:eq-and-more}
For any integers $\dep_1,\dep_2$ and $k$ the following holds for the strategies $s_k(\dep_1,\dep_2)$.
\begin{enumerate}
\item \label{eq:app} $s_k(\dep_1,\dep_2)$ is a sub-game perfect equilibrium in the game $G_k(\dep_1,\dep_2)$.
\item \label{eq:app:enum:stops-competing} If a player does not compete in the first round of the game $G_k(\dep_1,\dep_2)$, then it does not compete in all subsequent rounds.
\item \label{eq:app:enum:stronger} The utility of the higher player in the game 
$G_k(\dep_1,\dep_2)$ is at least as large as the utility of the lower player.
\end{enumerate}
\end{proposition}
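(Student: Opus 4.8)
The plan is to prove statements \ref{eq:app}, \ref{eq:app:enum:stops-competing}, and \ref{eq:app:enum:stronger} simultaneously by induction on the number of rounds $k$, since at each level they feed into one another: the equilibrium claim for $k$ rounds uses the ``higher player does at least as well'' and ``non-competition propagates'' claims for $k-1$ rounds, and conversely. The base case $k=1$ is a direct verification. For the inductive step I would use the one-shot-deviation principle: every proper subgame of $s_k(\dep_1,\dep_2)$ is some $s_{k-1}(\dep_1',\dep_2')$, which is subgame-perfect by the inductive hypothesis for \ref{eq:app}, so it is enough to rule out a profitable one-round deviation at the root. The lower player already plays a best response at the root by the definition of $s_k$ (its move is computed against the continuation $s_{k-1}$, whose payoffs are genuine equilibrium payoffs by induction), so the only thing left to check is that the higher player's prescribed move --- going for the stronger candidate, denoted $+$ --- is also a best response.

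Write $h$ for the higher player's reputation and $\ell$ for the lower player's. If the lower player takes the weaker candidate, then $+$ yields the higher player $1+u_h(s_{k-1}(h+1,\ell))$, whereas the deviation to $-$ (now competing for the weaker candidate) yields at most $\can + u_h(s_{k-1}(h,\ell))$; since $\can<1$ and $u_h$ is monotone (Claim \ref{clm:app:mono}), $+$ is weakly better. If the lower player competes, so that $f(s_k)=\stratpp$, then with $c=c(h,\ell)\ge\tfrac12$ the no-deviation requirement rearranges to
\[
c\bigl(1+u_h(s_{k-1}(h+1,\ell))-u_h(s_{k-1}(h,\ell+1))\bigr)\ \ge\ \can .
\]
By monotonicity the parenthesized expression is at least $1$, so the inequality holds automatically when $\can\le\tfrac12$. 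The case $\can>\tfrac12$ is the heart of the argument.

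For $\can>\tfrac12$ in the competing case I would use the lower player's revealed preference. Because the tie-breaking rule lets the lower player compete only when it \emph{strictly} prefers to, one gets $(1-c)\bigl(1+u_\ell(s_{k-1}(h,\ell+1))-u_\ell(s_{k-1}(h+1,\ell))\bigr)>\can$, and since $1-c\le\tfrac12$ this forces $B:=u_\ell(s_{k-1}(h,\ell+1))-u_\ell(s_{k-1}(h+1,\ell))>2\can-1$. Writing $A:=u_h(s_{k-1}(h+1,\ell))-u_h(s_{k-1}(h,\ell+1))$ for the quantity to be bounded, a one-line computation gives $A-B=u(s_{k-1}(h+1,\ell))-u(s_{k-1}(h,\ell+1))$, the difference of two expected social welfares; so it suffices to show $u(s_{k-1}(h+1,\ell))\ge u(s_{k-1}(h,\ell+1))$, whence $A\ge B>2\can-1$ and therefore $c(1+A)>2c\can\ge\can$. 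When $h=\ell$ this is an equality, since expected social welfare is invariant under swapping the two players' labels (the welfare dynamics is symmetric even at tied states, where the tie-break does not change the distribution of which candidates get hired). When $h>\ell$ it follows from the fact that widening the reputation gap only makes the lower player weakly more inclined to stop competing, which can only increase welfare --- a monotonicity I would fold into the same induction. Statements \ref{eq:app:enum:stronger} and \ref{eq:app:enum:stops-competing} are then comparatively routine at the inductive step: for \ref{eq:app:enum:stronger}, unequal reputations give it immediately from monotonicity plus relabeling symmetry, while equal reputations are handled by writing out the $\stratpm$ and $\stratpp$ subcases and invoking the inductive \ref{eq:app:enum:stronger} together with $c(x,x)\ge\tfrac12$; for \ref{eq:app:enum:stops-competing}, after not competing in round $1$ the lower player faces strictly fewer remaining rounds and a weakly wider gap, and both changes make competing weakly less attractive, so it still declines.

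The step I expect to be the main obstacle is exactly the $\can>\tfrac12$ competing case: the intuitive statement ``the higher player's continuation advantage from winning the current round is worth at least a weaker candidate'' becomes the concrete inequality $A\ge2\can-1$, which is \emph{false} as a stand-alone monotonicity statement and only becomes true once one couples it with the lower player's decision to compete and with the monotonicity of social welfare in the reputation gap. A persistent secondary difficulty is the two tie-breaking conventions: relabeling symmetry of the \emph{strategies} breaks precisely at tied subgames, so each case split must track which player is called ``higher'', and one must notice that the symmetry nonetheless survives at the level of \emph{welfare}, which is what rescues the crux.
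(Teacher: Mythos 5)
Your overall architecture --- simultaneous induction on $k$, the one-shot deviation principle, and the reduction to the higher player's incentive to play $+$ when the lower player competes --- matches the paper's, and your handling of the easy cases ($\can\le\tfrac12$, and the lower player taking the weaker candidate) is fine. But your route through the crux is genuinely different from the paper's, and as written it has a real gap. You reduce everything to the inequality $u(s_{k-1}(h+1,\ell))\ge u(s_{k-1}(h,\ell+1))$, i.e.\ that expected social welfare is monotone in the reputation gap at fixed total reputation. This does \emph{not} follow from the utility monotonicity of Claim \ref{clm:mono}: under the move from $(h,\ell+1)$ to $(h+1,\ell)$ the two players' utilities move in \emph{opposite} directions, so nothing is known about their sum. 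The statement is, I believe, true, but proving it is its own induction of comparable weight to Claim \ref{clm:mono}: one needs (a) that non-competition is monotone in the gap (this part does follow from Claim \ref{clm:mono} together with Claim \ref{claim:kq-non-compete} and part (\ref{eq:app:enum:stops-competing}), since welfare equals $j(1+\can)$ exactly when the lower player never competes), (b) a comparison of the successor states in the both-compete case (here the two gap-$\pm1$ successors of the wider state both dominate those of the narrower state, so no coupling is needed), and (c) the welfare-symmetry lemma $u(s_j(a,b))=u(s_j(b,a))$ for the tied case $h=\ell$, which must itself be checked against the tie-breaking conventions. None of this is written down; ``a monotonicity I would fold into the same induction'' is exactly where the work lives. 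The paper instead avoids any welfare comparison: it case-splits on $f(s_{k-1}(\dep_1,\dep_2+1))$, the continuation reached after the higher player's deviation, using the fact that an SPE player can always secure $\can$ plus its continuation payoff (the $\stratpp$ case), a direct bound showing the lower player would then refuse to compete (the $\stratpm$ case), and a locking argument pinning utilities to $k\can$ (the $\stratmp$ case).

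Two smaller gaps. Your sketch of part (\ref{eq:app:enum:stops-competing}) invokes ``fewer remaining rounds makes competing weakly less attractive,'' a monotonicity in $k$ that is nowhere proved and that the paper does not need: Claim \ref{clm:weaker_stops_competing} instead derives a contradiction directly, by showing that ``compete now and then follow the round-two competing continuation'' strictly beats ``skip now, then compete.'' And your part (\ref{eq:app:enum:stronger}) for unequal reputations appeals to exact relabeling symmetry of utilities, which fails because the process can reach tied subgames where player $1$ is favored regardless of history; only the one-sided inequality $u_1(s_k(y,z))\ge u_2(s_k(z,y))$ holds, and establishing it is the content of Claims \ref{clm:player1-leads} and \ref{clm:player2-leads}, including their locking arguments for the configurations where the players' roles flip. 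In short: right skeleton and an interesting alternative attack on the key inequality, but three load-bearing monotonicity/symmetry lemmas are asserted rather than proved, and only one of them is a consequence of results you could quote.
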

%\soedit
{We separate the simultaneous induction into three numbered statements above.
Proving these three statements by simultaneous induction on $k$
means that in studying properties of
$s_k(\dep_1,\dep_2)$ and $G_k(\dep_1,\dep_2)$, we can 
assume that all three parts hold for $s_{k'}(\dep_1,\dep_2)$ 
and $G_{k'}(\dep_1,\dep_2)$ for every $0<k'\leq k-1$.

We begin by presenting the proof for part \ref{eq:app} of the proposition showing that $s_k(\dep_1,\dep_2)$ is a sub-game perfect equilibrium. As part of the induction the proof relies on the correctness of parts \ref{eq:app:enum:stops-competing} and \ref{eq:app:enum:stronger} for games of less than $k$ rounds. Next, we prove parts \ref{eq:app:enum:stops-competing} and \ref{eq:app:enum:stronger} in Claim \ref{clm:weaker_stops_competing} and Proposition \ref{prp:stronger-wins} respectively. Both proofs assume that $s_{k'}(\dep_1,\dep_2)$ is a subgame perfect equilibrium for every $0<k'\leq k-1$.}

%We begin by presenting the proof that $s_k(\dep_1,\dep_2)$ is a sub-game perfect equilibrium. As part of the induction 
%the proof relies on Claim \ref{clm:weaker_stops_competing} and Proposition \ref{prp:stronger-wins} below, which in turn assume that $s_{k'}(\dep_1,\dep_2)$ is a subgame perfect equilibrium for every $0<k'\leq k-1$. These are also the proofs that once we have shown that $s_k(\dep_1,\dep_2)$ is a subgame perfect equilibrium would show that the first two items hold.
\begin{proposition}\label{prop:app:eq}
The strategies described by $s_k(\dep_1,\dep_2)$ are a subgame perfect equilibrium in the game $G_k(\dep_1,\dep_2)$ for every two integers $\dep_1,\dep_2$.
\end{proposition}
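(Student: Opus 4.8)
The plan is to prove all three parts of Proposition~\ref{prop:app:eq-and-more} together by induction on the number of rounds $k$, with Claim~\ref{clm:app:mono} (monotonicity of the $u_i(s_k(\cdot,\cdot))$ in the reputations) already available; the substantive part is Part~\ref{eq:app}, that $s_k(\dep_1,\dep_2)$ is subgame perfect. Since the horizon is finite and, by the inductive hypothesis, $s_{k-1}(\dep_1',\dep_2')$ is subgame perfect for every $\dep_1',\dep_2'$, it suffices to rule out a profitable deviation in the \emph{first} round of $G_k(\dep_1,\dep_2)$: the inductive hypothesis makes all continuation utilities well defined, makes the lower player's round-$1$ choice (the $\arg\max$ in the definition of $s_k$) a genuine best response to ``the higher player plays $+$'', and guarantees no one can gain by deviating inside the subgame that is reached. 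So the entire content is: given the lower player's round-$1$ action, the higher player weakly prefers $+$. (The base case $k=1$, and the base cases for Parts~\ref{eq:app:enum:stops-competing} and~\ref{eq:app:enum:stronger}, are direct checks.)

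By symmetry assume player~$1$ is the higher player, with $\dep_1\ge\dep_2$ and win probability $w:=\dep_1/(\dep_1+\dep_2)\ge\frac12$. If the lower player's best response to $+$ is $-$ (the players split), then playing $+$ gives player~$1$ the value $1+u_1(s_{k-1}(\dep_1+1,\dep_2))$, while deviating to $-$ puts both players in a contest for the weaker candidate and yields $w\can + u_1(s_{k-1}(\dep_1,\dep_2))$; the former dominates because $u_1(s_{k-1}(\dep_1+1,\dep_2))\ge u_1(s_{k-1}(\dep_1,\dep_2))$ by Claim~\ref{clm:app:mono} and $1\ge\can\ge w\can$. The real case is when the lower player's best response to $+$ is also $+$, so the players compete; here what must be shown is $w(1+A-B)\ge\can$, where $A=u_1(s_{k-1}(\dep_1+1,\dep_2))$ and $B=u_1(s_{k-1}(\dep_1,\dep_2+1))$ are player~$1$'s continuation utilities when it wins, resp.\ loses, the round. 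To get this, I would use that the lower player in fact \emph{chose} to compete: writing $A'=u_2(s_{k-1}(\dep_1,\dep_2+1))$ and $B'=u_2(s_{k-1}(\dep_1+1,\dep_2))$, optimality of that choice says exactly $(1-w)(1+A'-B')\ge\can$. When $\dep_1=\dep_2$ the two games $G_{k-1}(\dep_1+1,\dep_2)$ and $G_{k-1}(\dep_1,\dep_2+1)$ are mirror images, so $A'=A$, $B'=B$, $w=\frac12$, and this is literally the desired inequality. When $\dep_1>\dep_2$, using $w\ge 1-w$ together with $A\ge B$ and $A'\ge B'$ (both from Claim~\ref{clm:app:mono}), it suffices to show $A-B\ge A'-B'$; moving $A'$ and $B$ across, this is precisely
\[
u_1(s_{k-1}(\dep_1+1,\dep_2))+u_2(s_{k-1}(\dep_1+1,\dep_2))\ \ge\ u_1(s_{k-1}(\dep_1,\dep_2+1))+u_2(s_{k-1}(\dep_1,\dep_2+1)),
\]
i.e.\ $u(s_{k-1}(\dep_1+1,\dep_2))\ge u(s_{k-1}(\dep_1,\dep_2+1))$: social welfare is at least as large when this round's reputation point goes to the currently-higher player rather than to the currently-lower player.

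This last inequality — social welfare is weakly monotone in the gap between the two reputations, for a fixed total — is the step I expect to be the main obstacle, and I would prove it by an induction of the same flavor (adding it as a fourth clause of the simultaneous induction, or separately). The useful handle is that on the equilibrium path of $s_k$ the higher player always takes the stronger candidate, so every round adds $1$ to social welfare if the players compete and $1+\can$ if they split; hence $u(s_k(\dep_1,\dep_2))=k(1+\can)-\can\cdot\mathbb{E}[\#\text{ rounds the players compete}]$, and it is enough to show that a wider starting gap produces weakly fewer competing rounds in expectation. One can couple two runs through the same urn draws, one from $(\dep_1+1,\dep_2)$ and one from $(\dep_1,\dep_2+1)$: both keep the same total reputation forever and the wider run has a weakly larger gap at every matched step; the lower player in the wider configuration is weakly worse off (Claim~\ref{clm:app:mono}), hence weakly more inclined to stop competing, and once either player stops it never resumes (Part~\ref{eq:app:enum:stops-competing}, available for $k-1$), so the wider run stops competing no later than the narrower one. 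The genuinely delicate point — and, I suspect, why the authors flag the equilibrium proof as non-trivial — is formalizing ``worse off $\Rightarrow$ stops no later'': one must compare across the two configurations the lower player's gain from competing over conceding, which rearranges to $\frac{\dep_2}{\dep_1+\dep_2}\bigl(1+u_2(s_{k-1}(\dep_1,\dep_2+1))-u_2(s_{k-1}(\dep_1+1,\dep_2))\bigr)-\can$, and show it is monotone in the gap, which again leans on Claim~\ref{clm:app:mono} and the inductive hypotheses. If one prefers to stay with the recursion for $u(s_k)$ rather than a coupling, the same monotonicity lemma falls out cleanly in every first-round case except the one where the wider-gap game competes while the narrower-gap game splits — and ruling that case out is exactly the monotonicity-of-competing-incentive statement just described, so this is the crux no matter how the bookkeeping is arranged.
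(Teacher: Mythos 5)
Your overall architecture (simultaneous induction on $k$, reduction to a first-round deviation check for the higher player, splitting on whether the lower player's best response is $-$ or $+$) matches the paper, and your handling of the split case is fine. But the core case --- both players compete and you must show $w(1+A-B)\ge \can$ --- has two genuine gaps. First, for $\dep_1=\dep_2$ you assert that $G_{k-1}(\dep_1+1,\dep_2)$ and $G_{k-1}(\dep_1,\dep_2+1)$ are exact mirror images, so $A=A'$ and $B=B'$. They are not: the tie-breaking rule (``if reputations are equal, player $1$ is the higher player'') breaks the player-swap symmetry in every subgame where the reputations re-equalize, so at best one gets inequalities relating $u_1$ of one game to $u_2$ of the mirrored game. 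This is exactly why the paper proves Claims~\ref{clm:player1-leads} and~\ref{clm:player2-leads} (via nontrivial ``locking arguments''), and even those give $A\ge A'$ while leaving the comparison of $B$ and $B'$ pointing in an unhelpful direction, so your inequality $A-B\ge A'-B'$ does not follow from them. Second, for $\dep_1>\dep_2$ your argument bottoms out in the welfare-monotonicity lemma $u(s_{k-1}(\dep_1+1,\dep_2))\ge u(s_{k-1}(\dep_1,\dep_2+1))$, which you do not prove. Your coupling sketch for it requires that the lower player's net incentive to compete, $\frac{\dep_2}{\dep_1+\dep_2}\bigl(1+u_2(s_{k-1}(\dep_1,\dep_2+1))-u_2(s_{k-1}(\dep_1+1,\dep_2))\bigr)-\can$, be monotone in the reputation gap. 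That is a second-order property of the value functions: Claim~\ref{clm:app:mono} controls each $u_2$ term separately but says nothing about the monotonicity of their \emph{difference}, so the ``worse off $\Rightarrow$ stops no later'' step is not established. You correctly identify this as the crux, but identifying it is not the same as closing it.

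For comparison, the paper avoids both obstacles by arguing by contradiction and doing a case analysis on the continuation game reached when the higher player \emph{loses}, i.e.\ on $f(s_{k-1}(\dep_1,\dep_2+1))$. If that continuation is $\stratpp$, a direct chain of inequalities --- using only Claim~\ref{clm:app:mono} and the fact that in an SPE of the $(k-1)$-round game a player can always secure $\can$ plus its continuation by conceding --- shows competing now dominates $\stratmp$. If it is $\stratpm$, Corollary~\ref{cor:weaker_stops_competing} pins the lower player's continuation value at $(k-1)\can$ and one concludes the lower player would not have competed in round one, contradicting $f(s_k(\dep_1,\dep_2))=\stratpp$. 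If it is $\stratmp$, then $\dep_1=\dep_2$ and Proposition~\ref{prp:stronger-wins} together with Claim~\ref{clm:app:guarantee} forces $u_1(s_k^{\stratmp}(\dep_1,\dep_2))>k\can$ while the same value equals $k\can$, a contradiction. This route never needs a welfare-monotonicity or gap-monotonicity statement; the price is that the player-swap inequalities and the ``once you stop competing you never resume'' property must be carried along in the same simultaneous induction. If you want to salvage your approach, you would need to prove your welfare-monotonicity lemma (including the tie-breaking-aware version of the symmetric case) as an additional clause of the induction, and that appears to be at least as hard as the paper's auxiliary claims.
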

\begin{proof}
We prove the claim by induction on $k$ the number of rounds. We only present the proof for $\dep_1 \geq \dep_2$ as the proof for the case that $\dep_1 < \dep_2$ is very similar. For the base case $k=1$, if $\p{1}{\dep_1}{\dep_2} > \can$, then it is clearly the case that player $1$ competes. Else, since $\dep_1\geq \dep_2$ we have that $\p{2}{\dep_1}{\dep_2} \leq \p{1}{\dep_1}{\dep_2} \leq \can$, therefore player $2$ does not want to compete as well. As clearly player $1$ cannot benefit from competing over the weaker candidate, this implies that there exists an equilibrium in which player $1$ goes for the stronger candidate. 
%Similarly for the case that $\dep_1 < \dep_2$. If $\p{2}{\dep_1}{\dep_2} \geq \can$, then it is clearly the case that player $2$ competes. Else, since $\dep_1 < \dep_2$ we have that $\p{1}{\dep_1}{\dep_2} < \p{2}{\dep_1}{\dep_2} < \can$, therefore player $1$ does not want to compete as well implying that there exists an equilibrium in which $\dep_2$ goes for the stronger candidate. 

Next, we assume correctness for $k'$-round games for any $0<k'\leq k-1$ and prove for $k$-round games. This means we can apply Corollary \ref{cor:weaker_stops_competing} and get that once the lower player prefers to go for the weaker candidate it completely stops competing and Proposition \ref{prp:stronger-wins} to get that the higher player always has greater utility than the lower player.

Assume towards contradiction that in the \emph{unique} equilibrium for the first round of the game $G_k(\dep_1,\dep_2)$ player $1$ goes after the weaker candidate and player $2$ goes after the stronger candidate. Thus, player $1$'s utility is $u_1(s_k^{\langle -,+ \rangle}(\dep_1,\dep_2)) = \can + u_1(s_{k-1}(\dep_1,\dep_2+1))$.
%\begin{align*}
%u_1(s_k^{\langle -,+ \rangle}(\dep_1,\dep_2)) &= \can + u_1(s_{k-1}(\dep_1,\dep_2+1)) 
%\end{align*}

We first observe that by monotonicity if player $1$ prefers to go for the weaker candidate over competing, it has to be the case that $\p{1}{\dep_1}{\dep_2} < \can$. We also observe that it has to be the case that $f(s_k(\dep_1,\dep_2)) =\langle +,+ \rangle$. As it is easy to see that in the case of $f(s_k(\dep_1,\dep_2)) = \langle +,- \rangle$ player $1$ prefers to go for the stronger candidate over competing for the \emph{weaker} candidate.\footnote{Observe that this holds since $\can + u_1(s_{k-1}(\dep_1+\can,\dep_2)) < 1 + u_1(s_{k-1}(\dep_1+1,\dep_2))$ by monotonicity. This is a well defined use of Claim \ref{clm:app:mono} since it does not assume the players' reputations to be integers.}

We now distinguish between $3$ possible scenarios in $s_{k-1}(\dep_1,\dep_2+1)$ which is by the induction hypothesis a subgame perfect equilibrium in $G_{k-1}(\dep_1,\dep_2+1)$:
\begin{enumerate}
\item $f(s_{k-1}(\dep_1,\dep_2+1)) = \langle +,+ \rangle$: observe that in this case player $1$ prefers to compete in the first round of the game $G_k(\dep_1,\dep_2)$ since by monotonicity $u_1(s_{k-2}(\dep_1+1,\dep_2+1)) \geq u_1(s_{k-2}(\dep_1,\dep_2+2))$\footnote{To handle the case of $k=2$ we define $u_i(s_0(\dep_1,\dep_2)) =0$. }; by the induction hypothesis we have that $s_{k-1}(\dep_1,\dep_2+1)$ and $s_{k-1}(\dep_1+1,\dep_2)$ are subgame perfect equilibria we have that $u_1(s_{k-1}(\dep_1+1,\dep_2)) \geq \can +u_1(s_{k-2}(\dep_1+1,\dep_2+1))$ and $u_1(s_{k-1}(\dep_1,\dep_2+1))\geq \can + u_1(s_{k-2}(\dep_1,\dep_2+2))$; and $\p{1}{\dep_1}{\dep_2} \geq \p{1}{\dep_1}{\dep_2+1}$. Thus, we have that: \begin{align*}
u_1(&s_k^{\langle +,+ \rangle}(\dep_1,\dep_2)) =\pd{1}{\dep_1}{\dep_2}(1+u_1(s_{k-1}(\dep_1+1,\dep_2)))+\pd{2}{\dep_1}{\dep_2}u_1(s_{k-1}(\dep_1,\dep_2+1)) \\
 &\geq\pd{1}{\dep_1}{\dep_2}(1+(\can + u_1(s_{k-2}(\dep_1+1,\dep_2+1))))+\pd{2}{\dep_1}{\dep_2}(\can + u_1(s_{k-2}(\dep_1,\dep_2+2))) \\
&\geq \can + \pd{1}{\dep_1}{\dep_2+1}(1+u_1(s_{k-2}(\dep_1+1,\dep_2+1)))+\pd{2}{\dep_1}{\dep_2+1}u_1(s_{k-2}(\dep_1,\dep_2+2)) \\
&= \can + u_1(s_{k-1}^{\langle +,+ \rangle}(\dep_1,\dep_2+1)) = u_1(s_k^{\langle -,+ \rangle}(\dep_1,\dep_2))
\end{align*}
Therefore $\stratmp$ is not an equilibrium for the first round of the game $G_k(\dep_1,\dep_2)$.
\item $f(s_{k-1}(\dep_1,\dep_2+1)) = \langle +,- \rangle$: this implies that $\dep_1 \geq \dep_2+1$. We can apply Corollary \ref{cor:weaker_stops_competing} and get that for player $2$, $u_2(s_{k-1}(\dep_1,\dep_2+1)) = (k-1) \can$. Since by monotonicity we have that $(k-1) \can = u_2(s_{k-1}(\dep_1,\dep_2+1)) \geq u_2(s_{k-1}(\dep_1+1,\dep_2))$ we get that:
\begin{align*}
u_2(s_k^{\langle +,+ \rangle}(\dep_1,\dep_2)) \leq \p{2}{\dep_1}{\dep_2} +(k-1)\can < k\can
\end{align*}
Where the last transition is due to the fact that $\p{2}{\dep_1}{\dep_2} \leq \p{1}{\dep_1}{\dep_2} <\can$. Hence, by Claim \ref{clm:app:guarantee} the lower player (player $2$) does not want to compete in the game $G_k(\dep_1,\dep_2)$.
Therefore $\stratmp$ is not the unique equilibrium for the first round of the game $G_k(\dep_1,\dep_2)$.
%This is in contradiction to Claim \ref{clm:app:guarantee} that for the lower player (player $2$) $u_2(s_k(\dep_1,\dep_2)) \geq k \can$.

\item $f(s_{k-1}(\dep_1,\dep_2+1)) = \langle -,+ \rangle$: This implies that $\dep_1 = \dep_2$ therefore we can apply Corollary \ref{cor:weaker_stops_competing} for player $1$ in the game $G_{k-1}(\dep_1,\dep_2+1)$ and get that $u_1(s_{k}^{\langle -,+ \rangle} (\dep_1,\dep_2)) =\can + u_1(s_{k-1}(\dep_1,\dep_2+1)) = k\can$. We now have the following chain of inequalities, by applying Proposition \ref{prp:stronger-wins}:
\begin{align*}
u_1(s_{k}^{\langle -,+ \rangle} (\dep_1,\dep_2)) > u_1(s_{k}^{\langle +,+ \rangle} (\dep_1,\dep_2)) = u_1(s_{k}(\dep_1,\dep_2)) \geq u_2(s_{k}(\dep_1,\dep_2)) \geq k \can
\end{align*}
The last transition is due to the fact that player $2$ is the lower player in the game $G_k(\dep_1,\dep_2)$ and thus we can use Claim \ref{clm:app:guarantee}. This is of course in contradiction to the fact that $u_1(s_{k}^{\langle -,+ \rangle} (\dep_1,\dep_2)) = k\can$.
\end{enumerate}
\end{proof}

\subsubsection{Proof of Part (\ref{eq:app:enum:stops-competing}) of Proposition \ref{prop:app:eq-and-more}}
We show that if a player prefers not compete in the first round of the game $G_k(\dep_1,\dep_2)$, then it does not compete in all subsequent rounds. This is done by showing that if a player does not compete in the first round of a game, then it does not compete in the second round. The proof assumes that $s_{k-1}(\dep_1,\dep_2)$ is a subgame perfect equilibrium as it used as part of the induction. 
%However, once we have proven that $s_{k}(\dep_1,\dep_2)$ is a subgame perfect equilibrium this assumption is no longer required as it always holds.

Formally we show:
\begin{claim} \label{clm:weaker_stops_competing}
If $s_{k-1}(\dep_1,\dep_2)$ is a subgame perfect equilibrium for every $\dep_1$ and $\dep_2$, then
\begin{itemize}
\item $f(s_k(\dep_1,\dep_2)) = \langle +, - \rangle \implies f(s_{k-1}(\dep_1+1,\dep_2)) = \langle +, - \rangle.$
\item $f(s_k(\dep_1,\dep_2)) = \langle -, + \rangle \implies f(s_{k-1}(\dep_1,\dep_2+1)) = \langle -, + \rangle.$
\end{itemize}
\end{claim}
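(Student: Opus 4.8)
The plan is to reduce to the first implication (the second is its mirror image under swapping the two players, which the symmetry axioms of the competition function make legitimate), and to prove it by carrying along the induction the slightly stronger invariant
$$(\mathrm{S}):\qquad f(s_k(\dep_1,\dep_2)) = \stratpm \ \Longrightarrow\ u_2(s_k(\dep_1,\dep_2)) = k\can .$$
Given $(\mathrm{S})$ for $k$, the claim for $k$ follows at once: if $f(s_k(\dep_1,\dep_2)) = \stratpm$ then the players do not compete in round~$1$, so $u_2(s_k(\dep_1,\dep_2)) = \can + u_2(s_{k-1}(\dep_1+1,\dep_2))$, and $(\mathrm{S})$ forces $u_2(s_{k-1}(\dep_1+1,\dep_2)) = (k-1)\can$; Claim~\ref{claim:kq-non-compete} then says player~$2$ never competes in $G_{k-1}(\dep_1+1,\dep_2)$, and since player~$1$ is still the higher player there this is exactly $f(s_{k-1}(\dep_1+1,\dep_2)) = \stratpm$. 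So it suffices to prove $(\mathrm{S})$ by induction on $k$, using freely (as the hypothesis of the claim and the enclosing simultaneous induction permit) that $s_{k'}$ is a subgame perfect equilibrium for $k'<k$ and that $(\mathrm{S})$ holds for smaller $k$.

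For the inductive step, assume $f(s_k(\dep_1,\dep_2)) = \stratpm$, so $\dep_1 \ge \dep_2$, player~$1$ goes for the stronger candidate and player~$2$ for the weaker one; peeling off round~$1$ gives $u_2(s_k(\dep_1,\dep_2)) = \can + u_2(s_{k-1}(\dep_1+1,\dep_2))$, which by Claim~\ref{clm:app:guarantee} is at least $k\can$, so it remains only to exclude $u_2(s_{k-1}(\dep_1+1,\dep_2)) > (k-1)\can$. Since in $G_{k-1}(\dep_1+1,\dep_2)$ player~$1$ is strictly higher and (canonically) goes for the stronger candidate, this can only happen if $f(s_{k-1}(\dep_1+1,\dep_2)) = \stratpp$, i.e.\ player~$2$ strictly prefers to compete there; I would assume this and reach a contradiction. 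Write out three facts: (B) the inequality saying player~$2$ weakly prefers the weaker candidate in $G_k(\dep_1,\dep_2)$; (A) the inequality saying player~$2$ strictly prefers to compete in $G_{k-1}(\dep_1+1,\dep_2)$, together with the resulting closed form $u_2(s_{k-1}(\dep_1+1,\dep_2)) = \gamma(1+X) + (1-\gamma)Y$, where $\gamma$ is player~$2$'s winning probability against reputation $\dep_1+1$ and $X = u_2(s_{k-2}(\dep_1+1,\dep_2+1))$, $Y = u_2(s_{k-2}(\dep_1+2,\dep_2))$ are its two $(k-2)$-round continuation utilities; and (L) the lower bound $u_2(s_{k-1}(\dep_1,\dep_2+1)) \ge \can + X$. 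A short rearrangement turns (A) into $1+X-Y > \can/\gamma$, and substituting the closed form together with (L) into (B) and simplifying turns (B) into $1+X-Y \le \can(1-\delta)/\big(\delta(1-\gamma)\big)$, where $\delta$ is player~$2$'s winning probability against reputation $\dep_1$. These two bounds are incompatible precisely when $\delta \ge \gamma$, and $\delta \ge \gamma$ holds because the competition function makes player~$2$'s winning probability nonincreasing in player~$1$'s reputation; this is the contradiction.

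The one genuinely delicate ingredient is the lower bound (L). When $\dep_1 \ge \dep_2 + 1$ the state $(\dep_1,\dep_2+1)$ still has player~$1$ as the (weakly) higher player, so player~$1$ goes for the stronger candidate and player~$2$ can guarantee $\can + u_2(s_{k-2}(\dep_1+1,\dep_2+1)) = \can + X$ merely by taking the weaker candidate, so (L) is immediate. The hard case is $\dep_1 = \dep_2$, where $(\dep_1,\dep_2+1)$ flips the identity of the higher player: now player~$2$ is higher and plays for the stronger candidate, and (L) must be re-derived by splitting on the (now lower) player~$1$'s choice — the branch in which player~$1$ declines to compete follows from monotonicity of $u_2$ (Claim~\ref{clm:app:mono}), and the branch in which player~$1$ competes is handled by a direct comparison of the two candidates' values using the competition-function inequalities together with the domination of the higher player's utility (Proposition~\ref{prp:stronger-wins}). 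Getting this boundary case right, and threading the tie-breaking convention cleanly through (A), (B) and $(\mathrm{S})$, is where essentially all of the work lies; the rest is the bookkeeping sketched above.
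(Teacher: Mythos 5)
Your proposal is correct and is essentially the paper's own argument in lightly repackaged form: the paper likewise assumes for contradiction that the continuation game has first-round profile $\stratpp$, and combines the strict inequality for competing there (your (A)), the weak inequality for not competing in $G_k(\dep_1,\dep_2)$ (your (B)), the deviation lower bound $u_2(s_{k-1}(\dep_1,\dep_2+1)) \geq \can + u_2(s_{k-2}(\dep_1+1,\dep_2+1))$ (your (L)), and monotonicity of the competition function $\p{2}{\dep_1}{\dep_2} \geq \p{2}{\dep_1+1}{\dep_2}$ to reach the same contradiction, just organized as a direct comparison $u_2(s_k^{\stratpp}(\dep_1,\dep_2)) > u_2(s_k(\dep_1,\dep_2))$ rather than as two incompatible bounds on $1+X-Y$. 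The only structural difference is that you route the induction through the invariant $(\mathrm{S})$, which is exactly the paper's Corollary \ref{cor:weaker_stops_competing} derived there as a consequence of the claim rather than as the induction hypothesis.
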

\begin{proof}
% \socomment{new proof: }
We prove the first statement of the claim as the proof of the second statement is very similar. 
Assume towards a contradiction that $f(s_k(\dep_1,\dep_2)) = \langle +, - \rangle$ but $f(s_{k-1}(\dep_1+1,\dep_2)) \neq \langle +, - \rangle$. The fact that $f(s_k(\dep_1,\dep_2)) = \langle +, - \rangle$ implies that $\dep_2\leq \dep_1$; 
thus if $f(s_{k-1}(\dep_1+1,\dep_2)) \neq \langle +, - \rangle$ it has to be the case that $f(s_{k-1}(\dep_1+1,\dep_2)) = \langle +, + \rangle$. Therefore, player $2$'s utility is:
\begin{align*} %\label{eq:not-comp}
u_2(s_k(\dep_1,\dep_2)) = \can + \pd{2}{\dep_1+1}{\dep_2} (1+u_2(s_{k-2}(\dep_1+1,\dep_2+1)))+\pd{1}{\dep_1+1}{\dep_2} u_2(s_{k-2}(\dep_1+2,\dep_2))
\end{align*}

Observe that the following holds:
\begin{enumerate}
\item $f(s_{k-1}(\dep_1+1,\dep_2)) = \langle +, + \rangle \implies u_2(s_{k-1}(\dep_1+1,\dep_2)) > \can +  u_2(s_{k-2}(\dep_1+2,\dep_2))$.
\item $u_2(s_{k-1}(\dep_1,\dep_2+1)) \geq \can +  u_2(s_{k-2}(\dep_1+1,\dep_2+1))$.
\end{enumerate}
The first of these
statements holds since player $2$ is the lower player in the game $G_{k-1}(\dep_1+1,\dep_2)$,
and the second statement
holds since by assumption $s_{k-1}(\dep_1,\dep_2+1)$ is a subgame perfect equilibrium.
Thus, we have that: 
\begin{align*} %\label{eq:comp}
u_2(s_k^{\langle +,+ \rangle}(\dep_1,\dep_2)) > \can + \pd{2}{\dep_1}{\dep_2} (1+u_2(s_{k-2}(\dep_1+1,\dep_2+1)))+\pd{1}{\dep_1}{\dep_2} u_2(s_{k-2}(\dep_1+2,\dep_2))
\end{align*}
This implies that $u_2(s_k^{\langle +,+ \rangle}(\dep_1,\dep_2)) > u_2(s_k(\dep_1,\dep_2))$ in contradiction to the assumption that player $2$ maximizes its utility by first going for the weaker candidate ($f(s_k(\dep_1,\dep_2)) = \langle +, - \rangle $).
\end{proof}
\begin{corollary} \label{cor:weaker_stops_competing}
For $\dep_2 \leq \dep_1$, if $s_{k'}(\dep_1,\dep_2)$ is a subgame perfect equilibrium for every $0<k'\leq k-1$, $\dep_1$ and $\dep_2$, then, $f(s_k(\dep_1,\dep_2))=\langle +, - \rangle \implies u_2(s_{k}(\dep_1,\dep_2))=k\can$.
\end{corollary}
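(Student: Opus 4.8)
The plan is to obtain this as an immediate consequence of Claim \ref{clm:weaker_stops_competing}, by induction on the number of rounds $k$; the substantive content — that declining to compete in one round forces declining in the next — is already proved there, so what remains is only the bookkeeping of iterating it. First I would observe that the hypothesis $\dep_2 \leq \dep_1$ makes player $2$ the lower player, and that this persists after any round in which player $1$ hires the stronger candidate uncontested, since player $1$'s reputation goes from $\dep_1$ to $\dep_1+1$ while player $2$'s stays at $\dep_2$; hence the lower player remains player $2$ for the entire remaining game.

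For the base case $k=1$: if $f(s_1(\dep_1,\dep_2)) = \stratpm$ then player $2$ makes an unopposed offer to the weaker candidate and collects $\can$, so $u_2(s_1(\dep_1,\dep_2)) = \can = 1 \cdot \can$, as required.

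For the inductive step I would assume the statement for all games of fewer than $k$ rounds and suppose $f(s_k(\dep_1,\dep_2)) = \stratpm$ with $\dep_2 \leq \dep_1$. The $\stratpm$ utility decomposition recorded in Section \ref{sec:prelim} gives
\[
u_2(s_k(\dep_1,\dep_2)) = \can + u_2(s_{k-1}(\dep_1+1,\dep_2)).
\]
Since $s_{k-1}(\dep_1,\dep_2)$ is a subgame perfect equilibrium for all reputations (by the blanket hypothesis of the corollary), Claim \ref{clm:weaker_stops_competing} applies and gives $f(s_{k-1}(\dep_1+1,\dep_2)) = \stratpm$. Player $2$ is still the lower player in $G_{k-1}(\dep_1+1,\dep_2)$ and all games of at most $k-1$ rounds are canonical equilibria by hypothesis, so the inductive hypothesis yields $u_2(s_{k-1}(\dep_1+1,\dep_2)) = (k-1)\can$, and substituting gives $u_2(s_k(\dep_1,\dep_2)) = \can + (k-1)\can = k\can$.

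I do not anticipate a genuine obstacle here: the only point to watch is that each invocation of Claim \ref{clm:weaker_stops_competing} and of the $\stratpm$ utility formula presupposes that the relevant shorter games are played according to their canonical equilibria, which is precisely what the hypothesis ``$s_{k'}(\dep_1,\dep_2)$ is a subgame perfect equilibrium for every $0 < k' \leq k-1$'' supplies. (Equivalently, one can unroll the recursion to see directly that $f(s_{k-j}(\dep_1+j,\dep_2)) = \stratpm$ for every $0 \leq j < k$, so player $2$ takes the weaker candidate uncontested in every round and earns $\can$ each time, for a total of $k\can$.)
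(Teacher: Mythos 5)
Your proof is correct and matches the paper's intent: the paper states this corollary without a separate proof, treating it as the immediate iteration of Claim \ref{clm:weaker_stops_competing} together with the $\stratpm$ utility recursion $u_2(s_k(\dep_1,\dep_2)) = \can + u_2(s_{k-1}(\dep_1+1,\dep_2))$, which is exactly the induction you carry out. Your added care about where the ``SPE for all $k' \leq k-1$'' hypothesis is consumed is appropriate but does not change the argument.
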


\subsubsection{Proof of Part (\ref{eq:app:enum:stronger}) of Proposition \ref{prop:app:eq-and-more}}
We show that the utility of the higher player in the game 
$G_k(\dep_1,\dep_2)$ is at least as large as the utility of the lower player. This is based on Claim \ref{clm:player1-leads} and Claim \ref{clm:player2-leads} below.
\begin{proposition} \label{prp:stronger-wins}
If $s_{k-1}(\dep_1,\dep_2)$ is a subgame perfect equilibrium for every two integers $\dep_1$ and $\dep_2$ then:
\begin{itemize}
\item For $\dep_1 \geq \dep_2$: $u_1(s_k(\dep_1,\dep_2)) \geq u_2(s_k(\dep_1,\dep_2))$.
\item For $\dep_2 > \dep_1$: $u_2(s_k(\dep_1,\dep_2)) \geq u_1(s_k(\dep_1,\dep_2))$.
\end{itemize}
\end{proposition}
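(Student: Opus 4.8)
The plan is to prove Proposition~\ref{prp:stronger-wins} by induction on the number of rounds $k$, splitting it into Claim~\ref{clm:player1-leads} (the case $\dep_1\ge\dep_2$) and Claim~\ref{clm:player2-leads} (the case $\dep_2>\dep_1$); these are mirror images, proved by a single induction in which each may invoke the other --- and the subgame-perfect-equilibrium hypothesis --- for all games with fewer than $k$ rounds. I describe only the case $\dep_1\ge\dep_2$, where player~$1$ is the higher player, so by definition of $s_k$ it offers to the stronger candidate in round one and $f(s_k(\dep_1,\dep_2))\in\{\stratpp,\stratpm\}$. Let $p$ be player~$1$'s probability of winning a round-one contest; the competition-function axioms give $p=c(\dep_1,\dep_2)\ge c(\dep_2,\dep_2)\ge\tfrac12$, so player~$2$ wins with probability $1-p\le\tfrac12$. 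Base case $k=1$: if $f=\stratpm$ then $u_1=1>\can=u_2$, and if $f=\stratpp$ then $u_1=p\ge\tfrac12\ge1-p=u_2$.

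For the inductive step, the case $f(s_k(\dep_1,\dep_2))=\stratpm$ is easy: $u_1(s_k(\dep_1,\dep_2))=1+u_1(s_{k-1}(\dep_1+1,\dep_2))$ and $u_2(s_k(\dep_1,\dep_2))=\can+u_2(s_{k-1}(\dep_1+1,\dep_2))$; player~$1$ is still the higher player in $G_{k-1}(\dep_1+1,\dep_2)$, so induction gives $u_1(s_{k-1}(\dep_1+1,\dep_2))\ge u_2(s_{k-1}(\dep_1+1,\dep_2))$, and $1>\can$ finishes it. The substantive case is $f(s_k(\dep_1,\dep_2))=\stratpp$. Abbreviate $A_1=u_1(s_{k-1}(\dep_1+1,\dep_2))$, $B_1=u_2(s_{k-1}(\dep_1+1,\dep_2))$, $A_2=u_1(s_{k-1}(\dep_1,\dep_2+1))$, $B_2=u_2(s_{k-1}(\dep_1,\dep_2+1))$; the recurrences give
\[
u_1(s_k(\dep_1,\dep_2))-u_2(s_k(\dep_1,\dep_2))=p\,(1+A_1-B_1)-(1-p)\,(1+B_2-A_2).
\]
If $\dep_1>\dep_2$, then player~$1$ is the higher player in both subgames $G_{k-1}(\dep_1+1,\dep_2)$ and $G_{k-1}(\dep_1,\dep_2+1)$, so induction gives $A_1\ge B_1$ and $A_2\ge B_2$; hence $1+A_1-B_1\ge1\ge\max\{1+B_2-A_2,\,0\}$, and since $p\ge1-p\ge0$ the right-hand side is nonnegative (when $1+B_2-A_2\le0$ it is a sum of nonnegative terms; otherwise $(1-p)(1+B_2-A_2)\le1-p\le p\le p(1+A_1-B_1)$).

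The remaining subcase is $\dep_1=\dep_2=:\dep$ with $f=\stratpp$. Now $G_{k-1}(\dep,\dep+1)$ has player~$2$ as the higher player, so Claim~\ref{clm:player2-leads} gives only $B_2\ge A_2$ (not $A_2\ge B_2$). Writing $a=A_1-B_1\ge0$ and $b=B_2-A_2\ge0$, the displayed difference equals $(2p-1)+pa-(1-p)b$, which is nonnegative as soon as $a\ge b$, i.e.\ as soon as $A_1+A_2\ge B_1+B_2$. So the whole proposition reduces to
\[
u_1(s_{k-1}(\dep+1,\dep))+u_1(s_{k-1}(\dep,\dep+1))\ \ge\ u_2(s_{k-1}(\dep+1,\dep))+u_2(s_{k-1}(\dep,\dep+1)),
\]
a comparison between two games that are mirror images of each other except that, should the reputations ever become equal, the tie-break favors player~$1$ in both. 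This is the main obstacle --- the naive symmetry $u_1(s_{k-1}(\dep+1,\dep))=u_2(s_{k-1}(\dep,\dep+1))$ fails precisely because of this tie-breaking convention. I would dispatch it with a short auxiliary lemma establishing $u_1(s_{k-1}(\dep+1,\dep))\ge u_2(s_{k-1}(\dep,\dep+1))$ and $u_1(s_{k-1}(\dep,\dep+1))\ge u_2(s_{k-1}(\dep+1,\dep))$ --- informally, holding the tie-break advantage never hurts its holder --- provable either by coupling the two games on a common sequence of candidate-choice outcomes, or by an additional monotonicity induction in the style of Claim~\ref{clm:app:mono}. Summing these two inequalities yields exactly the displayed bound, closing the induction.
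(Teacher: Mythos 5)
Your induction is structured differently from the paper's: you induct directly on the statement of Proposition~\ref{prp:stronger-wins}, and for $\dep_1>\dep_2$ this works cleanly --- the subgames $G_{k-1}(\dep_1+1,\dep_2)$ and $G_{k-1}(\dep_1,\dep_2+1)$ both still have player~$1$ as the higher player, so the inductive hypothesis applies to them directly and your algebra is correct. The paper instead routes everything through two cross-game ``mirror'' lemmas (Claims~\ref{clm:player1-leads} and~\ref{clm:player2-leads}), comparing $u_1(s_k(y,z))$ with $u_2(s_k(z,y))$, and then finishes with monotonicity. Your decomposition has the virtue of isolating exactly where the tie-breaking convention bites: only at $\dep_1=\dep_2$.

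However, that is precisely where your proof has a genuine gap. The two inequalities you defer to an ``auxiliary lemma'' --- $u_1(s_{k-1}(\dep+1,\dep))\ge u_2(s_{k-1}(\dep,\dep+1))$ and $u_1(s_{k-1}(\dep,\dep+1))\ge u_2(s_{k-1}(\dep+1,\dep))$ --- are exactly instances of the paper's Claims~\ref{clm:player1-leads} and~\ref{clm:player2-leads}, and proving them is the bulk of the work, not a short afterthought. Neither of your two suggested routes clearly succeeds. A coupling on a common sequence of candidate-choice outcomes breaks down because the two mirrored games are strategic, not mechanical: the first-round decisions can differ between $s_{k-1}(y,z)$ and $s_{k-1}(z,y)$ (the paper must explicitly handle the case $f(s_k(y,z))=\stratpp$, $f(s_k(z,y))\neq\stratpp$), and once the decisions diverge the trajectories are no longer coupled; the paper resolves this with ``locking arguments'' that invoke Claim~\ref{clm:app:guarantee} and Corollary~\ref{cor:weaker_stops_competing} to show certain divergences force $u=k\can$ on both sides or are outright impossible. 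A monotonicity-style induction does not suffice either, since Claim~\ref{clm:app:mono} compares the same player across games with perturbed reputations, whereas what you need is a comparison of \emph{different} players across \emph{mirrored} games --- monotonicity alone yields $A_1\ge u_1(s_{k-1}(\dep,\dep))\ge A_2$ and $B_2\ge u_2(s_{k-1}(\dep,\dep))\ge B_1$, which does not order $A_1+A_2$ against $B_1+B_2$. So while your reduction is correct and arguably sharper than the paper's for the unequal case, the proposal as written leaves the hardest step unproved.
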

\begin{proof}
\begin{itemize}
\item For $\dep_1 \geq \dep_2$. By Claim \ref{clm:player1-leads} we have that $u_1(s_k(\dep_1,\dep_2)) \geq u_2(s_k(\dep_2,\dep_1))$. Now, by monotonicity we have that $u_2(s_k(\dep_2,\dep_1)) \geq u_2(s_k(\dep_2,\dep_2))\geq u_2(s_k(\dep_1,\dep_2))$.
\item For $\dep_2 > \dep_1$. By Claim \ref{clm:player2-leads} we have that $u_2(s_k(\dep_1,\dep_2)) \geq u_1(\dep_2-1,\dep_1)$. Now, by monotonicity we have that $u_1(\dep_2-1,\dep_1) \geq u_1(\dep_1,\dep_1)$ since because $\dep_1$ and $\dep_2$ are integers we have that $\dep_2-1\geq \dep_1$. Then we have that $u_1(\dep_1,\dep_1) \geq u_1(\dep_1,\dep_2)$ which completes the proof.
\end{itemize}
\end{proof}

The two following claims allow us to show that the utility of the higher player is always greater by relating between the utilities of player $1$ and player $2$.
\begin{claim} \label{clm:player1-leads}
If for every integers $y',z'$ and $0<k'\leq k-1$ $s_{k'}(y',z')$ is a subgame perfect equilibrium then: $u_1(s_k(y,z)) \geq u_2(s_k(z,y))$ for every $y$ and $z$.
\end{claim}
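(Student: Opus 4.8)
The plan is to prove the claim by induction on $k$, taking as inductive hypothesis the $(k-1)$-round version of the very same statement: $u_1(s_{k-1}(a,b))\ge u_2(s_{k-1}(b,a))$ for all integers $a,b$. Alongside it I will freely use the facts already available for games of at most $k-1$ rounds under the claim's hypothesis --- monotonicity of the utilities (Claim~\ref{clm:app:mono}), the guarantee $u_i(s_{k'}(\cdot))\ge k'\can$ for the lower player (Claim~\ref{clm:app:guarantee}), and the ``stops competing'' statements (Claim~\ref{clm:weaker_stops_competing} and Corollary~\ref{cor:weaker_stops_competing}); the latter two I may also invoke at $k$ rounds, since their proofs do not depend on part~(\ref{eq:app:enum:stronger}) of Proposition~\ref{prop:app:eq-and-more} at $k$ rounds. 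The base case $k=1$ is a direct verification: the higher-reputation player takes the stronger candidate, the lower player takes the better of its contest-success probability and $\can$, and one checks $u_1(s_1(y,z))\ge u_2(s_1(z,y))$ case by case (with equality unless $y=z$ and $\can\ge\tfrac12$).

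For the inductive step I write out the first-round recursion and split on the order of $y$ and $z$. The guiding observation is that $G_k(y,z)$ and $G_k(z,y)$ are the same game up to swapping the players' labels, except that the tie-breaking rule (ties in reputation favor player~$1$) breaks the symmetry in player~$1$'s favor. When $y<z$, player~$1$ is the lower player in $G_k(y,z)$ and player~$2$ is the lower player in $G_k(z,y)$, so each of $u_1(s_k(y,z))$ and $u_2(s_k(z,y))$ equals the maximum of a ``compete'' value and a ``go-weaker'' value; applying the $(k-1)$-round hypothesis term by term to the two relevant continuation subgames (the win-continuation and the loss/opt-out continuation) shows that player~$1$'s ``compete'' value in $G_k(y,z)$ dominates player~$2$'s in $G_k(z,y)$, and likewise for the ``go-weaker'' values, so the maxima compare as desired. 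When $y=z$ the two games literally coincide, so the claim reduces to showing that in $G_k(y,y)$ the higher player (player~$1$) has utility at least the lower player (player~$2$): if the lower player opts out, Corollary~\ref{cor:weaker_stops_competing} gives $u_1=k\ge k\can=u_2$, and if it competes we again apply the $(k-1)$-round hypothesis to the two symmetric continuation subgames.

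The substantive case is $y>z$: now player~$1$ is the higher player in $G_k(y,z)$ and player~$2$ is the higher player in $G_k(z,y)$, so each of these players' utilities is governed by the \emph{other} player's first-round decision as the lower party, and those two decisions need not agree. The key lemma to establish first is: \emph{if the lower player competes in $G_k(y,z)$, then the lower player competes in $G_k(z,y)$}. I prove it by contradiction. If the lower player competes in $G_k(y,z)$, its ``compete'' value there strictly exceeds its ``go-weaker'' value, which by the guarantee (Claim~\ref{clm:app:guarantee}) is at least $k\can$; hence that ``compete'' value is $>k\can$. The $(k-1)$-round hypothesis, applied to the corresponding continuation subgames, shows the lower player's ``compete'' value in $G_k(z,y)$ is at least its ``compete'' value in $G_k(y,z)$, hence also $>k\can$. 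But if the lower player went weaker in $G_k(z,y)$, then $f(s_k(z,y))=\stratmp$, so by (the player-$1$-lower analogue of) Corollary~\ref{cor:weaker_stops_competing} it never competes and $u_1(s_k(z,y))=k\can$; since $u_1(s_k(z,y))=\can+u_1(s_{k-1}(z,y+1))$, its ``go-weaker'' value equals $k\can$, contradicting that this value must be at least the ``compete'' value $>k\can$. Given this lemma, only three joint decision patterns remain. If both lower players compete, or both go weaker, a term-by-term application of the $(k-1)$-round hypothesis to the continuation subgames finishes the case exactly as for $y<z$. The last pattern --- the lower player goes weaker in $G_k(y,z)$ but competes in $G_k(z,y)$ --- I handle by noting $u_1(s_k(y,z))=1+u_1(s_{k-1}(y+1,z))\ge 1+u_2(s_{k-1}(z,y+1))$ (by the $(k-1)$-round hypothesis), while $u_2(s_k(z,y))$ is a convex combination of $1+u_2(s_{k-1}(z,y+1))$ and $u_2(s_{k-1}(z+1,y))$; monotonicity (Claim~\ref{clm:app:mono}) gives $u_2(s_{k-1}(z+1,y))\le u_2(s_{k-1}(z,y+1))\le 1+u_2(s_{k-1}(z,y+1))$, so this convex combination is at most $1+u_2(s_{k-1}(z,y+1))\le u_1(s_k(y,z))$.

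The main obstacle is precisely the $y>z$ case. For $y<z$ we are comparing one lower player against another and the whole argument collapses to a transparent term-by-term inequality from the inductive hypothesis; for $y>z$ we are comparing one higher player against another, and their utilities are dictated by the opponents' strategic choices, which could in principle disagree in a way that breaks the inequality. Ruling out that single ``bad'' disagreement --- the only place where the guarantee and the ``stops competing'' corollary are genuinely needed --- is the heart of the proof; once it is excluded, the remaining patterns are routine bookkeeping with the $(k-1)$-round hypothesis and monotonicity.
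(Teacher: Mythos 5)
Your proposal is correct and follows essentially the same route as the paper's proof: induction on $k$ with term-by-term comparison of the continuation utilities via the $(k-1)$-round hypothesis, plus the same ``locking'' contradiction (via Corollary~\ref{cor:weaker_stops_competing}, Claim~\ref{clm:app:guarantee}, and the inductive hypothesis) to rule out the one problematic disagreement in the lower players' first-round decisions. The only differences are organizational --- you split first on the sign of $y-z$ and isolate the locking argument as a named lemma, whereas the paper splits first on the strategy profiles --- but the substance of every case matches the paper's.
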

\begin{proof}
We prove the claim by induction on $k$ the number of rounds. For the base case $k=1$, observe that if $y=z$ then clearly 
$u_1(s_1(y,z)) \geq u_2(s_1(z,y))$; either because the players compete and $\p{1}{y}{y} \geq \p{2}{y}{y}$ or because the players do not compete and $u_1(s_1(y,z)) = 1>\can = u_2(s_1(z,y))$. Else, $y\neq z$, now if $\min\{\p{1}{y}{z},\p{2}{y}{z}\} > \can$, then in both $s_1(y,z)$ and $s_1(z,y)$ the players compete and since $y\neq z$ we have that $\p{1}{y}{z}=\p{2}{y}{z}$, thus $u_1(s_1(y,z)) = u_2(s_1(z,y))$. Else, each of the players goes after a different candidate. If  for example $y>z$, then in both games the player with reputation $y$ goes after the stronger candidate and the player with reputation $z$ goes after the weaker candidate. Thus, $u_1(s_k(y,z)) = u_2(s_k(z,y))$. 

Next, we assume the correctness for $(k-1)$-round games and prove for $k$-round games. We distinguish between the following cases:
\begin{enumerate}
%\item \textbf{The players compete in both $f(s_k(y,z))$ and $f(s_k(z,y))$:} by using the induction hypothesis we get 
\item {$f(s_k(y,z))=f(s_k(z,y))=\stratpp$:} by using the induction hypothesis we get that
that $u_1(s_{k-1}(y+1,z)) \geq u_2(s_{k-1}(z,y+1))$ and $u_1(s_{k-1}(y,z+1)) \geq u_2(s_{k-1}(z+1,y))$. Since $\p{1}{y}{z}\geq\p{2}{y}{z}$ this is sufficient for showing that $u_1(s_k(y,z)) \geq u_2(s_k(z,y))$. More generally this shows that $u_1(s_k^{\langle+,+\rangle}(y,z)) \geq u_2(s_k^{\langle+,+\rangle}(z,y))$.

%\item \textbf{The players do not compete in both $f(s_k(y,z))$ and $f(s_k(z,y))$:} by the definition of $s_k$ we have 
\item {$f(s_k(y,z)) \neq \stratpp$ and $f(s_k(z,y))\neq\stratpp$:} by the definition of $s_k$ we have
three possible subcases:
\begin{itemize}
\item $y<z$: $f(s_k(y,z)) = \langle -,+ \rangle$ and $f(s_k(z,y))= \langle +,- \rangle$.
\item $y=z$: $f(s_k(y,y)) = \langle +,- \rangle$.
\item $y>z$: $f(s_k(y,z)) = \langle +,- \rangle$ and $f(s_k(z,y))= \langle -,+ \rangle$.
\end{itemize}
It is not hard to see that for each one of these subcases we can use the induction hypothesis to show that the claim holds.

%\item \textbf{The players compete in $f(s_k(y,z))$ but not in $f(s_k(z,y))$:} if $z>y$, then:
\item {$f(s_k(y,z))=\stratpp$ and $f(s_k(z,y)) \neq \stratpp$:} if $z>y$, then:
\begin{align*}
u_1(s_k(z,y)) &= 1+u_1(s_{k-1}(z+1,y)) \\
u_2(s_k(y,z)) &= \pd{2}{y}{z} (1+u_2(s_{k-1}(y,z+1))) + \pd{1}{y}{z}u_2(s_{k-1}(y+1,z))
\end{align*}
By using monotonicity and applying the induction hypothesis we get that:
\begin{align*}
u_2(s_{k-1}(y+1,z)) \leq u_2(s_{k-1}(y,z+1)) \leq u_1(s_{k-1}(z+1,y))
\end{align*}
Thus, the claim holds. 

Else, we have that $y>z$. We show that this case is not possible by a \emph{locking argument}. First we observe that this implies that player $1$ is the lower player in the game $G_k(z,y)$ and in the first round of the game it prefers to go for the weaker candidate. Since we assume that for any $y'$,$z'$ and $0<k'\leq k-1$, $s_{k'}(y',z')$ is a subgame perfect equilibrium the requirements of Corollary \ref{cor:weaker_stops_competing} hold and thus we have that: $u_1(s_{k-1}(z,y+1)) = (k-1)\can$. By applying the induction hypothesis we get that $u_2(s_{k-1}(y+1,z)) \leq u_1(s_{k-1}(z,y+1)) = (k-1)\can$. Now, since player $2$ is the lower player in the game $G_k(y+1,z)$ we can apply Claim \ref{clm:app:guarantee} and conclude that $u_2(s_{k-1}(y+1,z))=(k-1)\can = u_1(s_{k-1}(z,y+1))$. Now, the following chain of inequalities provides a contradiction for the assumption that in the game $G_k(z,y)$ the players do not compete as it shows that the lower player (player $1$) actually prefers competing over going for the weaker candidate:
\begin{align*}
u_1(s_k^{\langle +,+ \rangle}(z,y)) \geq u_2(s_k^{\langle +,+ \rangle}(y,z)) > \can + u_2(s_k(y+1,z)) = \can +u_1(s_{k-1}(z,y+1)) = u_1(s_k^{\langle -,+ \rangle}(z,y)).
\end{align*}
The claim is completed since we already treated the case in which $y=z$ as part of the first two cases.
\end{enumerate}\end{proof}

We now prove that a claim similar in spirit to the previous one also holds for player $2$:
\begin{claim} \label{clm:player2-leads} 
If for every integers $y',z'$ and $0<k'\leq k-1$ $s_{k'}(y',z')$ is a subgame perfect equilibrium, then, $u_2(s_k(y,z+1)) \geq u_1(s_k(z,y))$ and $u_2(s_k(z,y))\geq u_1(s_k(y,z+1))$ for every two integers $y$ and $z$.
\end{claim}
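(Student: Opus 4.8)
The plan is to prove both inequalities of Claim~\ref{clm:player2-leads} simultaneously by induction on the number of rounds $k$, in close parallel with the proof of Claim~\ref{clm:player1-leads}; write (A) for $u_2(s_k(y,z+1))\geq u_1(s_k(z,y))$ and (B) for $u_2(s_k(z,y))\geq u_1(s_k(y,z+1))$. At round-count $k$ I would freely use, both at $k$ and at smaller round-counts, the tools already in hand under the standing hypothesis: monotonicity of the equilibrium utilities (Claim~\ref{clm:app:mono}), the lower player's guarantee $u_i(s_{k'}(\dep_1,\dep_2))\geq k'\can$ (Claim~\ref{clm:app:guarantee}), the ``stop-competing'' Corollary~\ref{cor:weaker_stops_competing}, and Claim~\ref{clm:player1-leads} itself. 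I would also use the two consequences of the competition-function axioms recorded right after their definition; in the forms needed they read $\p{2}{y}{z+1}\geq\p{1}{z}{y}$ and, equivalently by complementarity, $\p{1}{y}{z+1}\leq\p{2}{z}{y}$ (and, at $y=z$, $\p{1}{y}{y+1}\leq\p{2}{y}{y}$). The key organizing observation is the dichotomy $y\leq z$ versus $y\geq z+1$: in the first case the player being tracked --- player $2$ in $G_k(y,z+1)$ and player $1$ in $G_k(z,y)$ --- is the higher player in \emph{both} games and hence always goes for the strong candidate; in the second case it is the lower player in both, and may stop competing.

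For the base case $k=1$ the utilities are explicit (the higher player takes the strong candidate; the lower player competes iff its success probability exceeds $\can$, else takes the weak candidate), and (A), (B) follow by splitting on the order of $y,z$ and on which players compete, using only $\p{2}{y}{z+1}\geq\p{1}{z}{y}$, its complement form, and monotonicity of the competition function; these are routine checks. For the inductive step I would expand $u_2(s_k(y,z+1))$ and $u_1(s_k(z,y))$ by their first-round profiles $f(\cdot)\in\{\stratpp,\stratpm,\stratmp\}$. The representative case is when both profiles equal $\stratpp$:
\begin{align*}
u_2(s_k(y,z+1)) &= \p{2}{y}{z+1}\bigl(1+u_2(s_{k-1}(y,z+2))\bigr)+\p{1}{y}{z+1}\,u_2(s_{k-1}(y+1,z+1)),\\
u_1(s_k(z,y)) &= \p{1}{z}{y}\bigl(1+u_1(s_{k-1}(z+1,y))\bigr)+\p{2}{z}{y}\,u_1(s_{k-1}(z,y+1)).
\end{align*}
The induction hypothesis (A) at $k-1$ gives $u_2(s_{k-1}(y,z+2))\geq u_1(s_{k-1}(z+1,y))$ and $u_2(s_{k-1}(y+1,z+1))\geq u_1(s_{k-1}(z,y+1))$, so $u_2(s_k(y,z+1))$ is at least the convex combination of $A:=1+u_1(s_{k-1}(z+1,y))$ and $B:=u_1(s_{k-1}(z,y+1))$ with weights $\p{2}{y}{z+1}$ and $\p{1}{y}{z+1}$. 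Since monotonicity yields $A\geq B$ (indeed $u_1(s_{k-1}(z+1,y))\geq u_1(s_{k-1}(z,y))\geq u_1(s_{k-1}(z,y+1))$) and $\p{2}{y}{z+1}\geq\p{1}{z}{y}$, shifting weight toward the larger term $A$ only raises the value, so this quantity is $\geq\p{1}{z}{y}A+\p{2}{z}{y}B=u_1(s_k(z,y))$. When $y\leq z$ the tracked player always competes, and all profile combinations except one reduce to this same estimate (a convex combination being replaced by a single branch when the \emph{other} player goes for the weak candidate); when $y\geq z+1$ the ``non-crossing'' combinations collapse cleanly --- either to the $\stratpp/\stratpp$ estimate, or, when the lower player quits in both games, to $u_2(s_k(y,z+1))=k\can=u_1(s_k(z,y))$ via Corollary~\ref{cor:weaker_stops_competing}. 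Inequality (B) is treated symmetrically, interchanging the roles of $\p{1}{\cdot}{\cdot}$ and $\p{2}{\cdot}{\cdot}$.

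The delicate part will be the remaining configurations --- the ``crossings'', in which the lower player competes in exactly one of the two games. These can arise both when $y\leq z$ (for instance player $1$ competing in $G_k(y,z+1)$ while player $2$ stops competing in $G_k(z,y)$) and when $y\geq z+1$. For each such configuration I would run the same kind of locking argument used in the proof of Claim~\ref{clm:player1-leads}: when the lower player stops competing in a game, Corollary~\ref{cor:weaker_stops_competing} pins its utility to exactly $k\can$ and Claim~\ref{clm:app:guarantee} forces the relevant $(k-1)$-round utilities up to $(k-1)\can$; substituting these tight values into the first-round comparison --- together with monotonicity, the inequality $\p{2}{y}{z+1}\geq\p{1}{z}{y}$, Claim~\ref{clm:player1-leads}, and both halves of the induction hypothesis --- shows that the lower player would strictly prefer to compete in the other game as well, contradicting the assumed crossing. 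Getting all of these crossing cases to close, and keeping track of exactly which piece of the induction hypothesis each one uses, is the step I expect to be the main obstacle.

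Finally, it is worth stating in the write-up why this cannot be shortcut: one is tempted to argue from a reflection identity $u_2(s_k(a,b))=u_1(s_k(b,a))$ for $a\neq b$, which would make (A) and (B) immediate together with monotonicity. But that identity is false --- play begun from unequal reputations can reach a tie (the lower player can close a gap of exactly one), and at a tied state the two reflected games break the tie in favor of opposite physical players, destroying the symmetry precisely where it would be needed. It is exactly to absorb this discrepancy that the statement carries the ``$+1$'' shift and must be proved inductively rather than from symmetry.
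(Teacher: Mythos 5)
Your overall plan coincides with the paper's proof: both inequalities are proved by a simultaneous induction on $k$, the inductive step is organized by the first-round profiles, the $\stratpp/\stratpp$ case is settled by exactly the convex-combination argument you describe (induction hypothesis plus monotonicity plus $\p{2}{y}{z+1}\geq\p{1}{z}{y}$ and $\p{2}{z}{y}\geq\p{1}{y}{z+1}$), the no-competition cases collapse via the induction hypothesis, and the crossings are attacked with locking arguments built from Corollary~\ref{cor:weaker_stops_competing} and Claim~\ref{clm:app:guarantee}. Your closing remark about why the naive reflection identity $u_2(s_k(a,b))=u_1(s_k(b,a))$ fails at ties is correct and is a useful thing to record.

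There is, however, one step in your plan that would fail as stated: you propose to dispose of \emph{every} crossing configuration by a locking-argument contradiction, but only half of the crossings are actually impossible. The locking argument rules out the direction in which the lower player competes against the \emph{stronger} version of its opponent while declining to compete against the weaker one (e.g., for $y\leq z$, that $f(s_k(y,z+1))=\stratpp$ while $f(s_k(z,y))\neq\stratpp$); this is the content of the two implication lemmas in the paper's proof. The reverse crossings --- for instance, for $y\leq z$, the lower player competing in $G_k(z,y)$ against an opponent of reputation $z$ but not in $G_k(y,z+1)$ against an opponent of reputation $z+1$ --- are perfectly consistent with equilibrium behavior, so no contradiction is available there. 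The paper handles these surviving configurations not by contradiction but by short direct chains such as
$u_2(s_k(y,z+1)) = u_2(s_k^{\stratpm}(y,z+1)) \geq u_2(s_k^{\stratpp}(y,z+1)) \geq u_1(s_k^{\stratpp}(z,y)) = u_1(s_k(z,y))$,
where the middle inequality is the already-established $\stratpp$-versus-$\stratpp$ comparison and the first uses the fact that the lower player's equilibrium choice maximizes its utility. You should add these direct comparisons for the two surviving crossing subcases (one for $y\leq z$, one for $y\geq z+1$); with that amendment your argument matches the paper's and goes through.
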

\begin{proof}
We prove the two inequalities by induction on $k$ simultaneously. We begin with the base case $k=1$ and distinguish between the following cases: 
\begin{enumerate}
%\item \textbf{The players compete in both $f(s_1(y,z+1))$ and $f(s_1(z,y))$:} in this case $u_2(s_1(y,z+1)) = \p{2}{y}{z
\item {$s_1(y,z+1)=s_1(z,y)=\stratpp$:} in this case $u_2(s_1(y,z+1)) = \p{2}{y}{z+1}$,  $u_1(s_1(z,y)) = \p{1}{z}{y}$, $u_2(s_1(z,y)) = \p{2}{z}{y}$ and $u_1(s_1(y,z+1)) = \p{1}{y}{z+1}$, thus the claim holds. 

%\item \textbf{The players do not compete in both $f(s_1(y,z+1))$ and $f(s_1(z,y))$:} if $y \leq z$ then $u_2(s_1(y,z+1)) 
\item {$s_1(y,z+1) \neq \stratpp$ and $s_1(z,y) \neq \stratpp$:} if $y \leq z$ then $u_2(s_1(y,z+1))
= u_1(s_1(z,y)) =1$ and  $u_1(s_1(y,z+1)) = u_2(s_1(z,y)) =\can$, thus the claim holds. Else, $y \geq z+1$, then $u_2(s_1(y,z+1)) = u_1(s_1(z,y)) =\can$ and $u_1(s_1(y,z+1)) = u_2(s_1(z,y)) =1$. 

\item {The players compete in one of $s_1(y,z+1),s_1(z,y)$ and do not compete in the other:}
Observe that the following hold:
\begin{itemize}
\item If $y\leq z$ then $s_1(y,z+1)=\langle +,+\rangle \implies s_1(z,y) = \langle +,+\rangle$. Observe that $\p{2}{z}{y} \geq \p{1}{y}{z+1}$. This implies that $u_2(s_1^{\langle +,+ \rangle}(z,y)) \geq u_1(s_1^{\langle +,+ \rangle}(y,z+1))>\can$. Now since in both cases the lower player is the player with reputation $y$ the claim follows.
\item If $y\geq z + 1$ then $s_1(z,y) = \langle +,+ \rangle \implies s_1(y,z+1) = \langle +,+ \rangle$. Observe that $\p{2}{y}{z+1} \geq \p{1}{z}{y}$. This implies that $u_2(s_1^{\langle +,+ \rangle}(y,z+1)) \geq  u_1(s_1^{\langle +,+ \rangle}(z,y))>\can$. Now since in both cases the lower player is the player with reputation $z$ or $z+1$ the claim follows.
\end{itemize}
Thus, we are left with the following two sub-cases:
\begin{enumerate}
\item {$s_1(y,z+1) = \stratpp$ and $s_1(z,y) = \stratmp$:} in this case: $u_2(s_1(y,z+1)) = \p{2}{y}{z+1}$, $u_1(s_1(z,y)) = \can$, $u_2(s_1(z,y)) = 1$ and $u_1(s_1(y,z+1)) = \p{1}{y}{z+1}$ and the claim holds.
\item {$s_1(y,z+1) = \stratmp$ and $s_1(z,y) = \stratpp$:} in this case: $u_2(s_1(y,z+1)) = 1$, $u_1(s_1(z,y)) = \p{1}{z}{y}$, $u_2(s_1(z,y)) = \p{2}{z}{y}$ and $u_1(s_1(y,z+1)) = 
\can$ and the claim holds.
\end{enumerate}
\end{enumerate}

%\item \textbf{The players compete in one of $f(s_1(y,z+1)),f(s_1(z,y))$ and do not compete in the other:}
%Observe that the following hold:
%\begin{itemize}
%\item If $y\leq z$ then $s_1(y,z+1)=\langle +,+\rangle \implies s_1(z,y) = \langle +,+\rangle$. Observe that $\p{2}{z}{y} \geq \p{1}{y}{z+1}$. This implies that $u_2(s_1^{\langle +,+ \rangle}(z,y)) \geq u_1(s_1^{\langle +,+ \rangle}(y,z+1))>\can$. Now since in both cases the lower player is the player with reputation $y$ the claim follows.
%\item If $y\geq z + 1$ then $s_1(z,y) = \langle +,+ \rangle \implies s_1(y,z+1) = \langle +,+ \rangle$. Observe that $\p{2}{y}{z+1} \geq \p{1}{z}{y}$. This implies that $u_2(s_1^{\langle +,+ \rangle}(y,z+1)) \geq  u_1(s_1^{\langle +,+ \rangle}(z,y))>\can$. Now since in both cases the lower player is the player with reputation $z$ or $z+1$ the claim follows.
%\end{itemize}
%Thus, we are left with the following two sub-cases:
%\begin{enumerate}
%\item \textbf{The players compete in $f(s_1(y,z+1))$ but not in $f(s_1(z,y))$ and $y\geq z+1$:} in this case: $u_2(s_1(y,z+1)) = \p{2}{y}{z+1}$, $u_1(s_1(z,y)) = \can$, $u_2(s_1(z,y)) = 1$ and $u_1(s_1(y,z+1)) = \p{1}{y}{z+1}$ and the claim holds.
%\item \textbf{The players compete in $f(s_1(z,y))$ but not in  $f(s_1(y,z+1))$ and $y \leq z$:} in this case: $u_2(s_1(y,z+1)) = 1$, $u_1(s_1(z,y)) = \p{1}{z}{y}$, $u_2(s_1(z,y)) = \p{2}{z}{y}$ and $u_1(s_1(y,z+1)) = 
%\can$ and the claim holds.
%\end{enumerate}
%\end{enumerate}

Next, we assume correctness for $(k-1)$-round games and prove for $k$-round games. We distinguish between the same cases as we did for the base case: 
\begin{enumerate}
\item \textbf{$f(s_k(y,z+1))=f(s_k(z,y))=\stratpp$:} the players' utilities are:
%\item \textbf{The players compete in both $f(s_k(y,z+1))$ and $f(s_k(z,y))$:} the players' utilities are:
\begin{align*}
&u_2(s_k(y,z+1)) =  \pd{2}{y}{z+1}(1+u_2(s_{k-1}(y,z+2)))+ \pd{1}{y}{z+1}u_2(s_{k-1}(y+1,z+1)) \\
&u_1(s_k(z,y)) = \pd{1}{z}{y}(1+u_1(s_{k-1}(z+1,y)))+ \pd{2}{z}{y}u_1(s_{k-1}(z,y+1)) \\
&u_2(s_k(z,y)) = \pd{2}{z}{y}(1+u_1(s_{k-1}(z,y+1)))+ \pd{1}{z}{y}u_1(s_{k-1}(z+1,y)) \\
&u_1(s_k(y,z+1)) = \pd{1}{y}{z+1}(1+u_2(s_{k-1}(y+1,z+1)))+ \pd{2}{y}{z+1}u_2(s_{k-1}(y,z+2)) 
\end{align*}
It is not hard to see that by applying the induction hypothesis plus using monotonicity and the facts that
$\p{2}{y}{z+1} \geq \p{1}{z}{y}$ and $\p{2}{z}{y} \geq \p{1}{y}{z+1}$ the claim holds. By this we have actually shown that a stronger statement holds: $u_2(s_k^{\langle +,+ \rangle}(y,z+1)) \geq u_1(s_k^{\langle +,+ \rangle}(z,y))$ and $u_2(s_k^{\langle +,+ \rangle}(z,y)) \geq u_1(s_k^{\langle +,+ \rangle}(y,z+1))$.

%\item \textbf{The players do not compete in both $f(s_k(y,z+1))$ and $f(s_k(z,y))$:} if $y \leq z$ then
\item \textbf{$f(s_k(y,z+1)) \neq \stratpp$ and $f(s_k(z,y))\neq \stratpp$:} if $y \leq z$ then
\begin{align*}
u_2(s_k(y,z+1)) = 1+ u_2(s_{k-1}(y,z+2)) ~&;~ u_1(s_k(z,y)) = 1+ u_1(s_{k-1}(z+1,y)) \\
u_2(s_k(z,y)) = \can+ u_2(s_{k-1}(z+1,y))~&;~ u_1(s_k(y,z+1)) = \can+ u_1(s_{k-1}(y,z+2)).
\end{align*}
Thus we can use the induction hypothesis and get that the claim holds. Else, $y \geq z+1$, and we can again write down the players' utilities and apply the induction hypothesis to get that the claim holds. 
%\begin{align*}
%u_2(s_k(y,z+1)) &= \can+ u_2(s_{k-1}(y,z+1)) \\
%u_1(s_k(z,y)) &= \can +u_1(s_{k-1}(z,y))
%\end{align*}

\item {The players compete in one of $f(s_k(y,z+1)),f(s_k(z,y))$ and do not compete in the other:}
we will show that the following two lemmas hold:
\begin{lemma}
For $y\leq z$, $f(s_k(y,z+1)) = \langle +,+\rangle \implies f(s_k(z,y)) = \langle +,+\rangle$.
\end{lemma}
\begin{proof}
We prove this by using a locking argument very similar to the one we used for Claim \ref{clm:player1-leads}. Observe that in both games the lower player is the player with reputation $y$. Assume towards a contradiction that in $f(s_k(z,y))$ player $2$ does not compete. Since player $2$ is the lower player, we can use Corollary \ref{cor:weaker_stops_competing} to get that $u_2(s_{k-1}(z+1,y))=(k-1) \can$. By applying the induction hypothesis we get that $(k-1)\can = u_2(s_{k-1}(z+1,y)) \geq u_1(s_{k-1}(y,z+2))$. Now, since player $1$ is the lower player in the game $G_{k-1}(y,z+2)$ we can apply Claim \ref{clm:app:guarantee} and conclude that $u_1(s_{k-1}(y,z+2))=(k-1)\can = u_2(s_{k-1}(z+1,y))$. The following chain of inequalities provides a contradiction that in the game $G_k(z,y)$ player $2$ prefers to go for the weaker candidate over competing:
\begin{align*}
u_2(s_k^{\langle +,+ \rangle}(z,y)) \geq u_1(s_k^{\langle +,+ \rangle}(y,z+1)) > \can + u_1(s_k(y,z+2)) = \can +u_2(s_{k-1}(z+1,y)).
\end{align*}
\end{proof}

\begin{lemma}
For $y\geq z + 1$, $f(s_k(z,y)) = \langle +,+\rangle \implies f(s_k(y,z+1)) = \langle +,+\rangle$.
\end{lemma}
\begin{proof}
The proof is very similar to the previous lemma. Observe that in both games the lower player is the player with reputation $z$ or $z+1$. Assume towards a contradiction that in $f(s_k(y,z+1))$ player $2$ does not compete. Since player $2$ is the lower player, we can use Corollary \ref{cor:weaker_stops_competing} to get that $u_2(s_{k-1}(y+1,z+1))=(k-1) \can$. By applying the induction hypothesis we get that $(k-1)\can = u_2(s_{k-1}(y+1,z+1)) \geq u_1(s_{k-1}(z,y+1))$. Now, since player $1$ is the lower player in the game $G_{k-1}(z,y+1)$ we can apply Claim \ref{clm:app:guarantee} and conclude that $u_1(s_{k-1}(z,y+1))=(k-1)\can = u_2(s_{k-1}(y+1,z+1))$. The following chain of inequalities provide a contradiction that in the game $G_k(y,z+1)$ player $2$ prefers to go for the weaker candidate over competing:
\begin{align*}
u_2(s_k^{\langle +,+ \rangle}(y,z+1)) \geq u_1(s_k^{\langle +,+ \rangle}(z,y)) > \can + u_1(s_k(z,y+1)) = \can +u_2(s_{k-1}(y+1,z+1)).
\end{align*}
\end{proof}

Thus, we are left with the following two sub-cases:
\begin{enumerate}
%\item \textbf{The players compete in $f(s_k(y,z+1))$ but not in $f(s_k(z,y))$ and $y\geq z+1$:}
\item {$f(s_k(y,z+1))=\stratpp$ and $f(s_k(z,y))=\stratmp$:}
 observe that by using the induction hypothesis and monotonicity it is not hard to see that the claim holds since:
%\begin{align*}
%u_2(s_k(y,z+1)) &= \p{2}{y}{z+1}(1+u_2(s_{k-1}(y,z+2)))+ \p{1}{y}{z+1}u_2(s_{k-1}(y+1,z+1)) \\
%u_1(s_k(z,y)) &= \can+u_1(s_{k-1}(z,y+1)) 
%\end{align*}
%$u_2(s_k(y,z+1)) \geq u_1(s_k(z,y))$ since 
\begin{align*}
&u_2(s_k(y,z+1)) = u_2(s_k^{\langle +,+ \rangle}(y,z+1) > u_2(s_k^{\langle -,+ \rangle}(y,z+1) \geq u_1(s_k^{\langle +,- \rangle}(z,y) = u_1(s_k(z,y)) \\
&u_2(s_k(z,y)) = u_2(s_k^{\langle +,- \rangle}(z,y)) \geq  u_2(s_k^{\langle +,+ \rangle}(z,y)) \geq  u_1(s_k^{\langle +,+ \rangle}(y,z+1))=u_1(s_k(y,z+1))
\end{align*}
%
%\begin{align*}
%u_2(s_k(z,y)) = u_2(s_k^{\langle +,- \rangle}(z,y)) \geq  u_2(s_k^{\langle +,+ \rangle}(z,y)) \geq  u_1(s_k^{\langle +,+ \rangle}(y,z+1))=u_1(s_k(y,z+1))
%\end{align*}
%
%
%To see that $u_2(s_k(z,y)) \geq u_1(s_k(y,z+1))$ observe that:
%\begin{align*}
%u_2(s_k(z,y)) &= 1+u_2(s_{k-1}(z,y+1)) \\
%u_1(s_k(y,z+1)) &= \p{1}{y}{z+1}(1+u_1(s_{k-1}(y+1,z+1)))+ \p{2}{y}{z+1}u_1(s_{k-1}(y,z+2)) 
%\end{align*}
%it is easy to see that since by the induction hypothesis $u_2(s_{k-1}(z,y+1)) \geq u_1(s_{k-1}(y+1,z+1)) \geq u_1(s_{k-1}(y,z+2)$ the claim holds.

%\item \textbf{The players compete in $f(s_k(z,y))$ but not in  $f(s_k(y,z+1))$ and $y \leq z$:} 
\item {$f(s_k(z,y))=\stratpp$ and $f(s_k(y,z+1))=\stratmp$:} 
observe that by using the induction hypothesis and monotonicity it is not hard to see that the claim holds since:
\begin{align*}
&u_2(s_k(z,y)) = u_2(s_k^{\langle +,+ \rangle}(z,y)) >  u_2(s_k^{\langle +,- \rangle}(z,y)) \geq  u_1(s_k^{\langle -,+ \rangle}(y,z+1))=u_1(s_k(y,z+1)) \\
&u_2(s_k(y,z+1)) = u_2(s_k^{\langle +,- \rangle}(y,z+1) \geq  u_2(s_k^{\langle +,+ \rangle}(y,z+1) \geq u_1(s_k^{\langle +,+ \rangle}(z,y) = u_1(s_k(z,y)) 
\end{align*}

%\begin{align*}
%u_2(s_k(z,y)) = u_2(s_k^{\langle +,+ \rangle}(z,y) \geq u_1(s_k^{\langle +,+ \rangle}(y,z+1) \geq u_1(s_k^{\langle +,- \rangle}(y,z+1) = u_1(s_k(y,z+1))
%\end{align*}
%To see that $u_2(s_k(y,z+1)) \geq u_1(s_k(z,y))$ observe that:
%\begin{align*}
%u_2(s_k(y,z+1)) &= 1+u_2(s_{k-1}(y,z+2)) \\
%u_1(s_k(z,y)) &= \p{1}{z}{y}(1+u_1(s_{k-1}(z+1,y)))+ \p{2}{z}{y}u_1(s_{k-1}(z,y+1)) 
%\end{align*}
%it is easy to see that since by the induction hypothesis $u_2(s_{k-1}(y,z+2)) \geq u_1(s_{k-1}(z+1,y)) \geq u_1(s_{k-1}(z,y+1))$ the claim holds.
\end{enumerate}
\end{enumerate}
\end{proof}

\subsection{Some More Properties of the Canonical Equilibrium} \label{sub:app:more_prop}

We first show that if $u_i(s_k(\dep_1,\dep_2)) = k \can$ then player $i$ goes for the weaker candidate in $f(s_k(\dep_1,\dep_2))$. This immediately implies that if $u_i(s_k(\dep_1,\dep_2)) = k \can$ then player $i$ in $s_k(\dep_1,\dep_2)$ goes for the weaker candidate in every round.
%\begin{claim} \label{clm:kq-not-compete}
%The following two statements hold:
%\begin{itemize}
%\item If $\dep_1<\dep_2$ then $u_1(s_k(x1,x2)) = k\can \implies f(s_k(x1,x2)) = \langle -,+ \rangle$. 
%\item If $\dep_1\geq\dep_2$ then $u_2(s_k(x1,x2)) = k\can \implies f(s_k(x1,x2)) = \langle +,- \rangle$.
%\end{itemize}
%\end{claim}
%\begin{proof}

\begin{claim}[generalization of Claim \ref{claim:kq-non-compete}] \label{clm:app:kq-non-compete}
The following two statements hold:
\begin{itemize}
\item $u_1(s_k(\dep_1,\dep_2)) = k\can \implies f(s_k(\dep_1,\dep_2)) = \langle -,+ \rangle$. 
\item $u_2(s_k(\dep_1,\dep_2)) = k\can \implies f(s_k(\dep_1,\dep_2)) = \langle +,- \rangle$.
\end{itemize}
\end{claim}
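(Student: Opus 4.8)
The plan is to prove the implication for player $2$; the one for player $1$ is entirely symmetric. So assume $u_2(s_k(\dep_1,\dep_2)) = k\can$, and we want $f(s_k(\dep_1,\dep_2)) = \stratpm$. Two earlier facts do all the work: (a) \emph{every} player gets utility at least $k\can$ in $s_k$ --- the lower player by Claim \ref{clm:app:guarantee}, and then the higher player as well by Proposition \ref{prp:stronger-wins}, whose utility dominates the lower player's; and (b) the tie-breaking convention in the definition of $s_k$, by which a lower player that competes in some round \emph{strictly} prefers competing to taking the weaker candidate. The intuition is simply that a player sitting at the floor value $k\can$ cannot have gained anything in the first round, and the proof checks this against every possible form of the first round.

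First I would rule out that player $2$ is the higher player in $G_k(\dep_1,\dep_2)$. If it is, consider whether the lower player (player $1$) competes in round $1$. If player $1$ competes, then by (b) it strictly prefers this to taking the weaker candidate, which would give it $\can + u_1(s_{k-1}(\dep_1,\dep_2+1)) \geq \can + (k-1)\can = k\can$ (player $1$ is still the lower player there, so Claim \ref{clm:app:guarantee} applies); hence $u_1(s_k(\dep_1,\dep_2)) > k\can$ and so $u_2(s_k(\dep_1,\dep_2)) \geq u_1(s_k(\dep_1,\dep_2)) > k\can$ by Proposition \ref{prp:stronger-wins} --- a contradiction. If player $1$ does not compete, player $2$ hires the stronger candidate uncontested, so $u_2(s_k(\dep_1,\dep_2)) = 1 + u_2(s_{k-1}(\dep_1,\dep_2+1)) \geq 1 + (k-1)\can > k\can$ by (a) applied to the continuation game (where player $2$ is still the higher player) --- again a contradiction.

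Hence player $2$ is the lower player, and it only remains to rule out that it competes in round $1$. If it does, then by (b) it strictly prefers competing to taking the weaker candidate, so $u_2(s_k(\dep_1,\dep_2)) > \can + u_2(s_{k-1}(\dep_1+1,\dep_2)) \geq \can + (k-1)\can = k\can$ (player $2$ is still the lower player in $G_{k-1}(\dep_1+1,\dep_2)$, so Claim \ref{clm:app:guarantee} applies) --- contradicting $u_2(s_k(\dep_1,\dep_2)) = k\can$. So player $2$ takes the weaker candidate in round $1$ while the higher player (player $1$) takes the stronger one by the definition of $s_k$, i.e.\ $f(s_k(\dep_1,\dep_2)) = \stratpm$, as desired.

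The argument is short because it is entirely parasitic on facts (a) and (b). The one point that needs care --- and the main obstacle --- is not to assume at the outset that the player realizing utility $k\can$ is the lower player: excluding the case where it is the \emph{higher} player is exactly where Proposition \ref{prp:stronger-wins} is needed, rather than just the guarantee of Claim \ref{clm:app:guarantee}.
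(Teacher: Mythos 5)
Your proof is correct and follows essentially the same route as the paper's: the lower player's \emph{strict} preference for competing (forced by the tie-breaking rule in the definition of $s_k$) together with the $k\can$ floor on every player's equilibrium utility (Claim \ref{clm:app:guarantee} plus Proposition \ref{prp:stronger-wins}) yields the contradiction. If anything you are more careful than the paper's own terse argument, which assumes toward contradiction only that $f(s_k(\dep_1,\dep_2)) = \stratpp$ and leaves implicit both the $\stratpm$ subcase and the possibility that the player realizing $k\can$ is the higher player --- the cases you dispatch explicitly in your first paragraph of case analysis.
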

\begin{proof}
We prove the claim for player $1$ but a similar proof also works for player $2$. Assume towards a contradiction that $u_1(s_k(\dep_1,\dep_2)) = k\can$ but $f(s_k(\dep_1,\dep_2)) = \langle +,+ \rangle$. By Proposition \ref{prop:app:eq-and-more} we have that $s_k(\dep_1,\dep_2)$ is a subgame perfect equilibrium, thus, if player $1$ prefers to compete it has to be the case that $u_1(s_k^{\langle +,+ \rangle}(\dep_1,\dep_2)) > u_1(s_k^{\langle -,+ \rangle}(\dep_1,\dep_2))$. Observe that $u_1(s_k^{\langle -,+ \rangle}(\dep_1,\dep_2)) \geq k \can$ as a player can always guarantee itself a utility of at least $k\can$ in equilibrium. This is in contradiction to the assumption that $f(s_k(\dep_1,\dep_2)) = \langle +,+ \rangle$.
\end{proof}

%The following two statements hold:
%\begin{itemize}
%\item If $\dep_1<\dep_2$ then $u_1(s_k(x1,x2)) = k\can \implies f(s_k(x1,x2)) = \langle -,+ \rangle$. 
%\item If $\dep_1\geq\dep_2$ then $u_2(s_k(x1,x2)) = k\can \implies f(s_k(x1,x2)) = \langle +,- \rangle$.
%\end{itemize}
%\end{claim}
%\begin{proof}
%We prove the claim for player $1$ but a similar proof also works for player $2$. Assume towards a contradiction that $u_1(s_k(x1,x2)) = k\can$ but $f(s_k(x1,x2)) = \langle +,+ \rangle$. Since $\dep_1<\dep_2$ we have that $u_1(s_k(x1,x2)) = \max \{u_1(s_k^{\langle +,+ \rangle}(\dep_1,\dep_2)), u_1(s_k^{\langle -,+ \rangle}(\dep_1,\dep_2)) \}$. Since we assume that in $f(s_k(x1,x2))$ a player competes only if it can strictly benefit from this, we get the following contradiction:
%\begin{align*}
%k \can = u_1(s_k^{\langle +,+ \rangle}(\dep_1,\dep_2)) > u_1(s_k^{\langle -,+ \rangle}(\dep_1,\dep_2)) \geq k \can
%\end{align*} 
%where the last inequality is due to Claim \ref{clm:app:guarantee}.
%\end{proof}

Next, based on Claim \ref{clm:weaker_stops_competing} we can show that if a player competes and wins then in the next round it prefers competing over going for the weaker candidate.

\begin{claim}[generalization of Claim \ref{clm:keeps_comp}] \label{clm:app:keeps_comp}
If $f(s_k(\dep_1,\dep_2)) = \langle+,+ \rangle$ then:
\begin{itemize}
\item $f(s_{k-1}(\dep_1+1,\dep_2)) \in \{ \langle+,+ \rangle,  \langle+,- \rangle\}$
\item $f(s_{k-1}(\dep_1,\dep_2+1)) \in \{ \langle+,+ \rangle,  \langle -,+\rangle \}$
\end{itemize}
\end{claim}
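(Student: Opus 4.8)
The plan is to prove the first bullet; the second follows by the symmetric argument with the roles of the two players interchanged (the asymmetry in the tie‑breaking rule causes no trouble, since at every step it is handled by the monotonicity claims). Recall that $f(s_k(\dep_1,\dep_2))=\stratpp$ means both players make an offer to the stronger candidate in the first round, so after player~$1$ wins we reach the game $G_{k-1}(\dep_1+1,\dep_2)$, and we must show player~$1$ again offers the stronger candidate there, i.e.\ $f(s_{k-1}(\dep_1+1,\dep_2))\in\{\stratpp,\stratpm\}$. If $\dep_1+1\geq\dep_2$, then player~$1$ is the higher player in $G_{k-1}(\dep_1+1,\dep_2)$ (using the tie‑break when $\dep_1+1=\dep_2$), and the claim is immediate from the definition of $s_{k-1}$. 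So assume $\dep_1+1<\dep_2$; then player~$1$ is the lower player both in $G_k(\dep_1,\dep_2)$ and in $G_{k-1}(\dep_1+1,\dep_2)$, player~$2$ offers the stronger candidate in both, and the only profile left to rule out is $f(s_{k-1}(\dep_1+1,\dep_2))=\stratmp$. I assume this for contradiction.

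First, $f(s_{k-1}(\dep_1+1,\dep_2))=\stratmp$ together with part~(\ref{eq:app:enum:stops-competing}) of Proposition~\ref{prop:app:eq-and-more} says player~$1$ goes for the weaker candidate in every one of the remaining $k-1$ rounds, so $u_1(s_{k-1}(\dep_1+1,\dep_2))=(k-1)\can$, and along the (deterministic) continuation $u_1(s_{k-2}(\dep_1+1,\dep_2+1))=(k-2)\can$. Next, $f(s_{k-1}(\dep_1+1,\dep_2))=\stratmp$ means player~$1$ (weakly, by the tie‑break) prefers the weaker candidate to competing in $G_{k-1}(\dep_1+1,\dep_2)$, with player~$2$ going for the stronger candidate in either case; writing this inequality out with the equilibrium recursion and rearranging gives $\p{1}{\dep_1+1}{\dep_2}\big(1+u_1(s_{k-2}(\dep_1+2,\dep_2))-u_1(s_{k-2}(\dep_1+1,\dep_2+1))\big)\leq\can$. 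By Claim~\ref{clm:app:mono}, $u_1(s_{k-2}(\dep_1+2,\dep_2))\geq u_1(s_{k-2}(\dep_1+1,\dep_2+1))$ (player~$1$ has a larger reputation and the opponent a smaller one), so the bracketed factor is at least $1$; hence $\p{1}{\dep_1+1}{\dep_2}\leq\can$, and therefore $\p{1}{\dep_1}{\dep_2}\leq\p{1}{\dep_1+1}{\dep_2}\leq\can$ by monotonicity of the competition function.

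Now evaluate player~$1$'s two options in the first round of $G_k(\dep_1,\dep_2)$. Using monotonicity once more, $(k-1)\can\leq u_1(s_{k-1}(\dep_1,\dep_2+1))\leq u_1(s_{k-1}(\dep_1+1,\dep_2+1))\leq u_1(s_{k-1}(\dep_1+1,\dep_2))=(k-1)\can$, where the lower bound is Claim~\ref{clm:app:guarantee}; hence $u_1(s_{k-1}(\dep_1,\dep_2+1))=(k-1)\can$. So player~$1$'s expected utility from competing is $\p{1}{\dep_1}{\dep_2}\big(1+(k-1)\can\big)+\p{2}{\dep_1}{\dep_2}(k-1)\can=\p{1}{\dep_1}{\dep_2}+(k-1)\can$, while going for the weaker candidate yields $\can+u_1(s_{k-1}(\dep_1,\dep_2+1))=k\can$. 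Since $\p{1}{\dep_1}{\dep_2}\leq\can$, competing is not strictly better, so by the tie‑breaking rule player~$1$ takes the weaker candidate, i.e.\ $f(s_k(\dep_1,\dep_2))=\stratmp\neq\stratpp$ --- a contradiction. This proves $f(s_{k-1}(\dep_1+1,\dep_2))\in\{\stratpp,\stratpm\}$, and the swapped argument gives the second bullet.

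The one non‑obvious move --- and the place to be careful --- is extracting $\p{1}{\dep_1+1}{\dep_2}\leq\can$: it relies on the observation that once a player gives up it gives up forever, which pins the relevant continuation utilities to be \emph{exactly} $(k-1)\can$ (not merely $\geq (k-1)\can$), obtained from the two‑sided monotonicity squeeze on $u_1(s_{k-1}(\dep_1,\dep_2+1))$, combined with the fact that the ``reputation‑gain'' term $1+u_1(s_{k-2}(\dep_1+2,\dep_2))-u_1(s_{k-2}(\dep_1+1,\dep_2+1))$ is at least $1$. Everything else is routine manipulation of the equilibrium recursion and appeals to Claim~\ref{clm:app:mono} and Claim~\ref{clm:app:guarantee}.
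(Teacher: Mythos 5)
Your proof is correct and follows essentially the same route as the paper's: assume $f(s_{k-1}(\dep_1+1,\dep_2))=\stratmp$ for contradiction, reduce to the case $\dep_1+1<\dep_2$, extract $\p{1}{\dep_1}{\dep_2}\leq\can$ from the lower player's preference for the weaker candidate (the paper states this as the contrapositive observation, you derive it explicitly via the rearranged recursion), pin the continuation utilities to $(k-1)\can$ via Corollary \ref{cor:weaker_stops_competing} and monotonicity, and conclude $u_1(s_k(\dep_1,\dep_2))\leq k\can$, contradicting $f(s_k(\dep_1,\dep_2))=\stratpp$. The only cosmetic difference is that you close the argument by direct comparison of the two first-round options plus the tie-break rather than by citing Claim \ref{clm:app:kq-non-compete}, which amounts to the same computation.
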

\begin{proof}
We first show that $f(s_{k-1}(\dep_1+1,\dep_2)) \in \{ \langle+,+ \rangle,  \langle+,- \rangle\}$. Assume towards contradiction that $f(s_{k-1}(\dep_1+1,\dep_2))=\langle-,+ \rangle$. First observe that if $\p{1}{\dep_1}{\dep_2} >\can$ then $\p{1}{\dep_1+1}{\dep_2} >\can$ thus player $1$ maximizes its utility by competing in the next round as well. It also has to be the case that $\dep_1+1<\dep_2$ since otherwise as the higher player in the game $G_k(\dep_1+1,\dep_2) $ player $1$ should go for the stronger candidate. By Corollary \ref{cor:weaker_stops_competing} we have that $u_1(s_{k-1}(\dep_1+1,\dep_2))=(k-1)\can$. By monotonicity we have that $u_1(s_{k-1}(\dep_1,\dep_2+1)) \leq u_1(s_{k-1}(\dep_1+1,\dep_2)) = (k-1)\can$. Thus, we have that $u_1(s_k(\dep_1,\dep_2)) \leq \p{1}{\dep_1}{\dep_2} +(k-1)\can \leq k\can$. This implies by Claim \ref{clm:app:kq-non-compete} that player $1$ does not compete in $f(s_k(\dep_1,\dep_2))$ in contradiction to the assumption. The proof of the second statement regarding player $2$ is very similar and hence omitted. 
\end{proof}

%!TEX root =recruiting1c.tex

\section{Proofs from Section 3}

\begin{claim}\label{clm:tull-induction-geq}
For any $k$, $\dep$ and $t>\tbound$ such that $\ncan \leq \dfrac{\dep+1}{t+1}$,  $u_2(s_k(t-\dep,\dep)) \leq \max\{ \ub_\can(k,t,\dep), k\can\}$
\end{claim}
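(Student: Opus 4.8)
The plan is to bound $u_2(s_k(t-\dep,\dep))$ by the trivial quantity $k$ (no player can earn more than unit quality in each of $k$ rounds) and then to check that the first term inside the maximum, $\ub_\can(k,t,\dep)$, is already at least $k$; once that holds we are done, since then $\max\{\ub_\can(k,t,\dep),k\can\}\ge \ub_\can(k,t,\dep)\ge k\ge u_2(s_k(t-\dep,\dep))$. So the whole statement reduces to proving $\ub_\can(k,t,\dep)\ge k$ under the hypothesis $\ncan\le\frac{\dep+1}{t+1}$. Using the identity $(k-1)\can+\big((k-1)(1-\can)+1\big)=k$, the definition of $\ub_\can$ rearranges to
\[
\ub_\can(k,t,\dep)=k+\frac{\dep}{t}-\big(1-3\cdf{\dep}{t}{\ncan}\big)\cdot\big((k-1)(1-\can)+1\big).
\]
Since $\dep/t\ge 0$ and $(k-1)(1-\can)+1>0$, this shows $\ub_\can(k,t,\dep)\ge k$ as soon as $1-3\cdf{\dep}{t}{\ncan}\le 0$, i.e.\ as soon as $\cdf{\dep}{t}{\ncan}\ge\frac13$. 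Hence the entire claim follows from this single inequality on the binomial CDF.

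To establish $\cdf{\dep}{t}{\ncan}\ge\frac13$, I would first note that $p\mapsto\cdf{\dep}{t}{p}=\Pr[\mathrm{Bin}(t,p)\le\dep]$ is non-increasing in $p$ (raising the success probability stochastically increases the binomial). Combined with $\ncan\le\frac{\dep+1}{t+1}$ this yields
\[
\cdf{\dep}{t}{\ncan}\ \ge\ \Pr\!\Big[\mathrm{Bin}\!\Big(t,\tfrac{\dep+1}{t+1}\Big)\le\dep\Big].
\]
Now the probability $p^\star=\frac{\dep+1}{t+1}$ is special, because $(t+1)p^\star=\dep+1$ is a positive integer: the right-hand side is the probability that a binomial falls strictly below its integer "balance point" $(t+1)p^\star-1$, which by the Beta--Binomial identity equals $\Pr[\mathrm{Beta}(\dep+1,\,t-\dep)\ge p^\star]$, the probability that a $\mathrm{Beta}$ variable lies above its own mean. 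A $\mathrm{Beta}$ variable whose first shape parameter is at least $1$ exceeds its mean with probability at least $1/e$: the extreme case is shape parameters $(1,\beta)$, where the probability is $\big(\tfrac{\beta}{\beta+1}\big)^{\beta}>\tfrac1e$ (equivalently, the worst case $\dep=0$ gives $\big(\tfrac{t}{t+1}\big)^{t}>1/e$ by the standard estimate $(1-1/(t+1))^{t}>1/e$), and increasing the first shape parameter only moves more mass above the mean. Therefore $\cdf{\dep}{t}{\ncan}\ge 1/e>\tfrac13$, completing the reduction.

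The one step needing a little care is this last estimate — that a binomial sits below its integer balance point with probability $\ge\frac13$ (indeed $\ge 1/e$) — but this is precisely the very loose bound anticipated in the statement of Lemma \ref{lem:b-bound}, with a comfortable gap between $1/e\approx 0.368$ and $\frac13$, so the monotonicity-in-$\dep$ point can be made fully rigorous (most cleanly through the $\mathrm{Beta}$ picture, checking the shape-$(1,\beta)$ case is the worst). It is worth remarking that the hypothesis $t>\tbound$ is not actually used in this case; it is inherited only because the claim is invoked as the easy branch of the proof of Lemma \ref{lem:b-bound}. Everything else is routine manipulation of the definition of $\ub_\can$.
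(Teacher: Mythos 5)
Your reduction is exactly the paper's: bound the utility trivially by $k$ and show $\ub_\can(k,t,\dep)\ge k$, which after the (correct) algebra amounts to the single inequality $\cdf{\dep}{t}{\ncan}\ge\tfrac13$. The divergence, and the gap, is in how that inequality is established. Your chain is $\cdf{\dep}{t}{\ncan}\ \ge\ \Pr[\mathrm{Bin}(t,\tfrac{\dep+1}{t+1})\le\dep]\ =\ \Pr[\mathrm{Beta}(\dep+1,t-\dep)>\tfrac{\dep+1}{t+1}]\ \ge\ 1/e$. The first two steps (monotonicity of the CDF in the success probability, and the Beta--Binomial identity) are fine, but the last step rests entirely on the assertion that ``increasing the first shape parameter only moves more mass above the mean,'' i.e.\ that $\Pr[\mathrm{Beta}(\alpha,\beta)>\alpha/(\alpha+\beta)]$ is minimized over integer $\alpha\ge 1$ at $\alpha=1$. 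That assertion appears to be true, but it is not a routine consequence of anything you wrote: as $\alpha$ grows both the distribution and the threshold $\alpha/(\alpha+\beta)$ move, so there is no direct stochastic-dominance argument, and the statement is equivalent to the genuinely delicate order-statistics fact that the $j$-th smallest of $t$ uniforms exceeds its mean $j/(t+1)$ with probability at least $(\tfrac{t}{t+1})^{t}$ for every $j$ (an Anderson--Samuels-type comparison). Since this is the entire content of the bound --- the extreme case $\dep=0$ sits at roughly $1/e\approx 0.368$, barely above $\tfrac13$, so there is no slack to absorb a crude argument --- the proof is incomplete as written.

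The paper takes a blunter route that genuinely uses the hypothesis $t>\tbound$, which you correctly note your argument never touches; that should have been a warning sign. It disposes of $\dep\ge \ncan t+\ln 2$ directly, and otherwise invokes the Hamza bound (the median of a binomial lies within $\ln 2$ of its mean) to get $\cdf{\dep+2}{t}{\ncan}\ge\tfrac12$, then uses $t>\tbound$ to force $\pmf{\dep+1}{t}{\ncan}+\pmf{\dep+2}{t}{\ncan}<\tfrac16$, so that $\cdf{\dep}{t}{\ncan}>\tfrac13$. Near the extreme case ($\dep$ small, $\ncan$ close to $\tfrac{1}{t+1}$) those pmf terms are of order $1/e$, not $1/12$; the paper's statement survives only because $t>\tbound$ is then unsatisfiable, whereas your intended strengthening is non-vacuous there --- which is precisely why it needs the harder extremal fact. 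To complete your route you must actually prove the monotonicity/worst-case claim for the Beta mean-exceedance probability (or cite a quotable comparison theorem); otherwise, fall back on the paper's median-plus-two-terms argument and accept the dependence on $t>\tbound$.
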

\begin{proof}
We actually prove a stronger claim,
which is that for $t>\tbound$, $ \ub_\can(k,t,\dep) \geq  k$. 
The utility of any player in a game of $k$ rounds is at most $k$ 
and hence this will be enough to complete the proof. 
To do this we need to show that $3\cdf{\dep}{t}{\ncan} \geq 1$ for $t>\tbound$. By \cite{hamza-binomial-median} we have that the median of a binomial distribution is at distance of at most $\ln(2)$ from its mean. Thus, if $\dep \geq \ncan t +\ln(2)$ we are done, as in this case $3\cdf{\dep}{t}{\ncan} \geq \frac 3 2$. Else, $\ncan t+\ncan-1 \leq\dep < \ncan t +\ln(2)$. This implies that $\dep+2\geq \ncan t +\ln(2)$, which in turn implies that $\cdf{\dep+2}{t}{\ncan} \geq \frac{1}{2}$. Since $\cdf{\dep+2}{t}{\ncan} = \cdf{\dep}{t}{\ncan} +\pmf{\dep+1}{t}{\ncan}+\pmf{\dep+2}{t}{\ncan}$, what left to show is that $\pmf{\dep+1}{t}{\ncan}+\pmf{\dep+2}{t}{\ncan} < \frac{1}{6}$:
\begin{align*}
\pmf{\dep+i}{t}{\ncan} &={t \choose \dep+i} \ncan^{\dep+i}(1-\ncan)^{t-(\dep+i)} 
\leq  \big(\dfrac{x+i}{t} \big)^{-(x+i)} \ncan^{\dep+i}(1-\ncan)^{t-\dep-i} \\
&\leq \ncan^{-(x+i)} \ncan^{\dep+i}(1-\ncan)^{t-\dep-i} 
= (1-\ncan)^{t-\dep-i} 
\leq (1-\ncan)^{\frac{1}{2}t-3}
\end{align*}
The last transition is due to the fact that $\dep < \ncan t +\ln(2)$, $\ncan \leq 1/2$ and $i\leq 2$. Thus we have that $\pmf{\dep+1}{t}{\ncan}+\pmf{\dep+2}{t}{\ncan} \leq 2(1-\ncan)^{\frac{1}{2}t-3}$. To compute when $2(1-\ncan)^{\frac{1}{2}t-3} < \frac{1}{6}$ we take a natural logarithm and get that: $(\frac{1}{2}t-3)\cdot \ln(1-\ncan)< \ln({\frac{1}{12}})$, hence it is not hard to see that the claim holds for $t> \tbound$. 
\end{proof}

\begin{claim} \label{clm:tull-induction}
For any $k$, $\dep$ and $t>\tbound$ such that $\ncan > \dfrac{\dep+1}{t+1}$,  $u_2(s_k(t-\dep,\dep)) \leq \max\{ \ub_\can(k,t,\dep), k\can\}$
\end{claim}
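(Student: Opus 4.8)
The plan is to prove the claim by induction on the number of rounds $k$, the statement being quantified over all integers $\dep \ge 0$ and all $t > \tbound$ with $\ncan > \frac{\dep+1}{t+1}$; whenever the induction leads to a sub-game that violates $\ncan > \frac{\dep'+1}{t'+1}$ we appeal to Claim~\ref{clm:tull-induction-geq} in place of the inductive hypothesis, which is legitimate since every sub-game reached below has total reputation $t+1 > \tbound$. First I would record the consequences of the case hypothesis used throughout: since $\ncan = \min\{\can,\frac12\}$, the assumption $\ncan > \frac{\dep+1}{t+1}$ forces $\frac{\dep+1}{t+1} < \can$, $\dep < \can t$, and $\dep+1 \le t-\dep$, so player~$2$ (of reputation $\dep$) is the lower player in $G_k(t-\dep,\dep)$ and in both games reachable from it after one round of competition. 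The base case $k=1$ is immediate: if player~$2$ competes it gets $\frac{\dep}{t} \le \frac{\dep}{t} + 3\cdf{\dep}{t}{\ncan} = \ub_\can(1,t,\dep)$, and otherwise it gets $\can \le \max\{\ub_\can(1,t,\dep),k\can\}$.

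For the inductive step I split on player~$2$'s first-round action. If player~$2$ does not compete, part~(\ref{eq:enum:stops-competing}) of Proposition~\ref{prop:eq-and-more} says it never competes, so $u_2(s_k(t-\dep,\dep)) = k\can \le \max\{\ub_\can(k,t,\dep),k\can\}$. If player~$2$ competes, then $u_2(s_k(t-\dep,\dep)) = \tfrac{\dep}{t}\bigl(1 + u_2(s_{k-1}(t-\dep,\dep+1))\bigr) + \tfrac{t-\dep}{t}\,u_2(s_{k-1}(t-\dep+1,\dep))$, and I would bound each continuation by $\max\{\ub_\can(k-1,t+1,\cdot),(k-1)\can\}$ via the inductive hypothesis (or Claim~\ref{clm:tull-induction-geq}). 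Monotonicity (Claim~\ref{clm:mono}) gives $u_2(s_{k-1}(t-\dep,\dep+1)) \ge u_2(s_{k-1}(t-\dep+1,\dep))$, and since $k\can = \can + (k-1)\can > \tfrac{\dep}{t} + (k-1)\can$ (using $\dep < \can t$), player~$2$ would not in fact compete if both continuations sat at the floor $(k-1)\can$; so the only cases left are ``both continuations bounded by $\ub_\can(k-1,t+1,\cdot)$'' and ``winning branch bounded by $\ub_\can(k-1,t+1,\dep+1)$, losing branch at $(k-1)\can$''.

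In each case the target collapses to an inequality purely among the $\ub_\can(\cdot)$ and $k\can$, and its verification rests on two facts. First, a telescoping identity $\tfrac{\dep}{t}\cdot\tfrac{\dep+1}{t+1} + \tfrac{t-\dep}{t}\cdot\tfrac{\dep}{t+1} = \tfrac{\dep}{t}$, which keeps the ``reputation'' term of $\ub_\can$ under control. Second, the analytic core, $\tfrac{\dep}{t}\,\cdf{\dep+1}{t+1}{\ncan} + \tfrac{t-\dep}{t}\,\cdf{\dep}{t+1}{\ncan} \le \cdf{\dep}{t}{\ncan}$, which I would obtain from the one-step recursions $\cdf{\dep+1}{t+1}{\ncan} = \cdf{\dep}{t}{\ncan} + (1-\ncan)\pmf{\dep+1}{t}{\ncan}$ and $\cdf{\dep}{t+1}{\ncan} = \cdf{\dep}{t}{\ncan} - \ncan\,\pmf{\dep}{t}{\ncan}$ together with the mass ratio $\pmf{\dep+1}{t}{\ncan}/\pmf{\dep}{t}{\ncan} = \tfrac{t-\dep}{\dep+1}\cdot\tfrac{\ncan}{1-\ncan}$, after which the left side equals $\cdf{\dep}{t}{\ncan} - \tfrac{(t-\dep)\ncan}{t(\dep+1)}\,\pmf{\dep}{t}{\ncan}$. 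With these, the ``$\ub$--$\ub$'' case closes because $(k-2)(1-\can)+1 \le (k-1)(1-\can)+1$ and $\dep < \can t$.

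The step I expect to be the main obstacle is the mixed case. A careless estimate leaves a term $3\tfrac{\dep}{t}(1-\ncan)\pmf{\dep+1}{t}{\ncan}\bigl((k-2)(1-\can)+1\bigr)$ that grows linearly in $k$ and appears to swamp the available slack $\tfrac{\dep}{t}\bigl(\can - \tfrac{\dep+1}{t+1}\bigr)$, which does not grow. The fix is to keep rather than discard the negative contribution $-3\,\cdf{\dep}{t}{\ncan}\bigl[(1-\tfrac{\dep}{t})h_k + \tfrac{\dep}{t}(1-\can)\bigr]$ appearing alongside it, where $h_k = (k-1)(1-\can)+1$: bounding $\cdf{\dep}{t}{\ncan} \ge \pmf{\dep}{t}{\ncan}$ and again using the mass ratio, the coefficient of $h_k$ in the combined expression is at most $\tfrac{3}{t}\pmf{\dep}{t}{\ncan}(t-\dep)\bigl(\tfrac{\dep\ncan}{\dep+1}-1\bigr) \le 0$, so the entire $k$-dependent block is nonpositive and the inequality holds with room to spare. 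Closing both cases shows $u_2(s_k(t-\dep,\dep)) \le \ub_\can(k,t,\dep)$ whenever player~$2$ competes, which together with the non-competing case completes the induction.
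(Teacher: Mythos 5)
Your proof is correct and follows essentially the same route as the paper's: induction on $k$ with Claim~\ref{clm:tull-induction-geq} as a fallback when the case hypothesis fails for a subgame, the same two-case split according to whether the losing branch sits at the floor $(k-1)\can$, and the same combinatorial inequality $\frac{\dep}{t}\cdf{\dep+1}{t+1}{\ncan} + \frac{t-\dep}{t}\cdf{\dep}{t+1}{\ncan} \le \cdf{\dep}{t}{\ncan}$ at its core. The paper derives that inequality by an index-shifting summation (Claim~\ref{clm:comb}) and dispatches the mixed case more directly via $\frac{\dep}{t}\cdf{\dep+1}{t+1}{\ncan} \le \cdf{\dep}{t+1}{\ncan} + \frac{\dep}{t}\pmf{\dep+1}{t+1}{\ncan} \le \cdf{\dep}{t}{\ncan}$, whereas you use one-step Pascal recursions and a mass-ratio computation; these are cosmetic differences.
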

\begin{proof}
Recall that $\ub_\can(k,t,\dep) = \frac{\dep}{t}+(k-1)\can+3\cdf{\dep}{t}{\ncan} \cdot \big((k-1)(1-\can)+1 \big)$.
We prove the claim by induction. We first observe that the claim holds for $k=1$. Notice that by the assumption that $\frac{\dep+1}{t+1} < \ncan \leq \frac{1}{2}$ we have that player $2$ is the lower player. Thus, $u_2(s_1(t-\dep,\dep)) \leq \max \big \{ \frac{\dep}{t}, \can \big \}$ and the claim holds. Next, we assume correctness for $k-1$ rounds and prove for $k$. If $u_2(s_k(t-\dep,\dep))=k\can$, then the induction hypothesis holds and we are done. Otherwise, we have that $u_2(s_k(t-\dep,\dep)) > k\can$, this immediately implies that $f(s_k(t-\dep,\dep)) = \langle +,+ \rangle$.

By Claim \ref{clm:keeps_comp} this implies that $u_2(s_{k-1}(t-\dep,\dep+1))> (k-1)\can$. Hence, either by the induction hypothesis (if $\ncan>\frac{\dep+2}{t+2} $) or by Claim \ref{clm:tull-induction-geq} (if $\ncan \leq \frac{\dep+2}{t+2}$) we have that $u_2(s_{k-1}(t-\dep,\dep+1)) \leq \ub_\can (k-1,t+1,\dep+1)$. Since $\frac{\dep+1}{t+2} < \ncan$. We can also use the induction hypothesis to get that $u_2(s_{k-1}(t-\dep+1,\dep)) \leq \max\{ \ub_\can(k-1,t+1,\dep), k\can\}$.

%For $u_2(s_{k-1}(t-\dep+1,\dep))$ we can use the induction hypothesis as $\frac{\dep+1}{t+2} < \ncan$ and get that:
%\begin{align*}
%u_2(s_{k-1}(t-\dep+1,\dep)) \leq \max\{ \ub_\can(k-1,t+1,\dep), k\can\}
%\end{align*}

To show that $u_2(s_k(t-\dep,\dep)) \leq \max\{ \ub_\can(k,t,\dep), k\can\}$ we now distinguish between two cases:
\begin{enumerate}
\item $u_2(s_{k-1}(t-\dep+1,\dep)) \leq  \ub_\can(k-1,t+1,\dep)$:
\begin{align*}
u_2&(s_k(t-\dep,\dep)) \leq \frac{\dep}{t} \left( 1+ \ub_\can(k-1,t+1,\dep+1)\right ) 
    + \frac{t-\dep}{t} \ub_\can(k-1,t+1,\dep) \\
%&\leq \frac{\dep}{t} \left( 1+ \frac{\dep+1}{t+1}+3\cdf{\dep+1}{t+1}{\ncan} \cdot \big((k-2)(1-\can)+1 \big) + (k-2)\can \right ) \\
%&+ \frac{t-\dep}{t} \left( \frac{\dep}{t+1}+3\cdf{\dep}{t+1}{\ncan} \cdot \big((k-2)(1-\can)+1 \big) + (k-2)\can \right) \\
&= \frac{\dep}{t} + \frac{\dep}{t} \cdot \frac{\dep+1}{t+1} + \frac{t-\dep}{t}\cdot\frac{\dep}{t+1}+(k-2)\can 
    + 3\cdf{\dep}{t+1}{\ncan} \cdot \big((k-2)(1-\can)+1 \big) \\
&~~~+ 3\frac{x}{t} \cdot \pmf{\dep+1}{t+1}{\ncan} \cdot \big((k-2)(1-\can)+1 \big) \\
&\leq^{(1)} \frac{2\dep}{t} +(k-2)\can + 3\cdf{\dep}{t}{\ncan} \cdot \big((k-2)(1-\can)+1 \big) \\
&\leq^{(2)} \frac{\dep}{t}+ (k-1)\can + 3\cdf{\dep}{t}{\ncan} \cdot \big((k-1)(1-\can)+1 \big) = \ub_\can(k,t,\dep)
\end{align*}
Transition $(1)$ is obtained by applying Claim \ref{clm:comb} (below) and some rearranging. For transition $(2)$ we use the fact that $\frac{\dep}{t} <\frac{\dep+1}{t+1} < \ncan \leq \can$.

\item $u_2(s_{k-1}(t-\dep+1,\dep)) = (k-1)\can$:
\begin{align*}
u_2&(s_k(t-\dep,\dep)) \leq \frac{\dep}{t} \left( 1+ \ub_\can(k-1,t+1,\dep+1)\right ) 
%u_2(s_k(t-\dep,\dep))  &\leq  \frac{\dep}{t} \left( 1+ \frac{\dep+1}{t+1}+3\cdf{\dep+1}{t+1}{\ncan}\cdot \big((k-2)(1-\can)+1 \big) + (k-2)\can \right )\\ 
   +\frac{t-\dep}{t} (k-1) \can \\
&=  \frac{\dep}{t} + \frac{\dep}{t} \cdot \frac{\dep+1}{t+1} + \frac{\dep}{t}(k-2)\can +  \frac{\dep}{t} \cdot 3\cdf{\dep+1}{t+1}{\ncan} \cdot \big((k-2)(1-\can)+1 \big) +\frac{t-\dep}{t}(k-1)\can    
    \\
&=  \frac{\dep}{t} + \frac{\dep}{t} \cdot \frac{\dep+1}{t+1}  +\frac{t-\dep}{t}\can + (k-2)\can    
   +  \frac{\dep}{t} \cdot 3\cdf{\dep+1}{t+1}{\ncan} \cdot \big((k-2)(1-\can)+1 \big) \\
 &\leq  \frac{\dep}{t} + (k-1)\can 
     +  \frac{\dep}{t} \cdot 3\cdf{\dep+1}{t+1}{\ncan}\cdot \big((k-2)(1-\can)+1 \big)
\end{align*}
For the last transition we use the fact that $\frac{\dep}{t} \cdot \frac{\dep+1}{t+1} < \frac{\dep}{t} \cdot \can$ since by assumption we have that $\frac{\dep+1}{t+1} < \ncan \leq \can $. 

Notice that $\frac{\dep}{t} \cdf{\dep+1}{t+1}{\ncan}  < \cdf{\dep}{t+1}{\ncan} + \frac{\dep}{t} \pmf{\dep+1}{t+1}{\ncan}$.
Hence by applying Claim \ref{clm:comb} (below) we get that $\frac{\dep}{t} \cdf{\dep+1}{t+1}{\ncan}  < \cdf{\dep}{t}{\ncan}$ which completes the proof.
\end{enumerate}
~~
\end{proof}

\begin{claim} \label{clm:comb}
If $\dep <t$ then, $ \cdf{\dep}{t+1}{\can} + \frac{\dep}{t} \pmf{\dep+1}{t+1}{\can} \leq \cdf{\dep}{t}{\can}$
\end{claim}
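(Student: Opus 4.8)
The plan is to reduce the claimed inequality, via a standard Pascal's-rule manipulation, to an elementary comparison of two binomial point masses.

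First I would establish the identity that increasing the number of trials by one shifts the binomial cdf downward by exactly one point mass:
\[
\cdf{\dep}{t+1}{\can} \;=\; \cdf{\dep}{t}{\can} - \can\,\pmf{\dep}{t}{\can}.
\]
To prove this, write $\cdf{\dep}{t+1}{\can}=\sum_{i=0}^{\dep}\binom{t+1}{i}\can^i(1-\can)^{t+1-i}$, apply $\binom{t+1}{i}=\binom{t}{i}+\binom{t}{i-1}$, and split into two sums. The first becomes $(1-\can)\,\cdf{\dep}{t}{\can}$ after factoring out $(1-\can)$, and the second, after re-indexing by $j=i-1$ (the $i=0$ term contributing nothing), becomes $\can\,\cdf{\dep-1}{t}{\can}$. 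Combining these and using $\cdf{\dep-1}{t}{\can}=\cdf{\dep}{t}{\can}-\pmf{\dep}{t}{\can}$ yields the identity; when $\dep=0$ one reads $\cdf{\dep-1}{t}{\can}=0$ and it still holds.

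Substituting this identity into the statement of the claim, the terms $\cdf{\dep}{t}{\can}$ cancel from both sides, so it only remains to show
\[
\frac{\dep}{t}\,\pmf{\dep+1}{t+1}{\can} \;\le\; \can\,\pmf{\dep}{t}{\can}.
\]
I would prove this by computing the ratio of the two point masses directly: the powers of $\can$ and of $1-\can$ cancel favorably, leaving
\[
\frac{\pmf{\dep+1}{t+1}{\can}}{\pmf{\dep}{t}{\can}} \;=\; \can\cdot\frac{\binom{t+1}{\dep+1}}{\binom{t}{\dep}} \;=\; \can\cdot\frac{t+1}{\dep+1}.
\]
Hence (dividing through by $\can$, the case $\can=0$ being trivial since both sides vanish) the inequality is equivalent to $\frac{\dep}{t}\cdot\frac{t+1}{\dep+1}\le 1$, i.e.\ $\dep(t+1)\le t(\dep+1)$, i.e.\ $\dep\le t$ --- which is exactly the hypothesis (and the strict hypothesis $\dep<t$ even gives a strict inequality).

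There is no substantive obstacle here: the argument is essentially two short applications of Pascal's identity and a ratio computation. The only points requiring a little care are the boundary behavior in the Pascal split (the $i=0$ term, and the convention $\cdf{\dep-1}{t}{\can}=0$ when $\dep=0$) and keeping track of which power of $1-\can$ appears so that the ratio of point masses simplifies cleanly; both are routine.
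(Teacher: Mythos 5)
Your proposal is correct and follows essentially the same route as the paper's proof: both establish the identity $\cdf{\dep}{t+1}{\can} = \cdf{\dep}{t}{\can} - \can\,\pmf{\dep}{t}{\can}$ (the paper via the factor $\binom{t+1}{i}=\frac{t+1}{t+1-i}\binom{t}{i}$, you via Pascal's rule, but both pass through $(1-\can)\cdf{\dep}{t}{\can}+\can\,\cdf{\dep-1}{t}{\can}$) and then conclude by the same ratio computation $\frac{\dep}{t}\pmf{\dep+1}{t+1}{\can}=\frac{\dep}{t}\cdot\frac{t+1}{\dep+1}\cdot\can\cdot\pmf{\dep}{t}{\can}<\can\,\pmf{\dep}{t}{\can}$ using $\dep<t$.
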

\begin{proof}
\begin{align*}
 \cdf{\dep}{t+1}{\ncan} &= \sum_{i=0}^{\dep} {t+1 \choose i} \can^i(1-\can)^{t+1-i} \\
 &= (1-\can) \sum_{i=0}^{\dep}\frac{t+1}{t+1-i} {t \choose i} \can^i(1-\can)^{t-i} \\
 &= (1-\can) \sum_{i=0}^{\dep}(1+\frac{i}{t+1-i}) {t \choose i} \can^i(1-\can)^{t-i} \\
 &= (1-\can) \cdf{\dep}{t}{\can} + (1-\can)\sum_{i=1}^{\dep}\frac{i}{t+1-i} \cdot \frac{t!}{i!(t-i)!} \can^i(1-\can)^{t-i} \\
 &= (1-\can) \cdf{\dep}{t}{\can} +  (1-\can)\sum_{i=1}^{\dep}\frac{t!}{(i-1)!(t+1-i)!} \can^i(1-\can)^{t-i} \\
%  &= (1-\can) \cdf{\dep}{t}{\can} + \\ & ~~~~~~~~ (1-\can)\sum_{i=1}^{\dep}\frac{i}{t+1-i} \cdot \frac{t!}{i!(t-i)!} \can^i(1-\can)^{t-i} \\
 %   &= (1-\can) \cdf{\dep}{t}{\can} + \\ & ~~~~~~~~ (1-\can)\sum_{i=1}^{\dep}\frac{t!}{(i-1)!(t+1-i)!} \can^i(1-\can)^{t-i} \\
  &= (1-\can) \cdf{\dep}{t}{\can} + (1-\can)\sum_{i=1}^{\dep} {t \choose i-1} \can^i(1-\can)^{t-i} \\
  &= (1-\can) \cdf{\dep}{t}{\can} + \can\sum_{i=1}^{\dep} {t \choose i-1} \can^{i-1}(1-\can)^{t-i+1} \\
  &= (1-\can) \cdf{\dep}{t}{\can} + \can\sum_{i=0}^{\dep-1} {t \choose i} \can^{i}(1-\can)^{t-i} \\
  &= (1-\can) \cdf{\dep}{t}{\can} + \can \cdot \cdf{\dep-1}{t}{\can} \\
  &=  \cdf{\dep}{t}{\can} - \can  \cdot \pmf{\dep}{t}{\can}
\end{align*}
It remains to show that $\can \pmf{\dep}{t}{\can} > \frac{\dep}{t}{\can} \pmf{\dep+1}{t+1}{\can}$ which is done by noticing that since $\dep < t $ then:
\begin{align*}
 \frac{\dep}{t} \pmf{\dep+1}{t+1}{\can} &= \frac{\dep}{t} {t+1 \choose \dep+1} \can^{\dep+1}(1-\can)^{t-\dep}  \\
 & =  \frac{\dep}{t} \cdot \frac{t+1}{\dep+1} \cdot \can \cdot {t \choose \dep} \can^{\dep}(1-\can)^{t-\dep}  \\
  &= \frac{\dep}{t} \cdot \frac{t+1}{\dep+1} \cdot \can \cdot \pmf{\dep}{t}{\can} \\
 &< \can \cdot \pmf{\dep}{t}{\can}
 \end{align*}
~~
\end{proof}

%!TEX root =recruiting1c.tex
\section{Other Competition Functions: Fixed Probability} \label{sec:fixedp}
One of the key components of our model is the underlying
{\em competition function}: when players of reputation $\dep_1$ and $\dep_2$
respectively compete for the same candidate in a given round,
the competition function specifies the probability that the candidate
selects each player, in terms of $\dep_1$ and $\dep_2$.
In this section we explore the effect that using other competition functions 
has on the performance ratio. 
An extreme example is when the higher player deterministically wins the competition (and if both players have the same reputation, then each wins with probability $1/2$). Using this competition rule in the game $G_k(\dep,\dep)$ the players only compete for the first round to ``discover'' who is the higher player and then stop competing. Thus the performance ratio of this game is very close to $1$. In this section we study a natural generalization of this function.  

Consider a competition function specifying that the lower player
wins with a fixed probability $p < 1/2$, and the higher player wins 
with probability $(1-p)$.
In case the two players have the same reputations, ties are broken in favor of player $1$. 
Clearly if $p>\can$, then the players compete forever, since the lower player gains more from competing than 
from going for the weaker candidate. 
Therefore, we assume from now on that $p<\can$.
We observe that this competition function belongs to the set of competition functions defined in Section \ref{sec:app:eq} of the appendix and hence we can make use of all claims specified there. For example, we have that the strategies $s_k(\dep_1,\dep_2)$ form a subgame perfect Nash equilibrium in this game (Proposition \ref{prop:app:eq} ), the players' utilities are monotone (Claim \ref{clm:app:mono}) and that once a player decides to go after the lower candidate it will do so in all subsequent rounds (Claim \ref{clm:weaker_stops_competing}).  

We first show that once the absolute value of the difference 
between the players' reputations reaches $\theta(\log(k))$, 
the lower player stops competing (Claim \ref{clm:frac_utility} and Lemma \ref {lem:fixedp_drops}). 
Then, in Lemma \ref{lem:fixedp_exp} we show  that the expected number of rounds it takes the players to reach such a difference in reputations, starting from equal reputations, is also $\theta(\log(k))$. This implies that as $k$ goes to infinity the performance ratio goes to $1$, as we prove in Theorem \ref{thm:fixedp_poa}.

Our first step is similar in spirit to the proof for the 
Tullock competition function; we show that the utility of the lower player is bounded by $\max \{\ub_\can^p(k,d), k \can\}$, where
\begin{align*}
\ub_\can^p(k,d) &= p + (\frac{p}{1-p})^d k +(1-(\frac{p}{1-p})^d)(k-1)\can \\
&= p+(\frac{p}{1-p})^d \big((k-1)(1-\can)+1 \big) + (k-1)\can
\end{align*}

This is obtained by induction over the
difference in the reputations of the two players.
The intuition for the upper bound function is also similar. 
In the good event, the lower player becomes the higher player and wins all subsequent rounds; hence its utility is $k$. 
In the bad event the lower player stays the lower player and loses the reward for competing this round; hence its utility is at most $(k-1)\can$. 
To compute the probability of the good event, we can imagine that $d$ (the difference between the players' reputations) is the initial location of a particle performing a biased random walk that goes left with probability $p$ and right with probability $1-p$. 
Under this view, the probability that this particle ever reaches $0$ ---
and hence that the difference $d$ ever reaches $0$ --- is
$\frac{p}{1-p}$ \cite{feller-vol1}. 
We formalize this intuition in the next claim. The claim is stated and proved for player $2$ but a similar claim also holds for player $1$.
\begin{claim} \label{clm:frac_utility}
For any $d\geq 0$ and any $k$: $u_2(s_k(\dep,\dep-d)) \leq \max \{\ub_\can^p(k,d), k \can\}$.
\end{claim}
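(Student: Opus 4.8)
The plan is to prove the bound by induction on the number of rounds $k$, carrying the parameter $d\geq 0$ along universally (so that each inductive step follows exactly one step of the biased random walk of the reputation \emph{difference}). Since the fixed--probability competition function depends only on which of the two players currently has the larger reputation, the value $u_2(s_k(\dep,\dep-d))$ depends only on $k$ and $d$; write $V(k,d)$ for it. For $d\geq 1$ player~$2$ is the lower player, and since in the canonical equilibrium the higher player always offers to the stronger candidate, player~$2$'s decision gives
\[
V(k,d)=\max\Bigl\{\ \can+V(k-1,d+1)\,,\ \ p\bigl(1+V(k-1,d-1)\bigr)+(1-p)\,V(k-1,d+1)\ \Bigr\},
\]
ties broken toward the first (``go for the weaker candidate'') option. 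The key is to rewrite the target as $\ub_\can^p(k,d)=(k-1)\can+p+h_k(d)$, where $h_k(d)=\bigl(\tfrac{p}{1-p}\bigr)^{d}\bigl((k-1)(1-\can)+1\bigr)$: then $h_k$ is strictly decreasing in $d$ (here we use $p<\tfrac12$) and satisfies the harmonic identity $h_k(d)=p\,h_k(d-1)+(1-p)\,h_k(d+1)$, hence $\ub_\can^p(k,d)=p\,\ub_\can^p(k,d-1)+(1-p)\,\ub_\can^p(k,d+1)$ for every $d\geq 1$.

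First I would dispose of the easy cases. For $d=0$ the bound is immediate for every $k$, since $\ub_\can^p(k,0)=p+k$ and no utility exceeds $k$. For the base case $k=1$ (any $d\geq 1$) we have $V(1,d)=\max\{p,\can\}=\can$ because $p<\can$, and $\can=k\can\leq\max\{\ub_\can^p(1,d),k\can\}$. For the inductive step fix $k\geq 2$ and $d\geq 1$ and assume the statement for $k-1$ at every $d'\geq 0$. If player~$2$ does not compete in the first round, then by Claim~\ref{clm:weaker_stops_competing} it never competes, so $V(k,d)=k\can$ and we are done. Otherwise $V(k,d)=p\bigl(1+V(k-1,d-1)\bigr)+(1-p)V(k-1,d+1)$, and it suffices to show $V(k,d)\leq\ub_\can^p(k,d)$; the induction hypothesis gives $V(k-1,d\pm 1)\leq\max\{\ub_\can^p(k-1,d\pm1),(k-1)\can\}$.

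I would finish by a short case split on how $\ub_\can^p(k-1,d-1)$ and $\ub_\can^p(k-1,d+1)$ sit relative to $(k-1)\can$; because $\ub_\can^p(k-1,\cdot)$ is decreasing, only three cases occur. If $\ub_\can^p(k-1,d+1)\geq(k-1)\can$, both maxima are the $\ub_\can^p(k-1,\cdot)$ terms, and the harmonic identity for $\ub_\can^p(k-1,\cdot)$ yields $V(k,d)\leq p+\ub_\can^p(k-1,d)$; a one--line computation gives $\ub_\can^p(k,d)-\bigl(p+\ub_\can^p(k-1,d)\bigr)=\can-p+\bigl(\tfrac{p}{1-p}\bigr)^{d}(1-\can)\geq\can-p>0$. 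If $\ub_\can^p(k-1,d-1)<(k-1)\can$, both continuations are at most $(k-1)\can$, so $V(k,d)\leq p+(k-1)\can\leq\ub_\can^p(k,d)$. The remaining, decisive case is the mixed one, $\ub_\can^p(k-1,d-1)\geq(k-1)\can>\ub_\can^p(k-1,d+1)$: here $V(k,d)\leq p+p\,\ub_\can^p(k-1,d-1)+(1-p)(k-1)\can$, and expanding this, using $p(k-2)\can+(1-p)(k-1)\can=(k-1)\can-p\can$, reduces the claim to $h_k(d)\geq p\bigl(h_{k-1}(d-1)-(\can-p)\bigr)$, which follows from $h_k(d)\geq\tfrac{p}{1-p}h_{k-1}(d-1)\geq p\,h_{k-1}(d-1)$ (immediate from the formulas, since $\tfrac{p}{1-p}\geq p$ and $(k-1)(1-\can)+1\geq(k-2)(1-\can)+1$) together with $\can>p$. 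The main obstacle is exactly this mixed case: it is the only place where the sharp estimate after a win must be combined with the crude estimate $V(k-1,d+1)\leq(k-1)\can$ after a loss, and arranging the constants to cancel is precisely what dictates the form of $\ub_\can^p$ (the additive $+p$ and the coefficient $(k-1)(1-\can)+1$), so that piece of arithmetic is the heart of the proof.
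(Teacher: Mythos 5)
Your proof is correct and follows essentially the same route as the paper's: induction on $k$ with the same bound function $\ub_\can^p$, the same treatment of $d=0$ and $k=1$, and the same decisive arithmetic in the mixed case (your case (c) is the paper's second case, and your case (a) is its first, repackaged via the harmonic identity). The only organizational difference is that where the paper invokes Claim~\ref{clm:app:keeps_comp} to force the post-win continuation to be bounded by $\ub_\can^p(k-1,d-1)$ rather than $(k-1)\can$, you instead run a deterministic trichotomy on $\ub_\can^p(k-1,d\pm 1)$ versus $(k-1)\can$ using monotonicity of $\ub_\can^p$ in $d$; both are valid.
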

\begin{proof}
First observe that the claim clearly holds for any $k$ and $d=0$, since for this case $\ub_\can^p(k,d) \geq k$ and by definition we have that $u_2(s_k(\dep,\dep)) \leq  k$. We now prove by induction on the number of rounds $k$ that the claim holds for $d\geq 1$. Note that the claim holds for the base case, $k=1$, since $u_2(s_1(x_1,x_2-d))\leq \max \{p,\can\} $ for every $d\geq 1$. We assume the claim holds for any $0<k'\leq k-1$ and prove it for $k$. If $u_2(s_k(\dep,\dep-d))=k\can$ we are done. Else, 
\begin{align*}
u_2(s_k(\dep,\dep-d)) &= p(1+u_2(s_{k-1}(\dep,\dep -d+1)))  
 + (1-p)(u_2(s_{k-1}(\dep+1,\dep-d)))
\end{align*}
By the assumption that $u_2(s_k(\dep,\dep-d))>k \can$ and using Claim \ref{clm:app:keeps_comp} we have that $u_2(s_{k-1}(\dep,\dep -d+1)) > (k-1) \can$. Thus, by the induction hypothesis (or our observation for $d=0$ in case $d$ was $1$), we have that $u_2(s_{k-1}(\dep,\dep -d+1)) \leq \ub_\can^p(k-1,d-1)$. By the induction hypothesis we also have that  $u_2(s_{k-1}(\dep+1,\dep -d)) \leq \max \{\ub_\can^p(k-1,d+1), (k-1) \can\}$. We distinguish between two cases depending on the two possible upper bounds on $u_2(s_{k-1}(\dep+1,\dep -d))$:
%\begin{itemize}
%\item 

If $u_2(s_{k-1}(\dep+1,\dep-d)) \leq \ub_\can^p(k-1,d+1) $:
\begin{align*}
u_2&(s_k(\dep,\dep-d))  \leq p\left(1+\ub_\can^p(k-1,d-1) \right) 
   + (1-p) \cdot  \ub_\can^p(k-1,d+1) \\
&= 2p + (k-2)\can 
   + (\frac{p}{1-p})^{d-1} \cdot \left(p+\frac{p^2}{1-p}\right) \cdot \big((k-2)(1-\can)+1\big) \\
&= 2p+ (\frac{p}{1-p})^{d} \big((k-2)(1-\can)+1\big)+ (k-2)\can\\
&< p+ (\frac{p}{1-p})^{d} \big((k-1)(1-\can)+1\big)+ (k-1)\can = \ub_\can^p(k,d)
\end{align*}
Else, $u_2(s_{k-1}(\dep+1,\dep-d)) = (k-1)\can$:
\begin{align*}
u_2&(s_k(\dep,\dep-d))  \leq  p\left(1+\ub_\can^p(k-1,d-1)  \right) + (1-p)(k-1)\can \\
& <  p + p^2 - p \can  + (k-1)\can 
   + p(\frac{p}{1-p})^{d-1} \big((k-1)(1-\can)+1\big) \\
%& <  p +\frac{p^d}{(1-p)^{d-1}} \big((k-1)(1-\can)+1\big) + (k-1)\can \\
& < p+(\frac{p}{1-p})^d \big((k-1)(1-\can)+1\big) + (k-1)\can = \ub_\can^p(k,d)
\end{align*}
%\end{itemize}
\end{proof}

We can now use the previous claim to identify 
the magnitude of the
difference between the players' reputations for which they stop competing. We state the claim for player $2$ but a similar claim also holds for player $1$.
\begin{lemma} \label{lem:fixedp_drops}
If $d>\dfrac{\log(k)-\log(p-\can)}{\log(\frac{1-p}{p})} = d_\can^p(k)$ then the lower player in the games $G_k(\dep,\dep-d)$, $G_k(\dep-d,\dep)$ does not compete.
\end{lemma}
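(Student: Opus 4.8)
The plan is to reduce the claim to the single inequality $\ub_\can^p(k,d)\le k\can$ and then quote the structural results already established. Concretely, in the game $G_k(\dep,\dep-d)$ with $d>0$ player $2$ is the lower player, so Claim~\ref{clm:app:guarantee} gives $u_2(s_k(\dep,\dep-d))\ge k\can$, while Claim~\ref{clm:frac_utility} gives $u_2(s_k(\dep,\dep-d))\le\max\{\ub_\can^p(k,d),k\can\}$. Hence, as soon as we know $\ub_\can^p(k,d)\le k\can$, the two bounds force $u_2(s_k(\dep,\dep-d))=k\can$. By Claim~\ref{clm:app:kq-non-compete} this equality implies $f(s_k(\dep,\dep-d))=\stratpm$, i.e.\ the lower player goes after the weaker candidate in the first round, and then part~(\ref{eq:app:enum:stops-competing}) of Proposition~\ref{prop:app:eq-and-more} shows it never competes in any later round. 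The game $G_k(\dep-d,\dep)$ is handled identically with the roles of the two players interchanged, using the player-$1$ analogues of Claims~\ref{clm:frac_utility} and~\ref{clm:app:kq-non-compete} noted in the text.

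It therefore only remains to verify $\ub_\can^p(k,d)\le k\can$ whenever $d>d_\can^p(k)$. Recalling $\ub_\can^p(k,d)=p+\left(\tfrac{p}{1-p}\right)^d\big((k-1)(1-\can)+1\big)+(k-1)\can$, the desired inequality is equivalent to $\left(\tfrac{p}{1-p}\right)^d\big((k-1)(1-\can)+1\big)\le \can-p$. I would bound the bracket crudely by $(k-1)(1-\can)+1=k-(k-1)\can\le k$, so it suffices to have $\left(\tfrac{p}{1-p}\right)^d\le \tfrac{\can-p}{k}$. Taking logarithms and using that $\tfrac{p}{1-p}<1$ (so $\log\tfrac{p}{1-p}<0$ and the inequality reverses upon division), this is exactly $d\ge \tfrac{\log k-\log(\can-p)}{\log\frac{1-p}{p}}=d_\can^p(k)$, which holds by hypothesis; note $\can-p>0$ since we have assumed $p<\can$, so all logarithms here are well defined.

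There is essentially no obstacle beyond Claim~\ref{clm:frac_utility}, which already contains the real work. The two points that need care are: (i) that $\max\{\ub_\can^p(k,d),k\can\}$ collapses to $k\can$ precisely when $\ub_\can^p(k,d)\le k\can$, so the equality $u_2=k\can$ is genuinely forced and Claim~\ref{clm:app:kq-non-compete} applies; and (ii) the sign bookkeeping in the logarithm step, since $p<\tfrac12$ makes $\log\frac{p}{1-p}$ negative while $p<\can<1$ makes $\log(\can-p)$ negative as well, so one should track both reversals. I would also remark that the estimate $(k-1)(1-\can)+1\le k$ is wasteful; a tighter bound would only shrink the threshold $d_\can^p(k)$, but the stated bound is already more than enough for the asymptotic conclusion drawn later in Lemma~\ref{lem:fixedp_exp} and Theorem~\ref{thm:fixedp_poa}.
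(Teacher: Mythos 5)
Your proof is correct and follows essentially the same route as the paper's: both reduce the lemma to the inequality $\ub_\can^p(k,d)\le k\can$ via Claim~\ref{clm:frac_utility} and Claim~\ref{clm:app:kq-non-compete}, then solve for $d$ by taking logarithms. Your write-up is slightly more explicit about forcing the equality $u_2 = k\can$ (via Claim~\ref{clm:app:guarantee}) and about the bound $(k-1)(1-\can)+1\le k$ needed to match the stated threshold with $\log(k)$ in the numerator, but these are exactly the steps the paper leaves implicit.
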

\begin{proof}
Claim \ref{clm:frac_utility} reduces the problem of finding when does the lower player quits to solving for $d$ such that $\ub_\can^p(k,d)< k \can$. This would be enough to conclude that player $2$ stops competing as we know by Claim \ref{clm:app:kq-non-compete} that once $u_2(s_k(\dep_1,\dep_1)) = k\can$ player $2$ does not compete at all. After some rearranging of $\ub_\can^p(k,d)< k \can$ we have that:
\begin{align*}
\big( \frac{p}{1-p})^d \big((k-1)(1-\can)+1 \big)< \can - p
\end{align*}
We now take logarithms and get that:
\begin{align*}
d \cdot \log( \frac{p}{1-p}) + \log((k-1)(1-\can)+1) < \log(\can - p)
\end{align*}
Therefore,
%\begin{align*}
$d>\dfrac{\log((k-1)(1-\can)+1))-\log(p-\can)}{\log(\frac{1-p}{p})}$
%\end{align*}
and the claim follows.
\end{proof}

Next, we compute for how long the players are
expected to compete until the absolute value of the difference between their 
reputations becomes greater than the previously computed bound.
To do this, we study this difference as it performs a biased
random with a reflecting barrier at $0$:

\begin{lemma} \label{lem:fixedp_exp}
In the game $G_k(\dep_1,\dep_2)$, the expected number of rounds the players 
compete until the absolute value of the difference between 
their reputations is at least $d$ is at most $\frac{d}{1-2p}$.
\end{lemma}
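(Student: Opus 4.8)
The plan is to track the absolute difference $d_t = |x_1(t)-x_2(t)|$ between the two players' reputations as a Markov chain on the non-negative integers and to bound a hitting time. As long as the players compete, the dynamics are: from state $0$ the chain moves to $1$ deterministically (after the tie-break one player counts as higher, and whichever of the two wins the contest the gap becomes $1$); from any state $m\ge 1$ the higher player wins with probability $1-p$, pushing the gap to $m+1$, while the lower player wins with probability $p$, pulling it down to $m-1$. This is exactly the biased random walk with a reflecting barrier at $0$ mentioned before the statement, and it has positive drift $1-2p>0$.

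Let $T_m$ be the expected number of competition rounds needed to move the gap from $m$ to $m+1$ for the first time (finite because the walk is transient to $+\infty$; see the remark at the end). Conditioning on the outcome of the first round gives $T_0 = 1$ and, for $m\ge 1$,
\[
T_m = 1 + p\,(T_{m-1} + T_m), \qquad\text{equivalently}\qquad T_m = \frac{1 + p\,T_{m-1}}{1-p}.
\]
The key step is the claim that $T_m \le \tfrac{1}{1-2p}$ for every $m\ge 0$, which I would prove by induction on $m$: the base case is $T_0 = 1 \le \tfrac{1}{1-2p}$ (using $p>0$, so $1-2p<1$), and assuming $T_{m-1}\le \tfrac{1}{1-2p}$,
\[
T_m = \frac{1 + p\,T_{m-1}}{1-p} \le \frac{1 + \frac{p}{1-2p}}{1-p} = \frac{\frac{1-p}{1-2p}}{1-p} = \frac{1}{1-2p}.
\]

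To finish, let $j = |\dep_1-\dep_2|$ be the starting gap. If $j\ge d$ the statement is trivial; otherwise the expected number of competition rounds until the gap first reaches $d$ is, by linearity of expectation, $\sum_{m=j}^{d-1} T_m$, a sum of at most $d$ terms each bounded by $\tfrac{1}{1-2p}$, hence at most $\tfrac{d}{1-2p}$, as required.

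The only real subtlety is justifying that the first-step recursion for $T_m$ is legitimate, i.e.\ that $T_m<\infty$ so the conditioning argument is valid; this follows from the positive drift of the walk. If one prefers to avoid invoking transience, the same bound follows from a stochastic-domination argument: realize $d_t$ so that $d_t - d_{t-1}\ge Y_t$ pointwise, where the $Y_t$ are i.i.d.\ $\pm 1$ with mean $1-2p$ (the inequality is an equality when $d_{t-1}\ge 1$ and reads $+1\ge Y_t$ when $d_{t-1}=0$). Then the hitting time of level $d$ for $d_t$ is dominated by that of the unrestricted walk $j + \sum_{s\le t} Y_s$, whose expectation is $\tfrac{d-j}{1-2p}$ by Wald's identity, since steps of size $\pm 1$ mean the walk hits $d$ with no overshoot.
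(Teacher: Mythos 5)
Your proof is correct, and its main argument is organized differently from the paper's. The paper handles the reflecting barrier at $0$ by a coupling: it runs the reflected walk and the unrestricted biased walk on the same sequence of coin flips, observes that the unrestricted walk reaching $d$ forces the reflected walk to have already reached $d$, and then invokes the classical formula $E_d = \tfrac{d}{1-2p}$ for the unrestricted walk (derived via $E_i = iE_1$ and $E_1 = 1 + 2pE_1$). You instead work directly on the reflected chain, decomposing the hitting time of level $d$ into the successive level-crossing times $T_m$, deriving the first-step recursion $T_m = \tfrac{1+pT_{m-1}}{1-p}$ with $T_0=1$, and showing by induction that each $T_m \le \tfrac{1}{1-2p}$. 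This buys you a self-contained argument that makes the effect of the barrier explicit (it only shortens $T_0$, so your argument in fact yields the slightly sharper bound $1 + \tfrac{d-1-j}{1-2p}$ when starting from gap $j$), at the cost of having to justify finiteness of the $T_m$ before solving the recursion --- which you correctly flag and patch, and your fallback stochastic-domination-plus-Wald argument is essentially the paper's coupling in different clothing. Both routes reduce to the same elementary computation; your summation over crossing times and the paper's $E_i = iE_1$ identity are two presentations of the same decomposition.
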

\begin{proof}
% \soedit
We consider a particle undergoing
a biased random walk in which the probability of moving to the
left is $p$ and the probability of moving to the right is $1-p$, as before.
Since this particle tracks the absolute value of the
difference between the players' reputations, the walk
we are studying has a reflecting barrier on $0$.
This implies that when the particle reaches $0$, in the next step
it always goes to $1$.

Our analysis will thus be based on studying the expected time it takes
for the particle to reach the value $d$, starting from a value below $d$.
Clearly this expected time is maximized when the particle starts at
$0$, corresponding to an initial reputation difference of $0$.
Thus, we 
compute a bound on the expected number of rounds it takes 
players with identical
reputations to reach a difference of $d$ in their reputations.

The expected time it takes the particle to reach $d$ starting at $0$
when there is
a reflective barrier is upper bounded by the expected time it takes it
to reach $d$ starting at $0$ when there is no such barrier.
To see why, we invoke a standard argument in which we
imagine both walks being governed by the random flips of a
coin with bias $p$, and
we compare between the trajectory of the particle in
these two walks for the same random sequence of coin-flip outcomes.
We note that if the particle
reaches $d$ in the walk without the barrier it has to be the case that
it also reached $d$ using a prefix of the same sequence of coin-flip outcomes
in the walk with the barrier.

Finally, we use the fact
that the expected number of rounds required for a particle
performing this walk to reach $d$ starting from $0$, without a barrier at $0$,
is $\dfrac{d}{1-2p}$ (\cite{feller-vol1}).
To see why this is the case, let $E_i$ be the expected time for the walk to reach $i$. If $E_1$ is well defined, then $E_i=i \cdot E_1$. Also, $E_1=1+(1-p)\cdot 0 + p\cdot E_2= 1+2p\cdot E_1$. Therefore, $E_1=\dfrac{1}{1-2p}$ and $E_d=\dfrac{d}{1-2p}$.
\end{proof}

We are now ready to prove the following theorem.

\begin{theorem} \label{thm:fixedp_poa}
For fixed $p<\can$ the performance ratio of the game $G_k(\dep_1,\dep_2)$ is at least $1 - \theta(\dfrac{\log(k)}{k})$.
\end{theorem}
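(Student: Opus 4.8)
The plan is to convert the two preceding lemmas into a short accounting of the expected social welfare. The first step is a round‑by‑round decomposition: in any round in which the two players compete, the only candidate hired is the stronger one (the loser of the competition hires no one and the weaker candidate receives no offer), contributing $1$ to the welfare; in any round in which they do not compete, the higher player takes the stronger candidate and the lower player takes the weaker one, contributing $1+\can$. Writing $C$ for the (random) number of rounds in which the players compete under the canonical equilibrium $s_k(\dep_1,\dep_2)$, this gives $u(s_k(\dep_1,\dep_2)) = k(1+\can)-\can\cdot\Exp{C}$, so the performance ratio is exactly $1-\dfrac{\can\,\Exp{C}}{k(1+\can)}$, and it remains only to show $\Exp{C}=\theta(\log k)$.

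To bound $\Exp{C}$ I would argue as follows. If $|\dep_1-\dep_2|$ already exceeds $d_\can^p(k)$, then by Lemma~\ref{lem:fixedp_drops} the lower player never competes, so $C=0$ and the ratio is $1$; assume then that $|\dep_1-\dep_2|\le d_\can^p(k)$. By part~(\ref{eq:enum:stops-competing}) of Proposition~\ref{prop:eq-and-more} (which applies to this competition function, as noted in Section~\ref{sec:fixedp}), once the lower player declines to compete it never competes again, so the competing rounds form a prefix of the game, during which the absolute difference of the two reputations performs exactly the biased walk with a reflecting barrier at $0$ studied in Lemma~\ref{lem:fixedp_exp}. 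By Lemma~\ref{lem:fixedp_drops}, as soon as this difference exceeds $d_\can^p(k)$ the lower player in the current subgame stops competing — and since $d_\can^p(\cdot)$ is increasing, the same conclusion holds in every later (shorter) subgame, so competition has ceased by then. Hence $C$ is at most the number of competing rounds needed for the difference to reach $d:=\lceil d_\can^p(k)\rceil+1$, and Lemma~\ref{lem:fixedp_exp} gives $\Exp{C}\le \dfrac{d}{1-2p}\le \dfrac{d_\can^p(k)+2}{1-2p}=\theta(\log k)$, using $d_\can^p(k)=\theta(\log k)$.

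Substituting this back in yields a performance ratio of at least $1-\dfrac{\can}{k(1+\can)}\cdot\dfrac{d_\can^p(k)+2}{1-2p}=1-\theta(\log k/k)$, which is the claim. I do not expect a real obstacle here, since the probabilistic content is entirely packaged in Lemmas~\ref{lem:fixedp_drops} and~\ref{lem:fixedp_exp} (the former itself resting on Claim~\ref{clm:frac_utility}). The only point needing a little care is the reduction in the previous paragraph — namely that the number of competing rounds in the actual strategic play is dominated by the hitting time of the threshold for the pure reflecting walk, because in the game competition can only stop earlier, never later, than in the idealized ``compete until the difference exceeds $d_\can^p(k)$'' process. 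This is the same kind of monotone coupling already used inside the proof of Lemma~\ref{lem:fixedp_exp}, so it should go through with minimal extra effort.
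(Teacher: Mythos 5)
Your proposal is correct and follows essentially the same route as the paper: the paper likewise writes the welfare as $k(1+\can)-\can\,E[R]$ for the number $R$ of competing rounds, bounds $E[R]$ by the reflecting-walk hitting time of the threshold $d_\can^p(k)$ via Lemmas \ref{lem:fixedp_drops} and \ref{lem:fixedp_exp} (noting, as you do, that the lower player may only quit earlier), and substitutes back. Your extra remarks --- that the competing rounds form a prefix by part (\ref{eq:enum:stops-competing}) of Proposition \ref{prop:eq-and-more}, and that $d_\can^p(\cdot)$ being increasing handles the shorter subgames --- make explicit two points the paper leaves implicit, but the argument is the same.
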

\begin{proof}
Let $R$ be a random variable equals to the number of rounds 
for which the players compete in the game $G_k(\dep_1,\dep_2)$. 
We can use it to compute the social welfare as follows:
\begin{align*}
u(s_k(\dep_1,\dep_2)) &= \sum_{r=1}^k Pr(R=r)(k+(k-r)\can) \btc
\atc = k(1+\can)-\can \sum_r Pr(R=r)\cdot r
\end{align*}
Now, to compute a lower bound on the social welfare, we should compute an upper bound on $\sum_{r=1}^k Pr(R=r)\cdot r$. We claim that $\sum_{r=1}^k Pr(R=r)\cdot r < \dfrac{d_\can^p(k)}{1-2p}$. The reason is that either the lower player quits when the difference between the players' reputation is $d_\can^p(k)$, as we proved in Claim \ref{clm:frac_utility}, or the lower player might decide to quit earlier in the game. 
In any case, the expected number of rounds the players compete until the lower player drops is at most $\dfrac{d_\can^p(k)}{1-2p}$, by Lemma \ref{lem:fixedp_exp}. Therefore, the performance ratio is at least 
$\dfrac{k(1+\can) - \dfrac{d_\can^p(k)}{1-2p} \can}{k(1+\can)}  = 1 - \theta(\dfrac{\log(k)}{k})$.
\end{proof}

\end{appendix}

\end{document}